%%%%%%%%%%%%%%%%%%%%%%%%%%%%%%%%%%%%%%%%%%%%%%%%%%%%%%%%%%%%%%%%%%%%%%%%%%%%%%%%% 
%                                                              				  % 
%                                                        					      % 
%                 J. BLATH and A. TÓBIÁS              			          % 
%                                                                               % 
              \def\version{4 November 2022}	        	%
%                                                                               % 
%%%%%%%%%%%%%%%%%%%%%%%%%%%%%%%%%%%%%%%%%%%%%%%%%%%%%%%%%%%%%%%%%%%%%%%%%%%%%%%%% 

\documentclass[reqno,11pt]{amsart} 
\usepackage[T1]{fontenc}
\usepackage[utf8]{inputenc}
\usepackage{amsmath,amsthm}

\usepackage[mathscr]{eucal}
\usepackage{amssymb}
\usepackage{srcltx} 
\usepackage{dsfont}
\usepackage{hyperref}
\usepackage{xcolor}
\usepackage{enumerate}
\usepackage{tikz, pgfplots}
\usepackage{caption}
\usepackage{subcaption}
\usepackage{comment}
\usepackage{lscape}
\usepackage{graphicx}
\usepackage{tabularx}
\captionsetup[subfigure]{labelfont=rm}
\pgfplotsset{compat=1.11}
\usepgfplotslibrary{fillbetween}
\usetikzlibrary{intersections}
\pgfdeclarelayer{bg}
\pgfdeclarelayer{ft}
\pgfsetlayers{bg,main,ft}

%% List of macros follows. 
\numberwithin{equation}{section}
 
%{$\square$} 

\def\barn1a{\bar n_{1a}}
\def\emptyset{\varnothing} 
%%%%% Doppelstrichsymbole 
%%%%% Konvergenzsymbole 

%%%%%Griechische Buchstaben 
%\def\a{\alpha} 

\def\d{{\rm d}} 
\def\e{\varepsilon} 
 
%\def\m{\mu} 

%\def\s{\ifmmode \sigma \else $\sigma$\fi} 

%%%%%%%%Abk\"urzungen%%%%%%%%%%%% 

%%%%%%%%%%%%%Schlangenversionen%%%%%%%%% 

%Bbb fonts 
\newfam\Bbbfam 
\font\tenBbb=msbm10 
\font\sevenBbb=msbm7 
\font\fiveBbb=msbm5 
\textfont\Bbbfam=\tenBbb 
\scriptfont\Bbbfam=\sevenBbb 
\scriptscriptfont\Bbbfam=\fiveBbb

\def\2{\mathbf 2}

%új folyam

\newcommand{\R}     {\mathbb{R}} 
 
\newcommand{\N}     {\mathbb{N}} 
\renewcommand{\P}   {\mathbb{P}} 
 
\newcommand{\E}     {\mathbb{E}}

\newcommand{\smfrac}[2]{\textstyle{\frac {#1}{#2}}}

\def\1{{\mathchoice {1\mskip-4mu\mathrm l}      % Blackboard bold 1 
{1\mskip-4mu\mathrm l} 
{1\mskip-4.5mu\mathrm l} {1\mskip-5mu\mathrm l}}} 
 
\def\comment#1{} 
\newtheoremstyle{thm}{2ex}{2ex}{\itshape\rmfamily}{} 
{\bfseries\rmfamily}{}{1.7ex}{} 
 
\newtheoremstyle{rem}{1.3ex}{1.3ex}{\rmfamily}{} 
{\itshape\rmfamily}{}{1.5ex}{}

% THEOREM, EQN etc. commands 
 
\renewcommand{\theequation}{\thesection.\arabic{equation}} 
 
\newtheorem{theorem}{Theorem}[section] 
\newtheorem{lemma}[theorem]{Lemma} 
\newtheorem{prop}[theorem] {Proposition} 
\newtheorem{cor}[theorem]  {Corollary}

\theoremstyle{definition}
 
\newtheorem{example}[theorem] {Example}

\newtheorem{remark}[theorem]{Remark}

%\renewcommand{\section}{\secdef\sct\sect} 
%\newcommand{\sct}[2][default]{\refstepcounter{section} 
%\vspace{0.8cm} 
%\setcounter{equation}{0} 
%\centerline{ %\large 
%\large\scshape \arabic{section}.\ #1} 
%\vspace{0.2cm}} 
%\newcommand{\sect}[1]{ 
%\vspace{0.8cm} 
%\centerline{\large\scshape #1} 
%\vspace{0.2cm}} 
% 
%\renewcommand{\subsection}{\secdef \subsct\sbsect} 
%\newcommand{\subsct}[2][default]{\refstepcounter{subsection} 
%\nopagebreak 
%\vspace{0.5\baselineskip} 
%{\flushleft\bf \arabic{section}.\arabic{subsection}~\bf #1  } 
%\nopagebreak} 
%\newcommand{\sbsect}[1]{\vspace{0.1cm}\noindent 
%{\bf #1}\vspace{0.1cm}} 
 
% 
%\renewcommand{\subsubsection}{% 
%\secdef \subsubsect\sbsbsect} 
%\newcommand{\subsubsect}[2][default]{% 
%\refstepcounter{subsubsection} 
%\nopagebreak 
%\vspace{0.1\baselineskip} 
%\nopagebreak 

%{\flushleft 
%\sffamily\slshape 
%\arabic{section}.\arabic{subsection}.\arabic{subsubsection} 
%\ % 
%\sffamily #1\/.}\ } 
%\newcommand{\sbsbsect}[1]{\vspace{0.1cm}\noindent 
%{\bf #1}\ } 
 
% MATHEMATICAL SYMBOLS: 
 
\renewcommand{\d}{{\rm d}} 
 
\newcommand{\eps}{\varepsilon}

% model-related things

% Caligraph fonts 

\newcommand{\Bcal}  {{\mathcal B}}

\newcommand{\Lcal}   {{\mathcal L }}

\newcommand\numberthis{\addtocounter{equation}{1}\tag{\theequation}}
\renewcommand{\e}   {{\operatorname e }}

\definecolor{Red}{rgb}{1,0,0}

% END OF LIST OF MACROS 
 
\setlength{\textheight}{8.4in} 
\setlength{\textwidth}{6.6in} 
\setlength{\topmargin}{0in} 
\setlength{\headheight}{0.12in} 
\setlength{\headsep}{.40in} 
\setlength{\parindent}{1pc} 
\setlength{\oddsidemargin}{-0.1in} 
\setlength{\evensidemargin}{-0.1in} 
 
% Format : 
\marginparwidth 40pt 
\marginparsep 0pt 
\oddsidemargin-5mm 
\topmargin -30pt 
\headheight 12pt 
\headsep 15pt 
\footskip 15pt 
\textheight 670pt 
\textwidth 170mm 
\columnsep 10pt 
\columnseprule 0pt 
 \sloppy 
 \parskip 0.8ex plus0.3ex minus0.2ex 
 \parindent1.0em

\begin{document} 
 
\title[Virus-induced dormancy]{Microbial virus epidemics in the presence of contact-mediated host dormancy}
\author[Jochen Blath and András Tóbiás]{}
\maketitle
\thispagestyle{empty}
\vspace{-0.5cm}

\centerline{\sc Jochen Blath{\footnote{Goethe-Universität Frankfurt, Robert-Mayer-Straße 10, 60325 Frankfurt am Main, Germany, {\tt blath@math.uni-frankfurt.de}}} and András Tóbiás{\footnote{Department of Computer Science and Information Theory, Budapest University of Technology and Economics,
Műegyetem rkp. 3., 1111 Budapest, Hungary, {\tt tobias@cs.bme.hu}}}}
\renewcommand{\thefootnote}{}
%\vspace{0.5cm}
%\centerline{\textit{TU Berlin}}

\bigskip

\centerline{\small(\version)} 
\vspace{.5cm} 
 
\begin{quote} 
{\small {\bf Abstract:}} We investigate a stochastic individual-based model for the population dynamics of host--virus systems where the microbial hosts may transition into a dormant state upon contact with virions, thus evading infection. Such a contact-mediated defence mechanism was described in Bautista et al (2015) for an archaeal host, while Jackson and Fineran (2019) and Meeske et al (2019) describe a related, CRISPR-Cas induced, dormancy defense of bacterial hosts to curb phage epidemics.
% for an archaeal host, and a CRIPS-Cas induced dormancy machanisms in bacteria may curb while it is argued in Jackson and Fineran (2019) and Meeske et al (2019) describethat Cripr-Cas . % and motivated a biophysical model by Gulbudak and Weitz (2016) for the early dynamics of virus – host interactions, in which viruses can ‘contact’ host cells reversibly. % in the presence of such contact mediated dormancy.
%In the present paper, we extend this model  to the stochastic individual based scenario, thus taking into account stochastic fluctuations in host- and virus populations when their densities are low, and also explicitly incorporate a virus reproduction mechanism so that also the `long-term' behaviour of the system can be investigated. %, including extinction vs persistence of a population-level virus epidemic.
%The advantage of our stochastic individual-based setting is that one can analyze the early initial phase of the infection,
% where a large population of microorganisms (hosts) is in equilibrium size and a single free virus enters the system. As long as the number of viruses is still small, stochastic fluctuations play an important role. 
We first analyse the 
%role of dormancy for 
effect of the dormancy-related model parameters on  
the probability and time of invasion of a newly arriving virus into a resident host population.
%It turns out that the probability of dormancy initiation upon virus contact plays a crucial role here, while the lengths of the dormancy periods or the death rate during dormant states are largely irrelevant.
% and discuss %We recover results by Gulbudak and Weitz about the emergence of high levels of dormancy even in the presence of relatively low virus densities  over short initial time intervals. 
Given successful invasion in the stochastic system, we then show that the emergence (with high probability) 
of  a persistent virus infection (`epidemic') in a large host population can be 
%equivalent 
determined by the existence of a coexistence equilibrium for the dynamical system arising as the deterministic many-particle limit of our model. This is an extension of a dynamical system  considered by Beretta and Kuang (1998) that is known to exhibit a Hopf bifurcation, giving rise to a ‘paradox of enrichment’. In our system, we verify that the additional dormancy component can, at least for certain parameter ranges, prevent the associated loss of stability.
% by the existence of a coexistence equilibrium for the dynamical system arising as the deterministic many-particle limit of our model. 
%In the case when hosts cannot become dormant and infection always kills them, this dynamical system exhibits a Hopf bifurcation, similarly to the phenomenon of `paradox of enrichment' known from predator--prey models. We verify that the additional dormancy component or a sufficiently low mortality of the infection can, at least for certain parameter ranges, prevent the associated loss of stability.
%In this context, all dormancy-related parameters have a significant impact.
%We further investigate the extinction regimes for both hosts and viruses, and the effect of dormancy on the stability of equilibria including periodic or chaotic behaviour. In particular, we observe that the presence of dormancy can prevent the emergence of a variant of the `paradox of enrichment' and the associated loss of stability of the system.
%In particular, while  related systems without dormancy may exhibit a  Hopf bifurcation,
%%at which the coexistence equilibrium loses its stability may occur (
%giving rise to a variant of the `paradox of enrichment' known from predator-prey systems, 
%%and the system exhibits periodic or chaotic behaviour, 
%we show that the inclusion of dormancy can prevent this loss of stability in many cases.
%%, and the incorporation of recovery of infected individuals has a similar effect.
Finally, we show that the presence of contact-mediated dormancy enables the host population to attain higher equilibrium sizes  -- while still being able to avoid a persistent epidemic -- than host populations without this trait.
%, for which similarly high fitness values would imply a high risk for the emergence of a persistent epidemic. %This adds a new twist to the relevance of `reproductive trade-offs' usually associated with costly dormancy traits.
%This seems to be yet another example for the benefit of (costly) dormancy traits in systems with external stress (such as fluctuating environments, competition, predation), despite reproductive trade-offs.

%even if they are costly and come with reproductive trade-offs in the absence of infectoion observed dormancy-related reproductive trade-offs, which should lead to lower fitness of the dormancy capable population. 

%{\color{blue} (shortened and modified the abstract)}

%This quantity is also the crucial quantity for the initiation of high levels of dormancy over short time as observed in Gulbudak and Weitz.
\end{quote}

%\vfill

\bigskip\noindent 
{\it MSC 2010.} 92D25, 60J85, 34D05, 37G15. 

\medskip\noindent
{\it Keywords and phrases.} Dormancy, host--virus system, %stochastic population model, large population limit, 
multi-type branching process, Hopf bifurcation, paradox of enrichment, microbial virus epidemic. 
%, CRISPR-resistance.
%\eject 

\setcounter{tocdepth}{3}

%\tableofcontents

\setcounter{section}{0}
\begin{comment}{
This is not visible.}
\end{comment}

%\section{Motivation, model definition, and heuristics}\label{sec-introductionHGT} %\color{red} To be extended and polished.  \color{black}

\section{Introduction}\label{sec-introductionHGT}

%\subsection{Background and motivation}
{\bf Motivation and background.}
The abstract concept of `dormancy' describes the ability of an organism to switch into a reversible state of low to vanishing metabolic activity. This strategy to cope with adverse environmental conditions is wide-spread among many taxa, comes in many different forms, and is employed in particular by many microorganisms \cite{LJ11}, \cite{LdHWB21}. The resulting `seed banks' comprised of dormant individuals have profound effects on the evolutionary and ecological behaviour of populations, in particular increasing diversity and resilience against various forms of external stress.

The mathematical analysis of the effects of dormancy in ecology and evolution via dynamical systems,  and increasingly also via stochastic individual based models, has been an active field of research for several decades. One of the basic paradigms is that dormancy, and the resulting seed banks, can be highly beneficial in {\em fluctuating environments}, where they can often be understood as bet hedging strategies. This has been confirmed by abstract theory many times beginning with the delayed seed germination model of Cohen \cite{Cohen1966}. Modelling has grown significantly since then, incorporating e.g.\ the related concepts of `spontaneous vs.\ responsive transitioning' and `phenotypic plasticity', often in the context of microbial populations (e.g.\ \cite{Balaban2004, KusselLeibler2005, MS08}). Overall, random environmental fluctuations can give rise to an interesting panorama of optimal dormancy initiation and resuscitation strategies, see e.g.\ \cite{DMB11, BHHS19+}.

%{\tt More references; make the different "schools" more clear, also include Lalande and Roitberg 2006. Is the latter related to the paradox of enrichment?}\\

However, the assumption that external environmental fluctuations are necessary for dormancy to be an evolutionary successful strategy has also been challenged. Several mathematical models show that competitive pressure for resources \cite{Ellner1987, LR06, BT19} or certain predator-prey dynamics \cite{Tan2020} may also favor dormancy even in the absence of additional abiotic variation. %In the latter case, dormancy is observed to be a stable adaptive strategy despite the fact that neither the active nor the dormant state alone could out-compete perennial activity, a situation known as ``Parrando's paradox''. 
For example, predator dormancy has  been shown to be able to prevent the occurrence of the `paradox of enrichment' \cite{KMO09}, thus stabilizing the coexistence regimes in predator-prey systems.
% (we will explain this paradox in greater detail below). 
%\color{blue} By `paradox of enrichment' we mean that the dynamical system approximating the population (rescaled by the carrying capacity of the preys) in the large-population limit exhibits a Hopf bifurcation. More precisely, if the carrying capacity becomes large, then the coexistence equilibrium between the predators and the preys loses its stability, giving rise to stable periodic trajectories. Further increase of the carrying capacity makes the lowest population size during a period very close to zero both for the preys and for the predators. This corresponds to a danger of extinction in the underlying finite populations. \color{black}

%{\color{blue} Explain what we mean by paradox of enrichment?}
 %to which we will come back in the context of our model below.
%For example, in a population dynamics context, it has been shown that competition-induced dormancy allows the invasion and fixation of  dormancy traits, even if these traits are costly and come with a substantial reproductive trade-offs \cite{BT19}, even in case horizontal gene transfer acts against the dormancy traits \cite{BT20}. Other mechanisms for dormancy initiation, such as spontaneous, responsive or ``prescient'' switching, have also been suggested and mathematically analyzed, cf. e.g.\  \cite{MS08, DMB11, BGKW16, BHHS19+}. \color{red} Don't we also want to add \cite{BGKW20}? \color{black}

In this paper, we investigate a further scenario, in which dormancy enters as a defence strategy  of host cells against virus attacks. For example, it has been reported that infected bacteria can enter a dormant state as part of a CRISPR-Cas immune response, thereby curbing phage epidemics (cf.\ \cite{JF19} resp.\ \cite{MNM19}). Moreover, it has been suggested that dormancy of hosts may even be initiated upon mere contact of virus particles with their cell hull, so that the dormant host may entirely avoid infection, cf.\ Bautista et al \cite{B15}. Indeed, in experiments, Bautista et\ al\ observed that  {\em Sulfolobus islandicus} (an archeon) populations may switch almost entirely into dormancy within hours after being exposed to the {\em Sulfolobus spindle-shape virus SSV9}, even when the initial virus-to-host ratio is relatively small.
The authors argue that this highly sensitive anti-viral response 
%(which to us appears to be form of responsive switching and thus not `bet-hedging' in the traditional sense) 
should be taken into account in models for virus-host interactions so that its ecological consequences can be understood.

%We investigated the interaction between Sulfolobus spindle-shaped virus (SSV9) and its native archaeal host Sulfolo- bus islandicus. We show that upon exposure to SSV9, S. islandicus strain RJW002 has a significant growth delay where the ma- jority of cells are dormant (viable but not growing) for 24 to 48 hours postinfection (hpi)
%%
%% Indeed, such {\em contact-mediated dormancy} hast been proposed as part of the  host--virus interactions of {\em Sulfolobus islandicus} (an archeon) and the %dsDNA fusellovirus 
%%{\em Sulfolobus spindle-shape virus} SSV9
%%
%%
%%Indeed, such {\em contact-mediated dormancy} hast been discussed as part of the  host--virus interactions of {\em Sulfolobus islandicus} (an archeon) and the %dsDNA fusellovirus 
%%{\em Sulfolobus spindle-shape virus} SSV9,
%
%
%
%There, was observed experimentally that exposure to SSV9 may lead to nearly 100\% percent of hosts switching into dormancy within a few hours. 
%
%This novel, highly sensitive, and risky bet-hedging antiviral response must be integrated into models of virus-host interactions in this system so that the true ecological impact of viruses can be predicted and understood.

%The work by Bautista et al.\  has recently motivated a corresponding biophysical model for the intial stages . 

A first step in this direction was taken by 
Gulbudak and Weitz \cite{GW15} who  provide a biophysical model for the `early stages' (covering a few hours) of the above host--virus dynamics. Indeed, their deterministic model 
%(cf. their Eq. (2.3), (2.4) and in particular the system (3.6), \cite{GW15}) 
can reproduce the observed rapid switches into dormancy for relatively small virus-to-host ratios. %($\ll 1$). 
However, their model is focussed on a relatively short `time-window' of  host--virus dynamics and neither allows for a stochastic invasion analysis involving low numbers of newly arriving virions (where random fluctuations play an important role), nor 
%does not incoprorate mechanisms for 
virus reproduction via host cells, which would be necessary for a `long-time' analysis of the system.
%, and the model based on (2.3) and (2.4) also lacks a `resuscitation mechanism' for dormant cells as well as host and virus degradation. 
%Instead, their dynamical systems approach seems to implicitly require that both the host and the virus population sizes are initially sufficiently large to justify a deterministic model by a law-of-large-numbers argument.

%
%, so it ignores a) the `very early' stochastic and b) `long-term' reproducing stages when a) potentially only a handful of viruses invade a resident host population and stochasticity might have a crucial effect and b) when virus reproduction via infected host cells sets in and affects the course of the epidemic, including extinction of host or virus populations, or (perhaps most interestingly) coexistence/persistence of the epidemic.

%To understand the potential short- and long-term effects of contact mediated dormancy in virus infections, we propose a stochastic individual-based model for a simple host--virus population, which extends the biophysical model  \cite[(3.6)]{GW15}, incorporating stochasticity as well a (lytic) virus reproduction

%
%
%Our model will extends the results by \cite{GW15} in several directions, for example allowing us to investigate the invasion probability of a single virus, and also the the long-time behavior of the infection and coexistence vs.\ extinction regimes.
Here, we follow up on the suggestion by \cite{B15} to further investigate the consequences of virus-induced host dormancy %by proposing a stochastic individual-based model that extends \cite{GW15}
in two directions: i) We include explicit individual-based stochasticity, which is relevant during the early phases of an emerging virus epidemic (that is, when only few or even single virus particles arrive in the host population), and ii) incorporate a mechanism for (lytic) virus reproduction which allows an analysis of the coexistence / extinction regimes of the virus population. Once the virus epidemic becomes `macroscopic' (that is, the number of virus particles is at least of the order of the resident population -- a `successful invasion'), the stochastic model can then be approximated by a deterministic dynamical system. 
This limiting system extends a model of \cite{BK98}, where a lytic virus infection against single-cell hosts was studied. Their model, which is dormancy-free and also excludes the recovery of infected individuals, also motivated our present study.

\medskip
{\
{\bf Main goals.} The aim of this paper is to give at least partial answers to the following questions about our model. 
%The main questions that we wish to answer within our model are the following: 
\begin{enumerate}[Q 1)]
\item\label{first-questionsvirus} Under which conditions is an invasion, starting with the arrival of a single virion, into a (large) resident host population possible with high probability? % (even as the limiting host population size goes to infinity)? 
Here, by invasion we mean that the virus population reaches a level `visible' on the scale of the carrying capacity of the host population, thus initiating a `macroscopic epidemic'. 
%This of course also leadsFurther, in which situations can invasion be prevented (in dependence of the choice of the dormancy parameter)? 
This refers to a first, stochastic phase of the infection, where random fluctuation play a crucial role for the establishment of an epidemic.
\item How long does a successful invasion typically  take?
% (again expressed on the order of the size of the resident population)?
\item Given successful invasion, 
%that the virus population reaches a level that is comparable to the resident host population (i.e., a succesful invasion), 
what is the dynamical system that corresponds to the many-particle limit of the stochastic individual based model? This system can then be used to describe the dynamics of the model after the initial `stochastic phase', where now random fluctuations become negligible.
\item\label{lastnormal-questionsvirus} During this second deterministic phase,  under which conditions does the virus epidemic either become persistent (both host- and virus populations maintain macroscopic sizes over `long' time-intervals)  or break down (extinction of the virus population)?
\item\label{firstbifurcation-questions} In the case of persistence, what is 
%a suitable classification of  
the long-term behaviour of the 
%limiting dynamical
system (i.e.\ stable coexistence vs.\  periodic/chaotic behaviour, emergence of a `paradox of enrichment' phenomenon)? What are the consequences of dormancy and other model extensions in comparison to earlier models?
%(and also by the possible recovery of infected individuals, which were both absent from the deterministic model of~\cite{BK98})? 
%\item\label{effectsofdormancy-virus} In how far does dormancy increase the `fitness' of the host population during virus epidemics? How does this interplay with reproductive trade-offs?
%%2 questions commented out 
%\item What are the novelties introduced by dormancy (do we have a high initial peak of the dormant population size like in \cite{GW15}, can we observe the paradox of enrichment)? {\color{blue} Perhaps this does not need to be a separate goal and can be included in the text below and/or merged with the next question.}
%\item\label{lastbifurcation-questions} What are specific novel effects of dormancy (interplay of dormancy with the paradox of enrichtment
\end{enumerate}
For all of these questions, we are particularly interested in the specific roles of the dormancy-related model parameters. % (dormancy initiation probability upon virus contact, resuscitation rate, death rate during dormancy).
While we aim for mathematical results whenever possible, some questions regarding the long-term behaviour will be attacked `only' via simulation, sometimes leading to conjectures that invite further theoretical work.
% that is beyond the scope of the present manuscript.
 % for the behaviour of the system.

%{\color{blue} Talk about the effect of a positive recovery rate (missing in BK98) nad the Gulbudak-Weitz-peak later or in the discussion? Otherwise, too hard to understand in this early introduction!
}
%
% What is the effect of a positive recovery rate of infected individuals in the context of the paradox of enrichment? (Note that this question is also interesting in the dormancy-free case, since in \cite{BK98} only the case without recovery was studied.) How about the effect of dormancy in the same context? Gulbudak-Weitz-peak?}

%\medskip
%
%{

{\bf Organization of the paper.}
In Section~\ref{sec:model_def_main_results} we introduce our model, we provide some heuristics, and we present our main results. In particular, Section~\ref{sec-modeldefvirus} contains the definition of our model. Based on some preliminary results on the underlying dynamical system in Section~\ref{ssn-dynsyst} and on some crucial branching processes in Section~\ref{sec-phase1heuristics},  in Section~\ref{ssn:phases} we describe the two phases of a successful virus epidemic and set up the notation that is needed for the statement of our results in Section~\ref{sec-resultsvirus}. In Section~\ref{sec-discussionvirus} we discuss multiple aspects of our model and results.

In Section~\ref{sec-dynsyst2} we focus on questions regarding the long-term behaviour of the system that our main assertions keep unanswered, providing some conjectures and partial results in Section~\ref{sec-bifurcations}, numerical results in Section~\ref{sec-simulationsvirus}, and a comparison with predator--prey systems in Section~\ref{sec-paradoxofenrichment}. Finally, in Section~\ref{sec-proofsvirus} we carry out the proofs of our results.

\section{Model definition, heuristics, and main results}
\label{sec:model_def_main_results}

%{\color{blue} Structure now more similar to our HGT paper}
%The structure of the present section is as follows. We introduce our model in Section~\ref{sec-modeldefvirus}, analyse the limiting dynamical system in Section~\ref{ssn-dynsyst} and the approximating branching process in Section~\ref{sec-phase1heuristics}, in Section~\ref{sec-resultsvirus} we present our main results, and in Section~\ref{sec-discussionvirus} we discuss them.
\subsection{A stochastic individual-based model for host dormancy}
\label{sec-modeldefvirus}

\tikzstyle{1a}=[circle,draw=blue!50,fill=blue!20,thick,minimum size=5mm]
\tikzstyle{1d}=[circle,draw=black!50,fill=black!20,thick,minimum size=5mm]
\tikzstyle{2}=[circle,draw=green!50,fill=green!20,thick,minimum size=5mm]
\tikzstyle{D}=[rectangle,draw=black!50,fill=white!20,thick,minimum size=5mm]
\tikzstyle{v}=[circle,draw=red!50,fill=red!20,thick,minimum size=2mm]
\tikzstyle{1i}=[circle,draw=blue!50,fill=purple!20,thick,minimum size=2mm]

%To capture the stochastic fluctuation in the very early phase of an infection, when only few viruses are present, we introduce a stochastic individual based model in the spirit of \cite{C+16,BT19}. 

%Once the population sizes of both viruses and hosts are at a macroscopic level, the system may be approximated by a dynamical system that we introduce in Section \ref{ssn-dynsyst}.

We consider a  host--virus population 
%in the modeling spirit of \cite{C+16,BT19} 
consisting of individuals of four types: Active %microorganisms/
hosts (type 1a), dormant hosts (type 1d), infected hosts  (type 1i) and free virions  (type 2). 
The stochastic dynamics of the system is given as follows:
\begin{itemize}
	\item Active (1a) host cells reproduce via binary fission at rate $\lambda_1>0$ and die at rate $\mu_1 \in (0,\lambda_1)$. Dormant (1d) or infected (1i) cells do not reproduce (cf.\ Figure \ref{fig:repro}, left panel).
	\item Virions (2) do not reproduce individually (but instead indirectly via infection of a host cell, see below) and die/degrade at rate $\mu_2>0$ (cf.\  Figure \ref{fig:repro}, right panel).  %\color{red}(typically smaller than $\mu_1$?)\color{black}.
	
	 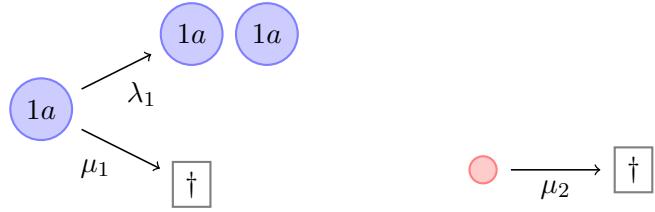
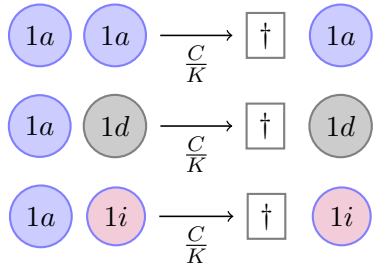
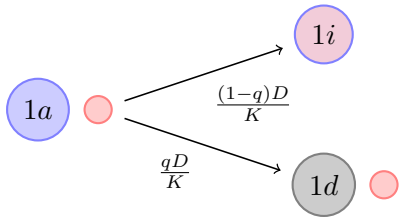
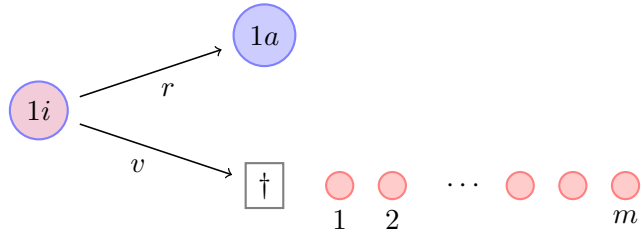
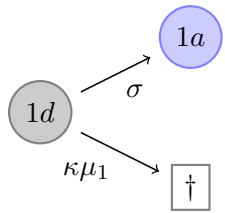
\begin{figure}
	 \vspace{1cm}
	 %\caption{Overview of transitions of the host--virus model} \label{fig:model}
	 \begin{subfigure}{1.\textwidth}
	 \centering
 \begin{tikzpicture}[node distance=2cm, semithick, shorten >=5pt, shorten <=5pt]
  \node  [1a] 	(1a_rep)	 {$1a$};
  \node  [1a] 	(1a_child_1)	[right of=1a_rep, yshift=1cm]	 {$1a$};    
  \node  [1a] 	(1a_child_2) 	[right of=1a_child_1, xshift=-1cm] 	{$1a$};
  \node  [D] 		(death) 	[below of=1a_child_1] 	{$\dagger$}; 
  	 \draw[->] (1a_rep) to node[auto, swap] {$\lambda_1$} (1a_child_1);
 	 \draw[->] (1a_rep) to node[auto, swap] {$\mu_1$} (death);  
 \end{tikzpicture}
 \hspace{2cm}
  \begin{tikzpicture}[node distance=2cm, semithick, shorten >=5pt, shorten <=5pt]
  \node  [v] 	(virus)	{}; % {\tiny  $2$};
  \node  [D] 		(death) 	[right of=virus] 	{$\dagger$}; 
 	 \draw[->] (virus) to node[auto, swap] {$\mu_2$} (death);  
 \end{tikzpicture}
 \caption{Clonal reproduction of active hosts (type 1a) at rate $\lambda_1$ resp.\ death at rate $\mu_1$ (left), and virus degradation at rate $\mu_2$ (right). The symbol $\dagger$ represents the `death state'.  \vspace{0.5cm}} \label{fig:repro}
 \end{subfigure}
  \vspace{0.5cm}
 \begin{subfigure}{1.\textwidth}
   \centering
     \begin{tikzpicture}[node distance=2cm, semithick, shorten >=5pt, shorten <=5pt]
  \node  [1a] 	(2_comp_1)	 {$1a$};  
  \node  [1a] 	(2_comp_2)	 [right of=2_comp_1, xshift=-1cm]{$1a$};
  \node  [1a] 	(2_child_1)	[right of=death]	 {$1a$};    
  \node  [D] 		(death) 	[right of=2_comp_2] 	{$\dagger$}; 
  	 \draw[->] (2_comp_2) to node[auto, swap] {$\frac CK$} (death);
 \end{tikzpicture} \\
  \begin{tikzpicture}[node distance=2cm, semithick, shorten >=5pt, shorten <=5pt]
  \node  [1a] 	(2_comp_1)	 {$1a$};  
  \node  [1d] 	(2_comp_2)	 [right of=2_comp_1, xshift=-1cm]{$1d$};
  \node  [1d] 	(2_child_1)	[right of=death,xshift=-1cm]	 {$1d$};
  \node  [D] 		(death) 	[right of=2_comp_2] 	{$\dagger$}; 
  	 \draw[->] (2_comp_2) to node[auto, swap] {$\frac CK$} (death);
 \end{tikzpicture} \\
      \begin{tikzpicture}[node distance=2cm, semithick, shorten >=5pt, shorten <=5pt]
  \node  [1a] 	(2_comp_1)	 {$1a$};  
  \node  [1i] 	(2_comp_2)	 [right of=2_comp_1, xshift=-1cm]{$1i$};
  \node  [1i] 	(2_child_1)	[right of=death,xshift=-1cm]	 {$1i$};    
  \node  [D] 		(death) 	[right of=2_comp_2] 	{$\dagger$}; 
  	 \draw[->] (2_comp_2) to node[auto, swap] {$\frac CK$} (death);
 \end{tikzpicture}
     \caption{Competition events for pairs of host cells, involving the death of the first (type $1a$) individual at rate $C/K$.} \label{fig:comp1}
      \end{subfigure}
        \vspace{0.5cm}
	\begin{subfigure}{1.\textwidth}
	\centering
 \begin{tikzpicture}[node distance=2cm, semithick, shorten >=5pt, shorten <=5pt]
  \node  [1a] 	(1a_rep)	 {$1a$};
  \node  [v] 	(virus)	[right of=1a_rep, xshift=-1.2cm]	 {};    
  \node  [1i] 	(infected)	[right of=virus,yshift=1cm, xshift=1cm]	 {$1i$};      
  	 \draw[->] (virus) to node[auto, swap] {\small $\frac{(1-q)D}{K}$} (infected);
  \node  [1d] 	(virus_2)	[below of=infected]	 {$1d$};    
  \node  [v] 		(dormant) 	[right of=virus_2, xshift=-1.2cm] 	{}; 
  	 \draw[->] (virus) to node[auto, swap] {\small $\frac{qD}{K}$} (virus_2);	 
 \end{tikzpicture}
	\caption{Virus attack resulting in infection (rate $(1-q)D/K$) of host or dormancy (rate $qD/K$).} \label{fig:attack}
      \end{subfigure}
        \vspace{0.5cm}
      \begin{subfigure}{1.\textwidth}
      	\centering
  \begin{tikzpicture}[node distance=2cm, ,semithick, shorten >=5pt, shorten <=5pt]
  \node  [1i]  	(infected)	 {$1i$};  
  \node  [1a] 	(recovered)	[right of=infected,yshift=1cm, xshift=1cm]	 {$1a$};    
  	 \draw[->] (infected) to node[auto, swap] {$r$} (recovered);
  \node  [D] 	 (death)	[below of=recovered]	 {$\dagger$};    
  \node  [v] 		(virus_1) 	[right of=death, xshift=-1cm,  label=below:$1$] 	{}; 
  	 \draw[->] (infected) to node[auto, swap] {$v$} (death);	 
  \node  [v] 		(virus_2) 	[right of=virus_1, xshift=-1.3cm,  label=below:$2$, label=right:$\quad\cdots$] 	{}; 
  \node  [v] 		(virus_3) 	[right of=virus_2, xshift=-0.3cm] 	{}; 
  \node  [v] 		(virus_4) 	[right of=virus_3, xshift=-1.3cm] 	{}; 
  \node  [v] 		(virus_m) 	[right of=virus_4, xshift=-1.3cm,  label=below:$m$] 	{}; 
 \end{tikzpicture}
	\caption{ Infected cells recover (at rate $r>0$) or release $m$ virions after lysis, at rate $v>0$.}   \label{fig:inf} 
\end{subfigure}
   \begin{subfigure}{1.\textwidth}
   	\centering
  \begin{tikzpicture}[node distance=2cm, semithick, shorten >=5pt, shorten <=5pt]
  \node  [1d] (dormant)	 {$1d$};  
  \node  [1a] (active)	 [right of=dormant, yshift=1cm]{$1a$};
  \node  [D] 	(death)	[below of=active]	 {$\dagger$};    
  	 \draw[->] (dormant) to node[auto, swap] {$\sigma$} (active);
  	 \draw[->] (dormant) to node[auto, swap] {$\kappa \mu_1$} (death);	 
 \end{tikzpicture} 
\caption{Leaving the dormant state $1d$ by resuscitation (rate $\sigma$) or death (rate $\kappa \mu_1$).}  \label{fig:dorm} 
      \end{subfigure}
      \caption{Overview of transitions of the host--virus model.} \label{fig:model}
      \end{figure}	
	\item Competition: Fix parameters $K>0$ called \emph{carrying capacity} and  $C>0$ called {\em competition strength}. For any ordered pair consisting of one active (1a) host cell and one other host cell (of either type 1a or 1i or 1d), at rate $C/K$, a death due to competition/overcrowding happens, affecting the active individual, which is removed from the population (cf.\ Figure \ref{fig:comp1}).

	\item Virus attack: Fix a parameter  $q \in (0,1)$ called {\em dormancy initiation probability}.  For any ordered pair of individuals containing one active (1a) cell and one virion (2),  a virus attack happens at rate $D/K$ for some $D>0$. In this case, with probability $q$, the attacked host (1a) senses the virion (e.g.\ upon contact with its cell-hull) and is able to switch into dormancy (from 1a to 1d) before infection, and with probability $1-q$, the host cell gets infected (i.e.\ switches from 1a to 1i) and the free virus (2) is `removed' (in the sense that it enters the cell), see Figure \ref{fig:attack}.

	\item Infected cells (1i) either recover (at rate $r>0$) or produce $m \in \N$ (where typically $m$ is large) new virions (2) and then dissolve (lysis), at rate $v>0$ (see Figure~\ref{fig:inf}).

	\item Dormant cells (1d) resuscitate into (1a) at rate $\sigma>0$ and die at rate $\kappa\mu_1$ for some $\kappa \geq 0$ (Figure \ref{fig:dorm}).

\end{itemize}

%The above model extends the deterministic dynamics (3.6) in \cite{GW15} and employs some of the assumptions for the biophysical model in \cite{BK98}.

Note that in this model, there is no classical meaning of `fitness' for the virions (type 2), since the reproduction of this type of individual rests entirely on host availability. As indicated in the introduction, such a 
reproduction mechanism reflects lysis, % (\cite{BK98}), 
and we refer to $m$ as average \emph{burst size} (cf.~e.g.\ \cite{B09} for burst sizes in archea). 
%For simplicity, we assume $m$ to be fixed (as in~\cite{BK98}).
%Note that other reproduction mechanisms could also be incorporated, including budding. %We refer the reader to
%Section~\ref{sec-GW15BK98} for a discussion of the precise relation between our stochastic model and the deterministic models of \cite{BK98,GW15}, and we summarize the novelties of our results compared to the ones of these two papers.

%to Section~\ref{sec-interpretationcoex} for a discussion about potential reproductive trade-offs of dormancy in our model. In 
%{\color{red} Die trade-offs vielleicht erst später diskutieren.}

%We now formalize the above dynamics and incorporate them into a stochastic population model.
The corresponding population process can be formally defined as a continuous time Markov chain $\mathbf N=(\mathbf N_t)_{t\geq 0}$ on $ \N_0^4$, where %with $K>0$ denoting the carrying capacity, where 
$$
(\mathbf N_t)_{t \geq 0} = (N_{1a,t},N_{1d,t},N_{1i,t}, N_{2,t})_{t \geq 0}
$$
will be interpreted as
$$
 N_{x,t} = \# \{\mbox{individuals of type $x$ alive at time $t$} \},%\{ x_i \colon i \in [N_t], x_i= x\}
$$
for $x \in \{ 1a, 1d, 1i, 2\}$.

According to the above description, $\mathbf N$ is then the unique time continuous Markov process with transitions
\begin{align*}
	(n_{1a},n_{1d},n_{1i}, n_2) \to
	\begin{cases}
		& (n_{1a}+1,n_{1d}, n_{1i}, n_{2}) \text{ at rate }  n_{1a}\lambda_1, \\
		& (n_{1a}-1,n_{1d}, n_{1i}, n_{2}) \text{ at rate }  n_{1a}(\mu_{1} +  C \frac{n_{1a}+n_{1d}+n_{1i}}{K}), \\
		& (n_{1a},n_{1d}, n_{1i}, n_{2}-1) \text{ at rate }  n_{2}\mu_{2}, \\
		& (n_{1a}-1,n_{1d}, n_{1i}+1, n_{2}-1) \text{ at rate } \frac{(1-q) D n_{1a}n_{2}}{K}, \\
				& (n_{1a}-1,n_{1d}+1, n_{1i}, n_{2}) \text{ at rate } \frac{q D n_{1a} n_{2}}{K}, \\
			& (n_{1a}+1,n_{1d}, n_{1i}-1, n_{2}) \text{ at rate } r n_{1i}, \\
		& (n_{1a},n_{1d}, n_{1i}-1, n_{2}+m) \text{ at rate } v n_{1i}, \\
				& (n_{1a}+1,n_{1d}-1, n_{1i}, n_{2}) \text{ at rate } n_{1d}\sigma,\\
				& (n_{1a},n_{1d}-1, n_{1i}, n_{2}) \text{ at rate }  n_{1d} \kappa \mu_{1}. \\
	\end{cases}
\end{align*}
Let us mention some elementary properties of this Markov chain. Its only absorbing state is $(0,0,0,0)$, which corresponds to the extinction of all the four types. Moreover, if $\mathbf N_0 \in [0,\infty) \times \{ 0\}^3$, then $\mathbf N_t \in [0,\infty) \times \{ 0 \}^3$ for all $t>0$, and if  $\mathbf N_0 \in [0,\infty)^2 \times \{ 0\}^2$, then  $\mathbf N_t \in [0,\infty)^2 \times \{ 0 \}^2$ for all $t>0$. Finally, in this specific case, $t \mapsto N_{1d,t}$ is monotonically decreasing and the expected time until it reaches 0 is finite. In other words, if there are initially neither infected individuals nor viruses, then this will also be the case for all positive times, and the dormant population will vanish rapidly.

Our goal is to analyze a situation where $K$ is large (even the limit as $K \to \infty$, corresponding to a many-particles limit), and the initial size of the (scaled) host population $ N_{1a,0}^K$ is close to its (virus-free) equilibrium. % (which is given by $\bar n_{1a}=\smfrac{\lambda_1-\mu_1}{C}$ as we will see later). 
We will thus consider the rescaled process
$$
(\mathbf N_t^K)_{t \geq 0} = (N_{1a,t}^K,N_{1d,t}^K,N_{1i,t}^K, N_{2,t}^K)_{t \geq 0},
$$
which is defined via
$$
 N_{x,t}^K = \frac{1}{K} \# \{\mbox{individuals of type $x$ alive at time $t$} \},%\{ x_i \colon i \in [N_t], x_i= x\},
$$
so that $\mathbf N_t^K = \mathbf N_t/K$ for all $K,t>0$, recalling that $K>0$ is the carrying capacity of the system.
We also write
$$ 
N_{1,t}^K = N_{1a,t}^K + N_{1d,t}^K + N_{1i,t}^K
$$
for the total population size of the host individuals (scaled by $K$).

\subsection{The limiting dynamical system for large populations: equilibria and stability}\label{ssn-dynsyst}
Once a macroscopic epidemic has emerged, that is, 
%, passing through the `stochastic phase' (when the number of viruses compared to host cells is very small), and reaching a state where 
the population size of the virus particles and the host cells are both of order $K$ (for large $K$), then the Markov chain $(\mathbf N_t^K)_{t \geq 0}$ satisfies a `functional law of large numbers' and  can be  approximated by a limiting deterministic dynamical system. We now introduce this scaling limit. Assume that the initial population size of $\mathbf N_0^K$ satisfies
$$
\frac 1K \mathbf N_0 = \mathbf N_0^K \overset{K \rightarrow \infty}{\longrightarrow}  {\bf n}(0)=(n_{1a}(0),n_{1d}(0),n_{1i}(0),n_{2}(0)) \, \, \in \, [0, \infty)^4
$$
%as $K \to \infty$. 
and fix some $T>0$.
Then,  \cite[Theorem 11.2.1, p.~456]{EK}) implies the weak convergence (uniformly on $[0,T]$) 
$$
(\mathbf N_t^K)_{t \in [0,T]} \overset{K \to \infty}{\Longrightarrow} (\mathbf n(t))_{t \in [0,T]}=((n_{1a}(t),n_{1d}(t),n_{1i}(t),n_{2}(t)))_{t \in [0,T]}
$$
to the unique solution $({\mathbf n}(t))_{t \geq 0}$ 
of the dynamical system  
\begin{equation}\label{4dimvirus}
\begin{aligned}
\frac{\d n_{1a}(t)}{\d t} & = n_{1a}(t)\big( \lambda_1-\mu_1-C {(n_{1a}(t)+n_{1i}(t)+n_{1d}(t))}-D n_{2}(t) \big)  + \sigma n_{1d}(t) + r n_{1i}(t),\\
\frac{\d n_{1d}(t)}{\d t} & = q D n_{1a}(t)  n_{2}(t) - (\kappa\mu_1+\sigma) n_{1d}(t), \\
\frac{\d n_{1i}(t)}{\d t} & = (1-q) D n_{1a}(t) n_{2}(t) -(r+v) n_{1i}(t), \\
\frac{\d n_{2}(t)}{\d t} & = mv n_{1i}(t) - (1-q)  D n_{1a}(t) n_{2}(t) -  \mu_2 n_{2}(t)
\end{aligned}
\end{equation}
with parameters as in the previous section.
%\color{red}How do we treat the initial conditions here? Do we allow single components to be zero? They would then not be macroscopic... \color{black}
%
%\begin{remark} Differences to the model in \cite{GW15}.
%\begin{itemize}
%\item Our model is stochastic and individual-abesd; in particular it treat cases where the virus infection is ``new'' (only a few viruses present), where an ode approach is not realistic). 
%\item Makes explicit use of dormancy and resuscitation (the latter doesn't seem to be case in \cite{GW15}).
%\item Gets rid of the ``cluster state'' (implicit here).
%\item Allows indirect virus replication via bursting infected cells (strangely absent in \cite{GW15}).
%\end{itemize}
%\end{remark}
Define $$\barn1a:=\smfrac{\lambda_1-\mu_1}{C}.$$ Then, $(0,0,0,0)$ (the zero or `extinction' equilibrium) and $(\barn1a,0,0,0)$ (the `virus-free' equilibrium, e.g.\ after complete recovery) are equilibria of the dynamical system~\eqref{4dimvirus}, and they are distinct and both coordinatewise nonnegative thanks to our assumption that $\lambda_1>\mu_1$. Moreover, we have the following proposition.

\begin{prop}[Coexistence condition]\label{lemma-coexistencevirus}
The system~\eqref{4dimvirus} has a coordinatewise positive coexistence equilibrium $(n_{1a}^*,n_{1d}^*,n_{1i}^*,n_2^*)$ if and only if the \emph{coexistence condition}
\[ 
(m v-(r+v))(1-q) D\barn1a > \mu_2(r+v)
\numberthis\label{viruscoexcond} 
\]
%\[ (m v-(r+v))(1-q) D\barn1a > \mu_2(r+v). \numberthis\label{viruscoexcondoncemore} \]
holds. In this case, the coexistence equilibrium is unique, and its active coordinate $n_{1a}^*$ is given by
\[ 
n_{1a}^* = \frac{\mu_2(r+v)}{(1-q) D(mv-(r+v))} < \barn1a. 
\numberthis\label{n1adefvirus} 
\]
\end{prop}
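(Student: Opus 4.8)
The plan is to solve the equilibrium (fixed-point) equations of the dynamical system \eqref{4dimvirus} directly, proceeding from the ``downstream'' equations for $n_{1d},n_{1i},n_2$ toward the $n_{1a}$-equation. Setting all right-hand sides to zero, the second and third equations give, for any coordinatewise positive equilibrium,
\[
n_{1d} = \frac{qD}{\kappa\mu_1+\sigma}\,n_{1a}n_2, \qquad n_{1i} = \frac{(1-q)D}{r+v}\,n_{1a}n_2,
\]
so that $n_{1d}$ and $n_{1i}$ are fixed positive multiples of $n_{1a}n_2$. Substituting the expression for $n_{1i}$ into the fourth equation $mv\,n_{1i} = (1-q)Dn_{1a}n_2+\mu_2 n_2$ and dividing by $n_2>0$ leaves $(1-q)Dn_{1a}\big(\tfrac{mv}{r+v}-1\big)=\mu_2$, which forces $mv>r+v$ (otherwise the left-hand side is $\le 0<\mu_2$) and determines $n_{1a}$ as the value in \eqref{n1adefvirus}. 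Thus $n_{1a}$, and hence the ratios $n_{1d}/n_2$ and $n_{1i}/n_2$, are already uniquely pinned down; uniqueness of the equilibrium will follow once $n_2$ is shown to be unique.

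I would then substitute these relations into the first equation. Dividing it by $n_{1a}>0$ and collecting the terms proportional to $n_2$ turns it into an identity of the form
\[
D\Big[\,Cn_{1a}\Big(\tfrac{q}{\kappa\mu_1+\sigma}+\tfrac{1-q}{r+v}\Big)+1-\tfrac{\sigma q}{\kappa\mu_1+\sigma}-\tfrac{r(1-q)}{r+v}\,\Big]\,n_2 \;=\; \lambda_1-\mu_1-Cn_{1a}.
\]
The one point here that is not pure algebra is the observation that the bracket is strictly positive: the summand $Cn_{1a}(\cdots)$ is nonnegative, while $\tfrac{\sigma q}{\kappa\mu_1+\sigma}\le q$ (as $\kappa\ge 0$) and $\tfrac{r(1-q)}{r+v}<1-q$ (as $v>0$ and $q<1$), so that $\tfrac{\sigma q}{\kappa\mu_1+\sigma}+\tfrac{r(1-q)}{r+v}<1$. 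Consequently $n_2$ is uniquely determined, and $n_2>0$ holds if and only if $\lambda_1-\mu_1-Cn_{1a}>0$, i.e.\ $n_{1a}<\bar n_{1a}$ (here $\bar n_{1a}=(\lambda_1-\mu_1)/C$ since $\lambda_1>\mu_1$). This is the only inequality, rather than identity, in the argument, and it is elementary.

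It remains to match $n_{1a}<\bar n_{1a}$ (with $n_{1a}$ as in \eqref{n1adefvirus}) with the coexistence condition. Assuming $mv>r+v$, substituting \eqref{n1adefvirus} and clearing the positive denominator $(1-q)D(mv-(r+v))$ turns $n_{1a}<\bar n_{1a}$ into exactly \eqref{viruscoexcond}. For the converse, if \eqref{viruscoexcond} holds then, since $(1-q)D\bar n_{1a}>0$ (using $q<1$, $D>0$, and $\bar n_{1a}>0$ because $\lambda_1>\mu_1$), positivity of its left-hand side forces $mv>r+v$; hence $n_{1a}$ from \eqref{n1adefvirus} is well defined, positive, and $<\bar n_{1a}$, the corresponding $n_2$ from the identity above is positive, and $n_{1d},n_{1i}$ from the first display are positive. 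This quadruple solves all four equilibrium equations — the second, third and fourth by construction, and the first because $n_2$ was chosen to satisfy it — and it is unique since every step was forced. I do not expect a genuine obstacle; the only thing to watch is keeping track of the parameter ranges ($q\in(0,1)$, $v,D,\mu_2>0$, $\kappa\ge 0$, $\lambda_1>\mu_1$) so that all the denominators and the bracket above stay strictly positive.
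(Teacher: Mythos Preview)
Your proof is correct and follows essentially the same route as the paper: solve the second, third and fourth equilibrium equations to pin down $n_{1d},n_{1i}$ in terms of $n_{1a}n_2$ and determine $n_{1a}$ via \eqref{n1adefvirus}, then feed these into the first equation and use the positivity of $1-\tfrac{\sigma q}{\kappa\mu_1+\sigma}-\tfrac{r(1-q)}{r+v}$ to conclude that $n_2>0$ is equivalent to $n_{1a}<\bar n_{1a}$, i.e.\ to \eqref{viruscoexcond}. Your presentation is in fact a bit more streamlined, since you substitute $n_{1d},n_{1i}$ fully in terms of $n_2$ before analysing the first equation, whereas the paper keeps $n_{1d}+n_{1i}$ on the left and finishes with a short case distinction; both arguments hinge on the same positivity observation.
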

 We refer the reader to Section~\ref{sec-preliminaryproofs} for the simple proof of this proposition, %~\ref{lemma-coexistencevirus},
which also provides an explicit characterization of the coordinates $n_{1d}^*,n_{1i}^*,n_2^*$ of the coexistence equilibrium. %Given the uniqueness of the coordinatewise positive equilibrium, is also easy to see that \eqref{4dimvirus} can have no coordinatewise nonnegative equilibrium apart from $(0,0,0,0)$, $(\barn1a,0,0,0)$, and $(n_{1a}^*,n_{1d}^*,n_{1i}^*,n_2^*)$. 
Note that the coexistence condition is not only equivalent to the existence of a coexistence equilibrium for~\eqref{4dimvirus}, but will also guarantee the supercriticality of a multitype branching process crucial for our analysis, cf.~Section~\ref{sec-phase1heuristics} below.  %Finally, we note that for $\lambda_1>\mu_1$, \eqref{viruscoexcond} implies that $mv>r+v$. 
%Here,
%
%For $\lambda_1>\mu$, 
The {\em critical case} where \eqref{viruscoexcond} holds with equality is then given by
\[ 
\frac{\lambda_1-\mu_1}{C}=\frac{\mu_2(r+v)}{(1-q)D(mv-(r+v))}. 
\numberthis\label{mstar}
\]
%It will be clear from the proof of Lemma~\ref{lemma-coexistencevirus} that if \eqref{mstar} holds, then $(\barn1a,0,0,0)$ will be the only non-negative equilibrium.
%and the formulas describing $(n_{1a}^*,n_{1d}^*,n_{1i}^*,n_2^*)$ in case it is a coordinatewise positive equilibrium coincide, hence there is no coordinatewise positive equilibrium in this case. 
Given all parameters but $m$, the value of $m >0$ ensuring that \eqref{mstar} holds will be denoted by $m^*$ throughout the rest of the paper. Then,~\eqref{viruscoexcond} is equivalent to $m>m^*$. We refer to  $m^*$ as the \emph{transcritical bifurcation point} (this notion will be justified in Section~\ref{sec-bifurcations}).

%which together with the previous equations provides explicit expressions on $n_{1d}, n_{1i}, n_2$. However, $n_{1a}$ does not depend on the parameters $\lambda_1$ and $C$, and the coexistence equilibrium depends on $\mu_1$ only via the $\kappa\mu_1$-term. Thus, a coexistence equilibrium can only exist for a very particular choice of the parameters...

Let us now analyse the stability of the equilibria $(0,0,0,0)$, $(\barn1a,0,0,0)$, and $(n_{1a}^*,n_{1d}^*,n_{1i}^*,n_2^*)$ of the system~\eqref{4dimvirus} (whenever they exist). 
\begin{prop}[Stability of equilibria]\label{lemma-stabilityvirus}
The following assertions hold for the dynamical system~\eqref{4dimvirus}.
\begin{enumerate}
    \item\label{0saddle} The equilibrium $(0,0,0,0)$ is unstable.
    \item\label{nopersistence} The equilibrium $(\barn1a,0,0,0)$ is unstable if~\eqref{viruscoexcond} holds and asymptotically stable if the strict reverse inequality of~\eqref{viruscoexcond} holds, in other words,
    \[ m v(1-q) D\barn1a < ((1-q)D\barn1a+\mu_2)(r+v). \numberthis\label{lifecond}\]
    \item\label{2or4} Under condition~\eqref{viruscoexcond}, the Jacobi matrix of the system at $(n_{1a}^*,n_{1d}^*,n_{1i}^*,n_2^*)$ has positive determinant and negative trace.
\end{enumerate}
\end{prop}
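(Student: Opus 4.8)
The plan is to linearise the dynamical system \eqref{4dimvirus} at each of the three equilibria and inspect the corresponding Jacobi matrix $J$; write $\mathbf F=(F_{1a},F_{1d},F_{1i},F_2)$ for the right-hand side of \eqref{4dimvirus}. At the zero equilibrium and the virus-free equilibrium, $J$ turns out to be block-triangular, so its eigenvalues can be read off directly; at the coexistence equilibrium only the signs of $\Tr J$ and $\det J$ are needed, and these will be obtained by combining the equilibrium identities with a single algebraic cancellation.

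For assertion~(1): at $(0,0,0,0)$ the matrix $J$ is block lower-triangular with diagonal blocks of eigenvalues $\lambda_1-\mu_1$, $-(\kappa\mu_1+\sigma)$, $-(r+v)$ and $-\mu_2$; since $\lambda_1>\mu_1$ there is a positive eigenvalue, so $(0,0,0,0)$ is unstable. For assertion~(2): at $(\bar n_{1a},0,0,0)$ the vanishing of the viral coordinate makes $J$ block-triangular once more, and a short computation shows that its characteristic polynomial factorises as $(\lambda+\lambda_1-\mu_1)(\lambda+\kappa\mu_1+\sigma)\,\chi_M(\lambda)$, where $\chi_M$ is the characteristic polynomial of the $2\times2$ block
\[
M=\begin{pmatrix} -(r+v) & (1-q)D\bar n_{1a}\\ mv & -(1-q)D\bar n_{1a}-\mu_2\end{pmatrix}
\]
governing the linearised $(n_{1i},n_2)$-subsystem. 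Here $\Tr M<0$ unconditionally, whereas $\det M=(r+v)\big((1-q)D\bar n_{1a}+\mu_2\big)-mv(1-q)D\bar n_{1a}$ is strictly positive exactly under \eqref{lifecond} and strictly negative exactly under \eqref{viruscoexcond}. By the trace--determinant criterion for $2\times2$ matrices, all four eigenvalues of $J$ then have negative real part under \eqref{lifecond} (so $(\bar n_{1a},0,0,0)$ is asymptotically stable), while under \eqref{viruscoexcond} the matrix $M$ has a positive real eigenvalue (so $(\bar n_{1a},0,0,0)$ is unstable).

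For assertion~(3), let $J$ denote the Jacobi matrix at the coexistence equilibrium $(n_{1a},n_{1d},n_{1i},n_2)$, which by Proposition~\ref{lemma-coexistencevirus} exists and has all coordinates strictly positive. Its diagonal entries coming from the $n_{1d}$-, $n_{1i}$- and $n_2$-equations are $-(\kappa\mu_1+\sigma)$, $-(r+v)$ and $-(1-q)Dn_{1a}-\mu_2$, all negative, and rewriting $F_{1a}=0$ as $n_{1a}\big(\lambda_1-\mu_1-C(n_{1a}+n_{1i}+n_{1d})-Dn_2\big)=-\sigma n_{1d}-rn_{1i}<0$ shows that the $n_{1a}$-diagonal entry $\lambda_1-\mu_1-C(2n_{1a}+n_{1i}+n_{1d})-Dn_2$ equals that (negative) parenthesised factor minus $Cn_{1a}$, hence is negative as well; therefore $\Tr J<0$. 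For the determinant, the crucial observation is that the lower-right $2\times2$ block of $J$ is exactly the matrix $M$ above with $\bar n_{1a}$ replaced by the coexistence value $n_{1a}$, and by \eqref{n1adefvirus} this block has determinant $(r+v)\mu_2-(1-q)Dn_{1a}(mv-(r+v))=0$. Expanding $\det J$ along the $n_{1d}$-column (which has only two nonzero entries) and using this vanishing to delete one term in each of the two surviving $3\times3$ minors, then eliminating $n_{1d},n_{1i}$ by means of $F_{1d}=F_{1i}=0$ and substituting $(1-q)Dn_{1a}(mv-(r+v))=\mu_2(r+v)$ from \eqref{n1adefvirus}, one reduces the expression to
\[
\det J=D\,n_2\,\mu_2\Big[\kappa\mu_1(v+qr)+(1-q)\sigma v+Cn_{1a}\big(q(r+v)+(1-q)(\kappa\mu_1+\sigma)\big)\Big]>0 .
\]

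Assertions~(1) and~(2) are routine once the block structure has been noticed; the genuine work, and the main obstacle, is establishing $\det J>0$ in~(3). A blind expansion of the $4\times4$ determinant is unwieldy, and the argument hinges on recognising that the coexistence condition \eqref{viruscoexcond} — through the explicit formula \eqref{n1adefvirus} — forces the infected--virion $2\times2$ sub-block of $J$ to be singular, which is precisely what collapses the expansion to the manifestly positive expression above. An equivalent and arguably more transparent route is to perform row operations on $J$ using all four equilibrium identities before expanding.
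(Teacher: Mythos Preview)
Your proof is correct and follows essentially the same approach as the paper: linearisation plus block-triangular structure for parts~(1) and~(2), and for part~(3) the equilibrium identity $F_{1a}=0$ to establish $\Tr J<0$ together with the key observation that \eqref{n1adefvirus} forces the $(n_{1i},n_2)$ sub-block to be singular, which collapses $\det J$ to exactly the same positive expression. The only tactical difference is that the paper first performs the row operation $R_2\leftarrow R_2-\tfrac{q}{1-q}R_3$ (making a $3\times 3$ block singular) and then expands along the first column, whereas you expand directly along the $n_{1d}$-column and use the singular $2\times 2$ block; both routes exploit the same cancellation.
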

The proof of Proposition~\ref{lemma-stabilityvirus} will also be carried out in Section \ref{sec-preliminaryproofs}. There, the local stability of $(0,0,0,0)$ and $(\barn1a,0,0,0)$ will simply be determined via linearization. In contrast, assertion \eqref{2or4} only implies that the Jacobi matrix at $(n_{1a}^*,n_{1d}^*,n_{1i}^*,n_2^*)$ has either 2 or 4 eigenvalues with negative real parts, and this equilibrium may indeed be stable or unstable depending on the choice of the parameters, as we will see below.

\begin{remark}[Emergence of a Hopf bifurcation: Stable endemic  vs.\ paradox of enrichment] \label{remark-bifurcation}

While $(n_{1a}^*,n_{1d}^*,n_{1i}^*,n_2^*)$ is always stable for $m>m^*$ sufficiently close to $m^*$ (cf.~Proposition~\ref{prop-firststable} below), 
it may lose stability due to a Hopf bifurcation for $m$ large and $r,q$ sufficiently small, as we will show in Section \ref{sec-bifurcations}.
Such a Hopf bifurcation gives rise to stable periodic trajectories, with minima for the active host population tending to zero as the burst size $m$ tends to infinity. %$\to\infty$. 
Hence, an underlying in reality finite population may die out due to random fluctuations while being close to these small periodic minima. Thus, for the hosts, periodic behaviour for large $m$ can be worse than stable coexistence with the virus population (`endemic behaviour') for smaller $m$. This can be seen as a variant of the `paradox of enrichment' known from predator-prey models (cf.\ Section~\ref{sec-paradoxofenrichment}): Higher within-cell reproduction rates of the virus, encoded by higher average burst sizes in our model, may lead to more likely extinction of the whole system. This is of course rather intuitive in our host--virus set-up.\\
We refer to  Section \ref{sec-dynsyst2} for a more detailed discussion of this phenomenon. The presence of a Hopf bifurcation in the special case of the system~\eqref{4dimvirus} without dormancy and recovery, i.e.\ with $r=q=0$ (and with the dormant coordinate ignored) was shown in~\cite{BK98}. We will see in Section \ref{sec-dynsyst2} (via simulation and in special cases also rigorously) that the presence of dormancy, at least for sufficiently large dormancy initiation probabilities $q$, removes the Hopf bifurcation and guarantees that the coexistence equilibrium is stable for all $m>m^*$. This can be seen as one of the `positive' effects of dormancy for the long-term survival of the host population. We will also explain that the presence of recovery has the same effect, at least for large enough recovery parameters $r$ (compared to $v$).
%
%
%In general, the bifurcation analysis of the four-dimensional system~\eqref{4dimvirus} is unfortunately rather involved. 

%For $q=r=0$, the presence of a Hopf bifurcation is known from~\cite{BK98} for the dormancy-free three-dimensional variant
%\[
%\begin{aligned}
%\frac{\d n_{1a}(t)}{\d t} & = n_{1a}(t)\big( \lambda_1-\mu_1-C (n_{1a}(t)+n_{1i}(t))-D n_{2}(t) \big) + r n_{1i}(t),\\
%\frac{\d n_{1i}(t)}{\d t} & = D n_{1a}(t) n_{2}(t) -(r+v) n_{1i}(t), \\
%\frac{\d n_{2}(t)}{\d t} & = mv n_{1i}(t) -  D n_{1a}(t) n_{2}(t) -  \mu_2 n_{2}(t)
%\end{aligned}
%\numberthis\label{3dimvirus}
%\]
%of the dynamical system~\eqref{4dimvirus}, and by continuity it also follows for $q,r>0$ small for~\eqref{4dimvirus}. On the other hand, for $r$ large enough (even for $q=0$) or for $q$ large enough (even for $r=0$), numerical results indicate that the coexistence equilibrium is stable for all $m>m^*$ (see Section~\ref{sec-simulationsvirus} for our simulations and Section~\ref{sec-bifurcations} for related theoretical results). Hence, contact-mediated dormancy (even in the absence of recovery) or recovery (even in the absence of dormancy) can help avoid a periodic behaviour; see also Section~\ref{sec-paradoxofenrichment}.
\end{remark}

For all choices of parameters where the equilibrium $(n_{1a}^*,n_{1d}^*,n_{1i}^*,n_2^*)$ exists with four positive coordinates (coexistence), we can verify the following results, which will be crucial for our main theorems regarding the stochastic process $(\mathbf N_t)_{t \geq 0}$ in the limit $K\to\infty$. 
%The instability of $(0,0,0,0)$ can be interpreted as follows: For large $K$, the virus epidemic does not lead to the eradication of type 1a. Further, a coexistence equilibrium exists if and only if the supercriticality condition~\eqref{viruscoexcond} of the approximating branching processes holds, whereas if the branching processes are subcritical, then $(\barn1a,0,0,0)$ is asymptotically stable. These assertions can be found in Lemma~\ref{lemma-stabilityvirus}. However, the coexistence equilibrium may not be stable, as also anticipated before. Its existence always implies a certain `strong uniform repellence' of the equilibrium $(\barn1a,0,0,0)$, see Proposition~\ref{prop-Lyapunov} and the paragraph following it. Thanks to this property, started from a coordinatewise positive initial condition, the dynamical system will stay bounded away from zero and $n_{1a}(t)+n_{1d}(t)+n_{1i}(t)$ will stay bounded away from $\barn1a$, see Corollary~\ref{cor-persistence}. 
The first one tells us  that starting from an initial condition with only positive coordinates, the mono-type equilibrium $(\barn1a,0,0,0)$ will never be reached under the coexistence condition~\eqref{viruscoexcond}.

\begin{prop}[Non-extinction of the virus epidemic]\label{prop-Lyapunov}
Consider the dynamical system \eqref{4dimvirus}. Assume that \eqref{viruscoexcond} holds, and $(n_{1a}(0),n_{1d}(0),n_{1i}(0),n_2(0)) \in (0,\infty)^4$. Then $(n_{1a}(t),n_{1d}(t),n_{1i}(t),n_2(t))$ does not tend to $(\barn1a,0,0,0)$ as $t\to\infty$, not even along a diverging subsequence of time-points.
\end{prop}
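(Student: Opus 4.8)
The plan is to construct a Lyapunov-type functional which \emph{increases} along trajectories of~\eqref{4dimvirus} near the virus-free equilibrium $(\bar n_{1a},0,0,0)$, thereby preventing convergence to that point under~\eqref{viruscoexcond}. The natural candidate, borrowing from the branching-process heuristics mentioned after Proposition~\ref{lemma-coexistencevirus}, is a linear form in the ``epidemic'' coordinates, $V(\mathbf n) = a\, n_{1d} + b\, n_{1i} + c\, n_2$ with positive constants $a,b,c$ to be chosen. Differentiating along~\eqref{4dimvirus} gives
\[
\dot V = a\big(qDn_{1a}n_2 - (\kappa\mu_1+\sigma)n_{1d}\big) + b\big((1-q)Dn_{1a}n_2 - (r+v)n_{1i}\big) + c\big(mvn_{1i} - (1-q)Dn_{1a}n_2 - \mu_2 n_2\big).
\]
Collecting terms in $n_{1i}$ and $n_2$ (and dropping the manifestly helpful $-a(\kappa\mu_1+\sigma)n_{1d}\le 0$ term, or rather keeping it), one wants to choose $b,c$ so that the $n_{1i}$-coefficient $cmv - b(r+v)$ and the $n_2$-coefficient $\big(aq + (b-c)(1-q)\big)Dn_{1a} - c\mu_2$ are both controlled. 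Setting $c=1$, $b = \tfrac{mv}{r+v}$ kills the $n_{1i}$ term exactly; the $n_2$-coefficient then becomes $\big(aq + (\tfrac{mv}{r+v}-1)(1-q)\big)Dn_{1a} - \mu_2$, which by~\eqref{n1adefvirus} is positive precisely when $n_{1a}$ is close enough to $\bar n_{1a}$ — indeed at $n_{1a}=\bar n_{1a}$ it equals $aqD\bar n_{1a} + \big[(mv-(r+v))(1-q)D\bar n_{1a} - \mu_2(r+v)\big]/(r+v) > 0$ by~\eqref{viruscoexcond}, with room to spare for the $a$-term. So on a neighborhood of $(\bar n_{1a},0,0,0)$ one gets $\dot V \ge \delta(a n_{1d} + n_2) - (\kappa\mu_1+\sigma) a\, n_{1d} + \dots$; choosing $a$ small enough makes $\dot V$ bounded below by a positive multiple of $V$ on that neighborhood (the $n_{1d}$ decay term is the only obstruction and it is linear in $a$).

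The argument then runs by contradiction. Suppose $\mathbf n(t)\to(\bar n_{1a},0,0,0)$ along a subsequence $t_k\to\infty$. First I would note that the trajectory stays in $(0,\infty)^4$ for all $t>0$: positivity of $n_{1a}$ is immediate from the $\sigma n_{1d}+rn_{1i}$ and $-Dn_{1a}n_2$ structure (the right-hand side is bounded below by a linear ODE when $n_{1a}$ is small), and once $n_{1a}>0$ the coordinates $n_{1d},n_{1i},n_2$ cannot hit zero simultaneously and in fact each is fed by a strictly positive source term. Hence $V(\mathbf n(t))>0$ for all $t>0$. Now pick $t_k$ so large that $\mathbf n(t_k)$ lies in the neighborhood $U$ on which $\dot V \ge c_0 V$ for some $c_0>0$. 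Since $V(\mathbf n(t_k))>0$ and $V$ is non-decreasing as long as the trajectory remains in $U$, and since $\dot V\ge c_0 V$ forces $V$ to grow at least exponentially while in $U$, the trajectory must exit $U$; but $V$ can only have increased in the meantime, so it cannot return to values as small as $V(\mathbf n(t_{k+1}))$ which is forced to be tiny by $\mathbf n(t_{k+1})\to(\bar n_{1a},0,0,0)$. This contradicts $t\mapsto$ [behaviour of $V$], once we observe $V$ is continuous and $V(\bar n_{1a},0,0,0)=0$, so $V(\mathbf n(t_k))\to 0$ — impossible if $V$ was already bounded below by $V(\mathbf n(t_{k_0}))>0$ from the first time $t_{k_0}$ the trajectory entered $U$ and started growing. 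I would phrase this cleanly as: once in $U$, $V$ is strictly increasing, so it cannot converge to $0$ along $\{t_k\}$.

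A cleaner packaging avoids tracking exits from $U$: since $n_{1a}(t)\to\bar n_{1a}$ along $t_k$ and, crucially, $(n_{1d},n_{1i},n_2)(t)\to 0$ along $t_k$, for all large $t_k$ the \emph{full} trajectory segment $[t_k,\infty)$ need not stay in $U$, so instead I would argue: the full limit $\mathbf n(t)\to(\bar n_{1a},0,0,0)$ is not assumed, only subsequential, so I should be careful. The robust route: show directly that if $(n_{1d},n_{1i},n_2)(t_k)\to 0$ then for $k$ large, $\mathbf n(t_k)\in U$, and $\dot V\ge c_0 V>0$ on $U$ means $V$ increases; if the trajectory ever leaves $U$ at some $s>t_k$, then at the last entry point before $s$ the value of $V$ was $\ge$ some fixed positive constant $\eta$ depending only on $U$ (namely $\inf$ of $V$ over $\partial U \cap$ the relevant face is $0$, hmm) — this is the delicate point. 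The \textbf{main obstacle} is exactly this: ruling out that the trajectory repeatedly dips into $U$, grows a little, leaves, comes back near the equilibrium, etc. I expect the authors handle it by working on a sub-level set $\{V\le\eta\}\cap(\text{nbhd of }\bar n_{1a})$ and showing this region is \emph{positively invariant} once $n_{1a}$ is pinned near $\bar n_{1a}$ — but since $n_{1a}$ itself is not controlled globally, the real fix is probably to use that $\dot V \ge c_0 V$ holds on the whole set $\{n_{1a}\ge \bar n_{1a} - \epsilon_0\}$ (not just a full neighborhood), combine with the fact that $n_{1a}(t)$ cannot stay below $\bar n_{1a}-\epsilon_0$ forever when the epidemic coordinates are small (from the first equation, $n_{1a}$ relaxes toward $\bar n_{1a}$), and conclude $\liminf V>0$. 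I would flag this invariance/relaxation step as the crux and expect the detailed proof in Section~\ref{sec-preliminaryproofs} to make it precise.
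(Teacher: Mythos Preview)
Your Lyapunov functional is mis-specified, and the claim that ``choosing $a$ small enough makes $\dot V$ bounded below by a positive multiple of $V$'' is false. With $V=an_{1d}+bn_{1i}+cn_2$, $c=1$, $b=\tfrac{mv}{r+v}$, one gets
\[
\dot V \;=\; -a(\kappa\mu_1+\sigma)\,n_{1d}\;+\;0\cdot n_{1i}\;+\;\big[\text{positive near }\bar n_{1a}\big]\,n_2,
\]
so $\dot V-c_0V$ has $n_{1d}$-coefficient $-a(\kappa\mu_1+\sigma+c_0)<0$ and $n_{1i}$-coefficient $-c_0 b<0$, for every $c_0>0$. Since $n_{1d},n_{1i},n_2$ are independent coordinates near the equilibrium, $\dot V\ge c_0 V$ cannot hold on any neighborhood. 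Shrinking $a$ changes the magnitude of the $n_{1d}$-defect but not its sign, and the $n_{1i}$-defect does not involve $a$ at all. Both choices---including $n_{1d}$ in $V$, and tuning $b$ to annihilate the $n_{1i}$-coefficient of $\dot V$---are the wrong moves.

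The paper's remedy is exactly the opposite: take $V=w_{1i}n_{1i}+w_2 n_2$ with \emph{no} dormant term, and choose $w_{1i},w_2>0$ so that \emph{both} the $n_{1i}$- and $n_2$-coefficients of $\langle\nabla V,f\rangle$ are strictly positive, i.e.
\[
\frac{mv}{r+v}\,w_2 \;>\; w_{1i}\;>\;w_2\Big(1+\frac{\mu_2}{(1-q)D(\bar n_{1a}-\eps)}\Big),
\]
which is feasible precisely under~\eqref{viruscoexcond} for $\eps>0$ small. Then $\langle\nabla V,f\rangle>dV$ on $B_\eps((\bar n_{1a},0,0,0))\cap(0,\infty)^4$. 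The paper flags this explicitly as the new ingredient over~\cite{BK98}: the dormant coordinate must be \emph{ignored} in the Chetaev function, not weighted lightly.

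As for the re-entry issue you worried about, the paper does not track excursions. Once $\langle\nabla V,f\rangle>dV>0$ on $B_0=B_\eps\cap(0,\infty)^4$, full convergence to $(\bar n_{1a},0,0,0)$ is ruled out by exponential growth of $V$. For the subsequential statement the paper argues via the $\omega$-limit set $\Omega_0$: any point of $\Omega_0\cap\overline{B_0}$ must lie in $\{\langle\nabla V,f\rangle=0\}$, hence have $n_{1i}=n_2=0$; invariance of $\Omega_0$ then forces $n_{1d}\to 0$, $n_{1a}\to\bar n_{1a}$ along the trajectory through such a point, so $\Omega_0\cap\overline{B_0}\subseteq\{(\bar n_{1a},0,0,0)\}$, and equality was already excluded. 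No control of $n_{1a}$ outside the ball is needed.
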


Since coordinatewise nonnegative solutions of \eqref{4dimvirus} are bounded, Proposition~\ref{prop-Lyapunov} together with a simple compactness argument implies that started from any initial condition $(n_{1a}(0),n_{1d}(0),n_{1i}(0),n_2(0)) \in (0,\infty)^4$, there exists a $\varrho>0$ such that
\[ \liminf_{t \to \infty} \big\Vert (n_{1a}(t),n_{1d}(t),n_{1i}(t),n_2(t))-(\barn1a,0,0,0) \big\Vert_1 \geq \varrho. \numberthis\label{uniformstrongrepeller}\]

This assertion is known as $(\barn1a,0,0,0)$ being a \emph{uniform strong repeller}; cf.~\cite[Corollary 4.2]{BK98} for its analogue in the recovery- and dormancy-free three-dimensional case. The following corollary is analogous to \cite[Lemma 2.3 and Theorem 4.2]{BK98}, but since that paper provides no explicit proof and our setting is more complex, we present a proof for completeness in Section~\ref{sec-preliminaryproofs}, where we also verify the previous Proposition~\ref{prop-Lyapunov}. 

\begin{cor}[Population bounds]\label{cor-persistence}
Consider the dynamical system \eqref{4dimvirus}. Assume that \eqref{viruscoexcond} holds, and $(n_{1a}(0),n_{1d}(0),n_{1i}(0),n_2(0)) \in (0,\infty)^4$. Then $$\liminf_{t\to\infty} n_{j}(t)>0$$ holds for all $j \in \{ 1a, 1d,1i,2 \}$, and $$\limsup_{t\to\infty} \, n_{1a}(t)+n_{1d}(t)+n_{1i}(t) <\barn1a.$$ Further, $$\limsup_{t\to\infty} n_2(t) < \frac{mv\barn1a}{\mu_2}.$$
\end{cor}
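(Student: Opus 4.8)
The plan is to prove Corollary~\ref{cor-persistence} by combining Proposition~\ref{prop-Lyapunov} (via its consequence \eqref{uniformstrongrepeller}) with standard \emph{a priori} bounds and a differential-inequality (comparison) argument on the individual coordinates. I would first record that all coordinatewise nonnegative solutions of \eqref{4dimvirus} stay in a compact set: summing the host equations gives $\frac{\d}{\d t}(n_{1a}+n_{1d}+n_{1i}) \le n_{1a}(\lambda_1-\mu_1) - C n_{1a}(n_{1a}+n_{1d}+n_{1i}) \le (\lambda_1-\mu_1) n_{1a} - C(n_{1a}+n_{1d}+n_{1i})^2 \cdot \tfrac{n_{1a}}{n_{1a}+n_{1d}+n_{1i}}$, which is not immediately clean; more robustly, from the $n_{1a}$-equation alone one gets $\dot n_{1a} \le n_{1a}(\lambda_1-\mu_1-Cn_{1a}) + \sigma n_{1d} + r n_{1i}$, and a slightly better route is to bound the total host mass $H:=n_{1a}+n_{1d}+n_{1i}$: adding the three host ODEs, the $+\sigma n_{1d}$ and $-\sigma n_{1d}$ cancel, the $+rn_{1i}$ and $-rn_{1i}$ cancel, and the $\pm(1-q)Dn_{1a}n_2$, $\pm qDn_{1a}n_2$ terms telescope, leaving $\dot H = n_{1a}(\lambda_1-\mu_1) - CHn_{1a} - v n_{1i} - \kappa\mu_1 n_{1d} \le n_{1a}(\lambda_1 - \mu_1 - CH) \le H(\lambda_1-\mu_1-CH)$, so by the logistic comparison principle $\limsup_{t\to\infty} H(t) \le \bar n_{1a}$ and in particular $n_{1a}, n_{1d}, n_{1i}$ are bounded. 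Then $\dot n_2 = mvn_{1i} - (1-q)Dn_{1a}n_2 - \mu_2 n_2 \le mv n_{1i} - \mu_2 n_2 \le mv\bar n_{1a} + o(1) - \mu_2 n_2$ gives $\limsup_{t\to\infty} n_2(t) \le mv\bar n_{1a}/\mu_2$, so $n_2$ is bounded too.

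Next I would upgrade the $\limsup$ bound on $H$ to the strict inequality $\limsup_{t\to\infty} H(t) < \bar n_{1a}$. Here is where \eqref{uniformstrongrepeller} enters: since the solution does not approach $(\bar n_{1a},0,0,0)$, there is $\varrho>0$ with $\liminf_t \|(n_{1a}(t),n_{1d}(t),n_{1i}(t),n_2(t)) - (\bar n_{1a},0,0,0)\|_1 \ge \varrho$. Combined with $\limsup_t H(t)\le \bar n_{1a}$ and positivity/boundedness of all coordinates, I would argue that the dissipative terms I dropped above, namely $v n_{1i} + \kappa\mu_1 n_{1d}$ (or, if $\kappa=0$, the term $-(1-q)Dn_{1a}n_2$ which forces $n_{1i}$ to stay bounded away from the value it would need for $H$ to reach $\bar n_{1a}$), are bounded below by a positive constant along a suitable sequence, hence $\dot H \le H(\lambda_1-\mu_1-CH) - c$ infinitely often with $c>0$; a standard argument then yields $\limsup_t H(t) \le \bar n_{1a} - \delta$ for some $\delta>0$. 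One has to be slightly careful: \eqref{uniformstrongrepeller} only says the \emph{vector} stays $\varrho$-away, which could a priori be realized by $n_2$ staying large while the host coordinates approach $(\bar n_{1a},0,0)$; but then $\dot n_{1a}$ has the strongly negative term $-Dn_{1a}n_2$, again pushing $n_{1a}$ (hence $H$ along subsequences, together with the logistic structure) strictly below $\bar n_{1a}$. Making this dichotomy into a clean uniform estimate is the step I expect to be the main obstacle, and it is essentially the content of \cite[Lemma 2.3]{BK98}; I would mirror that argument.

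Finally, for the persistence lower bounds $\liminf_{t\to\infty} n_j(t) > 0$ for each $j\in\{1a,1d,1i,2\}$, I would bootstrap from one coordinate to the next using the linear structure of the source terms. By \eqref{uniformstrongrepeller} the vector is bounded away from $(\bar n_{1a},0,0,0)$ and (by the analogous repeller property of the other boundary faces, which follows from instability of $(0,0,0,0)$ and $(\bar n_{1a},0,0,0)$ in Proposition~\ref{lemma-stabilityvirus} together with Proposition~\ref{prop-Lyapunov}) also bounded away from the boundary of $[0,\infty)^4$; I would phrase this as: on any $\omega$-limit point, not all of $n_{1d},n_{1i},n_2$ can be simultaneously small while $n_{1a}$ is near $\bar n_{1a}$. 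Concretely: if $\liminf n_{1a}=:a>0$ then from $\dot n_{1d} = qDn_{1a}n_2 - (\kappa\mu_1+\sigma)n_{1d}$ one gets $\liminf n_{1d} \ge qDa(\liminf n_2)/(\kappa\mu_1+\sigma)$ provided $\liminf n_2>0$, and similarly $\liminf n_{1i} \ge (1-q)Da(\liminf n_2)/(r+v)$; and from $\dot n_2 = mvn_{1i} - (1-q)Dn_{1a}n_2 - \mu_2 n_2 \ge mv n_{1i} - ((1-q)D\bar n_{1a}+\mu_2)n_2$ one gets, via comparison, $\liminf n_2 \ge mv(\liminf n_{1i})/((1-q)D\bar n_{1a}+\mu_2)$; chaining the $n_{1i}$ and $n_2$ bounds, $\liminf n_2 \big(1 - \tfrac{mv(1-q)Da}{(r+v)((1-q)D\bar n_{1a}+\mu_2)}\big) \le 0$, and the coexistence condition \eqref{viruscoexcond} with $a$ replaced by $\bar n_{1a}$ makes the bracket negative, forcing the logical alternative that $\liminf n_2>0$ to be ruled out only if $\liminf n_{1i}=0$, etc.---so the genuine input is that $\liminf n_{1a}>0$, i.e. the host-active population persists. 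That last fact I would obtain from $\dot n_{1a} \ge n_{1a}(\lambda_1-\mu_1 - CH - Dn_2) + \sigma n_{1d} + r n_{1i} \ge \sigma n_{1d}+rn_{1i} - (\text{const})\cdot n_{1a}$ together with the already-established strict bound $\limsup H < \bar n_{1a}$ and the non-triviality of the $\omega$-limit set guaranteed by \eqref{uniformstrongrepeller}: if $n_{1a}$ came arbitrarily close to $0$ along a sequence, then so would (after a controlled delay) $n_{1d},n_{1i}$ and then $n_2$, contradicting $\|\cdot - (\bar n_{1a},0,0,0)\|_1 \ge \varrho$. I would present this as a short cyclic argument, emphasizing that all source terms are linear in the "upstream" variable so each $\liminf$ controls the next, and cite \cite[Theorem 4.2]{BK98} as the blueprint.
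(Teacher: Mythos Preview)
Your overall strategy is close in spirit to the paper's, but the execution has a genuine gap and a circularity that the paper avoids by reversing the order of the steps.

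\textbf{The cyclic bootstrap is vacuous.} Your chain of comparison inequalities yields, after substitution,
\[
\liminf_{t\to\infty} n_2(t)\ \ge\ \gamma\,\liminf_{t\to\infty} n_2(t),\qquad \gamma:=\frac{mv(1-q)Da}{(r+v)\big((1-q)D\bar n_{1a}+\mu_2\big)},\quad a:=\liminf_{t\to\infty} n_{1a}(t).
\]
Even granting $a=\bar n_{1a}$ so that $\gamma>1$ by \eqref{viruscoexcond}, this says only $(1-\gamma)\liminf n_2\le 0$, i.e.\ $\liminf n_2\ge 0$, which is trivial. Your inequalities establish the equivalence $\liminf n_{1i}>0\Leftrightarrow \liminf n_2>0$ (and then $\liminf n_{1d}>0$), but they do \emph{not} rule out both being zero. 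Some extra input is required.

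\textbf{The circularity.} Your step ``upgrade $\limsup H\le\bar n_{1a}$ to a strict inequality'' uses that $vn_{1i}+\kappa\mu_1 n_{1d}$ is bounded below by a positive constant, i.e.\ persistence, while your persistence step invokes the strict $\limsup$ bound. Also, the fallback argument for $\liminf n_{1a}>0$ is incorrect as written: if $n_{1a}\to 0$ along a subsequence and then $n_{1d},n_{1i},n_2$ follow, the solution approaches $(0,0,0,0)$, not $(\bar n_{1a},0,0,0)$, so \eqref{uniformstrongrepeller} is not contradicted. One needs the instability of the origin here.

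\textbf{How the paper closes the gap.} The paper does persistence \emph{first} and obtains the strict $\limsup$ afterwards, breaking the circularity. For persistence it uses an $\omega$-limit set argument rather than a differential-inequality bootstrap: the $\omega$-limit set $\Omega$ is compact and invariant (forward \emph{and} backward) in $[0,\infty)^4$. If $p\in\Omega$ had, say, $n_{1d}(p)=0$ with $n_{1a}(p),n_2(p)>0$, then $\dot n_{1d}|_p=qDn_{1a}(p)n_2(p)>0$ and backward flow would leave $[0,\infty)^4$; analogous observations force $n_{1i}(p)=n_2(p)=0$ whenever any non-$1a$ coordinate vanishes at a point of $\Omega$. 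The forward orbit from such a $p$ then converges to $(\bar n_{1a},0,0,0)$, so by closedness $(\bar n_{1a},0,0,0)\in\Omega$, contradicting Proposition~\ref{prop-Lyapunov}. This gives $\liminf n_j>0$ for all $j$ directly. With persistence in hand, $\liminf(\kappa\mu_1 n_{1d}+vn_{1i})>\eps>0$ is immediate, and the strict bound $\limsup H<\bar n_{1a}$ follows cleanly from $\dot H=n_{1a}(\lambda_1-\mu_1-CH)-\kappa\mu_1 n_{1d}-vn_{1i}$. Your non-strict bound $\limsup H\le\bar n_{1a}$ and the $n_2$-bound are fine and match the paper.
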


The positivity of the $\liminf$'s of the coordinates $n_{1d}(t), n_{1i}(t),n_2(t)$ is called the uniform strong persistence of the system \eqref{4dimvirus}.
By our uniform approximation result, in this case, the macroscopic virus epidemic will also be present for long times (with high probability) in the stochastic model with large enough carrying capacities $K$.

\begin{remark}[Initial conditions in Corollary~\ref{cor-persistence}] \label{remark-positivity}
Let us emphasize that coordinate-wise positivity cannot be replaced by $n_{1a}(0)$ and at least one of the coordinates $n_{1d}(0), n_{1i}(0), n_2(0)$ being positive. For example, if $n_{1a}(0)>0$ and $n_{1d}(0)>0$ but $n_{1i}(0)=n_{2}(0)=0$, then the solution will converge to $(\barn1a,0,0,0)$. Yet, if $(n_{1d}(0),n_{1i}(0),n_2(0))\in [0,\infty)^3$ with $\max \{ n_{1i}(0),n_2(0) \}>0$, then $\min \{ n_{1d}(t),n_{1i}(t),n_2(t) \}>0$ holds for all $t>0$, and hence Proposition~\ref{prop-Lyapunov} and Corollary~\ref{cor-persistence} also hold for such initial conditions. 
\end{remark}

\subsection{The approximating branching process(es) in the initial stochastic phase}
\label{sec-phase1heuristics}
To prepare our main results, we also need to introduce the necessary tools to describe the behaviour of our system
during the initial stochastic phase. Assume that at time 0, a single virion arrives in a resident population of type 1a individuals with population size being close to its equilibrium  $K\barn1a$. Now, a standard approach (following e.g.~\cite{C+19,BT20}) in order to describe the invasion probability of the newly arrived virus particle is to couple its behaviour to that of a suitable branching process that is independent of the behaviour of the resident type 1a population as long as the virus population is still small. However, in our case we run into technical problems with this approach. We now outline the overall strategy, and provide details in subsequent sections.

In our scenario, it seems natural to consider type 1a as the `resident' population, and all three types 1d, 1i, and 2 as `invaders'. %Then, one can e.g.\ use the machinery from the papers~\cite{C+19,BT20}.
Indeed, the goal would then be to find a coupling to a dominated super-critical branching process that is  close enough to the original population so that its survival probability matches, in the limit, the invasion probability of the virion, while the type 1a population size (divided by $K$) stays close to $\barn1a$ with high probability. 

%Naively, following e.g.\ the approach of the papers~\cite{C+19,BT20}, one can think of considering type 1a as a `resident' population and types 1d, 1i, and 2 as `invaders' in this situation. Then, thanks to Freidlin--Wentzell type large deviation results (cf.~Section~\ref{sec-phase1virus} below), initially, the type 1a population size (divided by $K$) stays close to $\barn1a$ with high probability. 

We will see in Section~\ref{sec-phase1virus} below that the three-type population size process $(N_{1d,t}, N_{1i,t}, N_{2,t})$ can indeed be coupled to a three-type linear branching process $$(\widehat{\mathbf N}(t))_{t \geq 0}=((\widehat N_{1d}(t),\widehat N_{1i}(t),\widehat N_{2}(t)))_{t \geq 0}$$ with jump rates obtained by replacing $N_{1a,t}^K$ with the fixed value $\barn1a$ in the rates corresponding to $K \mathbf N_t^K$.  More explicitly, this branching process has the following transition rates for $(x,y,z) \in \N_0^3$: 
\begin{itemize}
\item $(x,y,z) \to (x+1,y,z)$ at rate $q D \barn1a z$,
\item $(x,y,z) \to (x-1,y,z)$ at rate $(\kappa\mu_1+\sigma)x$,
\item $(x,y,z) \to (x,y+1,z-1)$ at rate $(1-q)D\barn1a z$,
\item $(x,y,z) \to (x,y-1,z)$ at rate $ry$,
\item $(x,y,z) \to (x,y-1,z+m)$ at rate $vy$,
\item $(x,y,z) \to (x,y,z-1)$ at rate $\mu_2 z$.
\end{itemize} 

The formal coupling of the two processes will be explained in Section~\ref{sec-phase1virus} (cf.\ the inequalities~\eqref{branchingcouplingvirus} and~\eqref{originalcouplingvirus}). It turns out that with high probability, the coupling holds until the population size $N_{1d,t}+N_{1i,t}+N_{2,t}$ of the infection-related types either reaches $\eps K$,
%(ìnvasion', say at time $T_\eps^2$) for some small (but fixed) $\eps>0$, 
or goes extinct. (Here, $\eps>0$ is a small parameter that we will let tend to zero after carrying out the limit $K\to\infty$.)
% (say at time $T_0^2$). Note that both times and their minimum, $T_\eps^2 \wedge T_0^2$, are stopping times of our Markov population model (endowed with its canonical filtration). 
After successful invasion,
%Once all sub-population sizes are of order $K$, 
the `early' stochastic phase of the epidemic ends, and the `macroscopic phase' starts, in which the rescaled (four-dimensional) population size process $\mathbf N_t^K$ can  be approximated by the solution of the dynamical system~\eqref{4dimvirus} with suitable initial condition. 
%In the present paper we derive results about this phase that hold independently of the bifurcation behaviour of the dynamical system. 

However, in our setting, when one wants to treat the early stochastic phase, a technical problem emerges. 
The proof techniques of the papers~\cite{C+19,BT20} strongly rely on the irreducibility of the mean matrix of the branching process under consideration. This condition implies that the branching process survives with positive probability when started from any initial condition with at least one positive coordinate.
Yet, the mean matrix of $(\widehat{\mathbf N}(t))_{t \geq 0}$ is given by
\[ 
J = \begin{pmatrix}
-\kappa\mu_1-\sigma & 0 & 0\\
0 & -r-v & mv \\
q D\barn1a  & (1-q)D\barn1a & -((1-q)D\barn1a+\mu_2) 
\end{pmatrix},
\numberthis\label{Jdefvirus}
\]
which is {\em not} irreducible. Further, we immediately see that $-\kappa\mu_1-\sigma<0$ is an eigenvalue of $J$ (with left eigenvector $(1,0,0)^T$). Hence, our branching process $(\widehat{\mathbf N}(t))_{t \geq 0}$ dies out whenever it starts from an initial condition of the form $(n,0,0)$, $n\in\N$, in other words, its dormant coordinate is sub-critical when being on its own. 

To be still able to carry through the standard program, we thus consider a further two-dimensional branching process with the same rates as for the $y$- and $z$-coordinates of $(\widehat{\mathbf N}(t))_{t \geq 0}$, ignoring all jumps of the form $(x,y,z)\to (x+1,y,z)$ and $(x,y,z) \to (x-1,y,z)$. Started from identical initial conditions, this is the same process as the two-dimensional projection $((\widehat N_{1i}(t),\widehat N_2(t))_{t \geq 0}$ of the branching process $(\widehat{\mathbf N}(t))_{t \geq 0}$. It has mean matrix 
\[ J_2= \begin{pmatrix}
-r-v &mv  \\
  (1-q)D\barn1a& -((1-q)D\barn1a+\mu_2),
\end{pmatrix} \numberthis\label{J2def} \]
from which we see that $(\widehat{\mathbf N}(t))_{t \geq 0}$ is supercritical if and only if $((\widehat N_{1i}(t),\widehat N_2(t))_{t \geq 0}$ is supercritical. 

The idea now is to change our perspective and to consider type 1d as `resident' instead of `invader'. We will then verify that initially, the two-type rescaled population $(N_{1a,t},N_{1d,t})$ stays close to its `equilibrium' $(\barn1a,0)$, so that $(N_{1i,t},N_{2,t})$ can be approximated by the two-type branching process $(\widehat N_{1i}(t),\widehat N_2(t))$, whose mean matrix $J_2$ {\em is} irreducible.  

Now, we have the following simple lemma, the proof of which will be found in Section~\ref{sec-branchingpreliminary}.
\begin{lemma}\label{lemma-lambdatilde}
The eigenvalues of $J_2$ are real, and the largest eigenvalue is given as %$\widetilde \lambda$ defined via
\[ \widetilde \lambda = \smfrac{- (r+v+(1-q)D\barn1a+\mu_2) + \sqrt{((r+v+(1-q)D\barn1a+\mu_2)^2-4((r+v-mv)(1-q)D\barn1a+(r+v)\mu_2)}}{2}. \numberthis\label{lambdatildedefvirus} \] 
This is positive if and only if the coexistence condition~\eqref{viruscoexcond} holds. 
\end{lemma}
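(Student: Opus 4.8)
The statement is about the $2\times 2$ matrix $J_2$ in \eqref{J2def}, so the plan is to simply compute its characteristic polynomial and read off everything. Write $a=-(r+v)$, $d=-((1-q)D\bar n_{1a}+\mu_2)$, $b=(1-q)D\bar n_{1a}$ and $c=mv$, so that $J_2=\left(\begin{smallmatrix} a & b\\ c & d\end{smallmatrix}\right)$ and the characteristic polynomial is $\lambda^2-(a+d)\lambda+(ad-bc)$. Its discriminant equals $(a+d)^2-4(ad-bc)=(a-d)^2+4bc$, and since $bc=mv(1-q)D\bar n_{1a}\ge 0$ (here we use $\lambda_1>\mu_1$, so $\bar n_{1a}>0$, together with $v>0$, $D>0$, $q\in(0,1)$, $m\ge 0$), this discriminant is nonnegative. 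Hence both eigenvalues of $J_2$ are real, which is the first assertion.

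Next I would identify $\widetilde\lambda$ as the larger root, i.e.\ $\widetilde\lambda=\tfrac12\big(a+d+\sqrt{(a-d)^2+4bc}\big)$. Substituting $a+d=-(r+v+(1-q)D\bar n_{1a}+\mu_2)$ and rewriting the radicand as $(a+d)^2-4(ad-bc)$ with $ad-bc=(r+v)\big((1-q)D\bar n_{1a}+\mu_2\big)-mv(1-q)D\bar n_{1a}=(r+v-mv)(1-q)D\bar n_{1a}+(r+v)\mu_2$ yields exactly the closed form \eqref{lambdatildedefvirus}. This is just bookkeeping, and the only point needing a bit of care is matching the displayed expression there term by term.

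For the positivity criterion, observe that $-(a+d)=r+v+(1-q)D\bar n_{1a}+\mu_2>0$, so $\widetilde\lambda>0$ is equivalent to $\sqrt{(a-d)^2+4bc}>-(a+d)$; both sides being nonnegative, squaring gives the equivalent inequality $(a-d)^2+4bc>(a+d)^2$, i.e.\ $bc>ad$, i.e.\ $\det J_2=ad-bc<0$. (Alternatively: since the trace $a+d$ of $J_2$ is always negative, $\widetilde\lambda>0$ holds iff the eigenvalues have opposite signs iff $\det J_2<0$.) Finally, $ad-bc<0$ reads $(r+v)\big((1-q)D\bar n_{1a}+\mu_2\big)<mv(1-q)D\bar n_{1a}$, which rearranges precisely to the coexistence condition $(mv-(r+v))(1-q)D\bar n_{1a}>\mu_2(r+v)$ of \eqref{viruscoexcond}. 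This completes the proof; I do not expect any genuine obstacle here beyond careful algebraic rearrangement.
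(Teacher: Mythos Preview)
Your proposal is correct and follows essentially the same route as the paper: compute the characteristic polynomial of $J_2$, write the larger root explicitly, and use the negative trace to reduce $\widetilde\lambda>0$ to $\det J_2<0$, which is then rewritten as \eqref{viruscoexcond}. One minor remark: your argument for reality of the eigenvalues via the discriminant identity $(a-d)^2+4bc\ge 0$ (using $bc=mv(1-q)D\bar n_{1a}\ge 0$) is in fact cleaner than the paper's, which only argues reality in the case $\det J_2<0$ and otherwise relies implicitly on the displayed formula making sense.
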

This lemma establishes a correspondence between the branching processes and the dynamical system~\eqref{4dimvirus}: The condition $\widetilde\lambda>0$ holds, i.e., the branching processes are supercritical (and thus have a nonzero survival probability) if and only if the dynamical system has a coordinatewise positive coexistence equilibrium. %Note that if $\widetilde\lambda=0$, then both branching processes are critical, and for $\widetilde\lambda<0$ they are subcritical. 
Since $-\kappa\mu-\sigma<0$, the largest eigenvalue of $J$ is always equal to $\widetilde\lambda$ if $\widetilde\lambda \geq 0$, otherwise it may be equal to $-\kappa\mu-\sigma$. In contrast, if the strict reverse inequality of~\eqref{viruscoexcond} holds, then $\widetilde\lambda<0$, and thus the branching processes are subcritical and we see quick extintion. %In particular, with high probability as $K \to \infty$, $(N_{1i,t}, N_{2,t})$ will die out before $(N_{1a,t}^K,N_{1d,t}^K)$ leaves a small neighbourhood of the `two-coordinate equilibrium size' $(\barn1a,0)$, and then eventually $N_{1d,t}$ will also be absorbed at zero.
Finally, if $\widetilde\lambda=0$, when \eqref{viruscoexcond} holds with an equality, the branching processes are critical and eventually go extinct, yet only after a potentially long time. This case is difficult to analyse, and we exclude it in the present paper.

%Under condition~\eqref{viruscoexcond}, the principal eigenvalue of $J$ is positive, hence the branching process $(\widehat{\mathbf N}(t))_{t \geq 0}$ is supercritical, and thus so is $(\widehat{\mathbf N}(t))_{t \geq 0}$. %This implies that the two-dimensional branching process survives with positive probability, giving rise to the second phase of epidemic where $\mathbf N_t^K$ can be approximated by a solution of the dynamical system \eqref{4dimvirus}. 
%On the other hand, if the strict reverse inequality of~\eqref{viruscoexcond} holds, then the branching processes are subcritical. In particular, with high probability as $K \to \infty$, $(N_{1i,t}, N_{2,t})$ will die out before $(N_{1a,t}^K,N_{1d,t}^K)$ leaves a small neighbourhood of the `two-coordinate equilibrium size' $(\barn1a,0)$, and then eventually $N_{1d,t}$ will also be absorbed at zero.

%\[ \widetilde \lambda = \smfrac{- (r+v+(1-q)D\barn1a+\mu_2) + \sqrt{((r+v+(1-q)D\barn1a+\mu_2)^2-4((r+v-mv)(1-q)D\barn1a+(r+v)\mu_2)}}{2}. \numberthis\label{lambdatildedefvirus} \]
%Note also that if~\eqref{viruscoexcond} is not satisfied, then all eigenvalues of $J$ are nonpositive. $\widetilde \lambda$ is still the largest eigenvalue of $J_2$, while it may happen that $-\kappa\mu_1-\sigma$ is the largest eigenvalue of $J$.

We denote the {\em extinction probability} of the three-type branching process $(\widehat{\mathbf N}(t))_{t \geq 0}$, given that the branching process is started with a single virion (type 2 individual) and no infected or dormant type 1 individuals at time $t=0$, by
\[ s_2  := \P\big( \exists t < \infty \colon \widehat N_{1d}(t)+\widehat N_{1i}(t)+\widehat N_{2}(t)=0 \big|  (\widehat N_{1d}(0),\widehat N_{1i}(0),\widehat N_{2}(t))=(0,0,1) \big), \numberthis\label{qdefvirus} \]
This extinction probability $s_2$ can be computed by a standard first-step analysis, and for $\widetilde\lambda>0$ it is always less than 1, whereas for $\widetilde\lambda<0$ it equals 1, see Section~\ref{sec-1dormantinfected} for details.

\subsection{The phases of a `successful' virus epidemic}
\label{ssn:phases}

I) A `successful' virus epidemic consists of a first, stochastic phase, in which a newly arriving virion invades the resident population so that the free virions and the infected cells in total reach a population size `visible' on the order of the carrying capacity $K$. We formalize this time-point for $(\mathbf N_t)_{t \geq 0}$ by introducing a suitable stopping time  (with respect to its canonical filtration). For $\eps \geq 0$, we define
\[ T_\eps^2: = \inf \big\{ t \geq 0 \colon N_{1i,t}+N_{2,t} = \lfloor \eps K \rfloor \big\}. \numberthis\label{Teps2def} \]
Considering $\eps = 0$, the stopping time $T_0^2$ is the time of \emph{extinction of the epidemic}, i.e.\ when all infected individuals and virions have disappeared.

If $\eps>0$ is small, then with high probability as $K\to\infty$, until time $T_\eps^2 \wedge T_0^2$ the rescaled population size $N_{1a,t}^K$ of type 1a stays close to the equilibrium $\barn1a$, and the rescaled population size $N_{1d,t}^K$ of type 1d stays near zero (up to some error terms that are at most proportional to $\eps$). This enables us to approximate the population size $(N_{1i,t},N_{2,t})$ of infected individuals and virions via the two-type branching process $(\widehat N_{1i}(t),\widehat N_{2}(t))$ up to time $T_\eps^2 \wedge T_0^2$. With high probability in the limit $K\to\infty$ followed by $\eps \downarrow 0$, on the event $\{ T_0^2 < T_\eps^2 \}$ of an `unsuccessful invasion', the branching process also dies out at a time close to $T_0^2$. The probability of this event approaches the extinction probability $s_2$ of the branching process as $K\to\infty$. For $\widetilde\lambda>0$, in other words, for $s_2<1$, on the event 
$\{ T_\eps^2 < T_0^2 \}$ of a `successful invasion', the branching process typically also survives, and by irreducibility, typically both populations reach size $\eps K$ after roughly $\log K/\widetilde \lambda$ amount of time.

%\begin{remark} In Section~\ref{sec-phase1heuristics} we will also explain why we work with a \emph{two-type} branching process instead of a three-type one including an additional coordinate for the dormant population. In a nutshell, the reason is that the mean matrix of the three-type branching process in our setting is not irreducible. This contrasts with many stochastic individual-based invasion models, in particulars the ones studied in the papers~\cite{C+16,C+19,BT19,BT20} that we will cite in our proofs. Instead, we will treat type 1d similarly to type 1a, showing that its rescaled population size $N_{1d,t}^K$ stays close to its `equilibrium size' 0 up to the time $T_\eps^2 \wedge T_0^2$ with high probability, just as $N_{1a,t}^K$ stays close to the equilibrium size $\barn1a$. \end{remark}

II) After successful invasion, the second, deterministic phase begins, where the system can be approximated by the deterministic dynamical system \eqref{4dimvirus}. Note that the system could as well already be started in this second phase, for example by the artificial exposure of the host cells to large numbers of virions, as e.g. in the experiments in \cite{B15}.  In the deterministic phase, we could further distinguish between an `early phase', in which the reproduciton of viruses does not yet play a role (as is the scenario considered in \cite{GW15}), and its `long-term' behaviour, in which a persistent epidemic may emerge. 

To describe the latter, we further define, for $\beta>0$ the {\em persistence set}
\[
\begin{aligned} S_\beta &:= \Big\{ (\widetilde n_{1a},\widetilde n_{1d},\widetilde n_{1i},\widetilde n_2) \in (0,\infty)^4 \colon n_{\upsilon} \geq \beta, \forall \upsilon \in \{ 1a,1d,1i,2\}, \widetilde n_{1a}+\widetilde n_{1d}+\widetilde n_{1i} \leq \barn1a-\beta, \\ & \qquad\widetilde n_2 \leq \frac{mv\barn1a}{\mu_2}-\beta \Big\}.\end{aligned}  \numberthis\label{Sbetadefvirus}
\]
Inside this persistence set, all sub-populations will have size at least $\beta >0$, and the total host population size will already be below its virus-free equilibrium. The last condition ensures that the virus load is not too high, thus this set describes a scenario with a currently persistent but not overwhelming virus infection. Note that $S_\beta$ is always well-defined and non-empty if $\beta \in (0,\barn1a\min \{ 1, \frac{mv}{\mu_2} \})$ (since we assumed that $\lambda_1>\mu_1$). Further,
\[ 
T_{S_\beta} := \inf \{ t\geq 0 \colon (N_{1a,t}^K,N_{1d,t}^K,N_{1i,t}^K,N_{2,t}^K) \in S_\beta \} \numberthis\label{TSbetadefvirus}
\]
is the hitting time for the persistence region $S_\beta$ (it is again a stopping time for the canonical filtration). 

Thanks to Corollary~\ref{cor-persistence}, the solution $(\mathbf n(t))_{t \geq 0}$ of the dynamical system~\eqref{4dimvirus} started from suitable initial conditions will eventually reach the set $S_\beta$ defined in~\eqref{Sbetadefvirus} if $\beta>0$ is sufficiently small. Our goal is to show that the same holds with high probability for our rescaled stochastic process $(\mathbf N_t^K)_{t \geq 0}$ conditional on a successful invasion. Theorem~\ref{thm-successvirus} below tells that this is indeed the case for $\widetilde\lambda>0$, moreover, it follows from the proof of this theorem that $T_{S_\beta}-T_\eps^2$ can be bounded by a finite positive constant depending on $\eps$ (with high probability conditional on the successful invasion). Thus, while the duration of the first, stochastic phase scales like constant times $\log K$, the second, deterministic phase takes $O(1)$ time, just as in the general stochastic invasion models of adaptive dynamics introduced in~\cite{C06}.

Moreover, $(\mathbf N_t^K)_{t \geq 0}$ does not only spend a short period of time in the set $S_\beta$, but for $T>0$ large enough and independent of $K$, $\mathbf N_{T_{S_\beta}+T}^K \in S_\beta$ holds with high probability as $K \to \infty$ (cf.~Corollary~\ref{cor-longterm} below). Note that the strong Markov property of $(\mathbf N_t^K)_{t \geq 0}$ allows us to consider uniform approximations after such random stopping times.

One can expect that starting from time $T_{S_\beta}$, the stability of the coexistence equilibrium $(n_{1a}^*,n_{1d}^*,n_{1i}^*,n_2^*)$ already matters for the dynamics of our stochastic process. In Figure~\ref{fig-phases} we illustrate these phases of a successful virus invasion in the particular scenario when the coexistence equilibrium $(n_{1a}^*,n_{1d}^*,n_{1i}^*,n_2^*)$ is globally asymptotically stable and thus between $T_{S_\beta}$ and $T_{S_\beta}+T$, the rescaled population size process $\mathbf N_{t}^K$ stays close to this equilibrium. %, (ii) when the coexistence equilibrium is unstable, giving rise to approximately periodic behaviour also in the stochastic system after time $T_{S_\beta}$. %Here, time $T>0$ that specifies the range of the uniform approximation of $(\mathbf N_t)_{t \geq 0}$ after $T_{S_\beta}$ is assumed large but independent of $K$; this $T$ also appears in Corollary~\ref{cor-longterm}. 

%{\color{blue} We need to think again about these figures. Also, the caption is far too long, some bits could be moved into this paragraph)}

%ie.\ the set of states bounded away from zero coordinatewise by at least $\beta$, with the sum of the active coordinates bounded away from the one-type equilibrium population size $\barn1a$ by at least $\beta$, and with virus coordinate being bounded away from $\frac{mv\barn1a}{\mu_2}$ by at least $\beta$.
%{\color{blue}Diese Beschreibung verstehe ich nicht, below und above etc gehen hier durcheinander...}

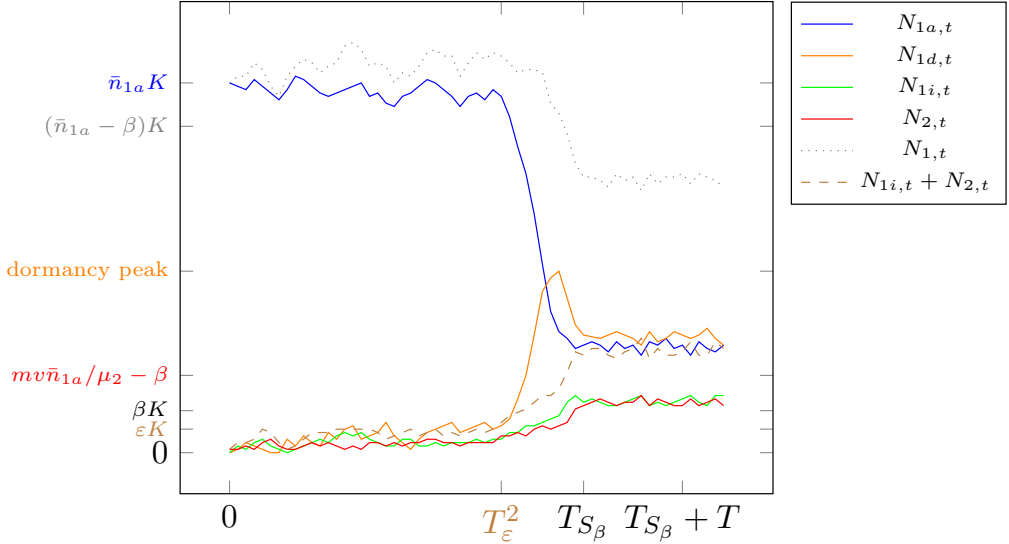
\begin{figure}
    \centering
%\begin{subfigure}{1\textwidth}
%\caption{The case when the coexistence equilibrium of~\eqref{4dimvirus} is globally asymptotically stable.}
\begin{tikzpicture}[scale=1.15]
\begin{axis}[
xtick={0,3.3,4.3,5.5},
xticklabels={0,\color{brown}$T_\eps^2$,$T_{S_\beta}$,$T_{S_\beta}+T$},
ytick={0,0.0701,0.125,0.23,0.54,0.971,1.1},
yticklabels={0,\color{brown}\tiny$\eps K$,\color{black}\tiny$\beta K$\color{black},\color{red}\tiny$m v\barn1a/\mu_2-\beta$,\color{orange}\tiny{dormancy peak}\color{black},\color{gray}\tiny$(\barn1a-\beta)K$,\color{blue}\tiny$\barn1a K$},
% align right
%legend style={tiny,
%cells={anchor=west},
legend pos=outer north east,
%},
]
\addplot [blue] table {resident.dat};
\addplot [orange] table {dormant.dat};
\addplot [green] table {infected.dat};
\addplot [red] table {virus.dat};
\addplot [gray,dotted] table {alltrait1.dat};
\addplot [brown,dashed] table {infected+virus.dat};
%\addplot [brown,dashed] table {allinfection.dat};
\legend{\tiny $N_{1a,t}$,  \tiny $N_{1d,t}$, \tiny $N_{1i,t}$, \tiny $N_{2,t}$, \tiny $N_{1,t}$, \tiny $N_{1i,t}+N_{2,t}$,};
%\legend{A,B,C,D}
\end{axis}
\end{tikzpicture}
%\end{subfigure}
\begin{comment}{
\begin{subfigure}{1.\textwidth}
\caption{The case when the coexistence equilibrium is unstable and a stable periodic trajectory attracts the coordinatewise positive solutions.}\label{subfigure-periodic}
\begin{tikzpicture}[scale=1.15]
\begin{axis}[
xtick={0,3.3,4.3,5.8},
xticklabels={0,\color{brown}$T_\eps^2$,$T_{S_\beta}$,$T_{S_\beta}+T$},
ytick={0,0.0701,0.125,0.25,0.54,0.971,1.1},
yticklabels={0,\color{brown}\tiny$\eps K$,%\color{orange}\tiny{$b\eps K\to--$},
\color{black}\tiny$\beta K$\color{black},
%\color{green}\tiny{$n_{1i}^*K$=}\color{red}\tiny{$n_{2}^*K\to--$},
\color{red}\tiny$m v\barn1a/\mu_2-\beta$,%\color{blue}\tiny{$n_{1a}^* K=$}\color{brown}\tiny{$(n_{1i}^*+n_{2}^*)K$},\color{orange}\tiny{$n_{1d}^*K\to-----$},
\color{orange}\tiny{dormancy peak}\color{black},\color{gray}\tiny$(\barn1a-\beta)K$,\color{blue}\tiny$\barn1a K$},
% align right
%legend style={tiny,
%cells={anchor=west},
legend pos=outer north east,
%},
]
\addplot [blue] table {residentp.dat};
\addplot [orange] table {dormantp.dat};
\addplot [green] table {infectedp.dat};
\addplot [red] table {virusp.dat};
\addplot [gray,dotted] table {alltrait1p.dat};
\addplot [brown,dashed] table {infected+virusp.dat};
%\addplot [brown,dashed] table {allinfection.dat};
\legend{\tiny $N_{1a,t}$,  \tiny $N_{1d,t}$, \tiny $N_{1i,t}$, \tiny $N_{2,t}$, \tiny $N_{1,t}$, \tiny $N_{1i,t}+N_{2,t}$,};
%\legend{A,B,C,D}
\end{axis}
\end{tikzpicture}
\end{subfigure}}\end{comment}
 \caption{\small Schematic illustration of the behaviour of $(\mathbf N_t^K)_{t \geq 0}$ in case of a successful invasion.  We sketch the possible scenario of stable coexistence. 
    Each label 
    %on the $x$- or $y$-axis 
    has the same colour as the graph of the corresponding subpopulation. Black labels correspond to multiple types. The orange `dormancy peak' corresponds to a similar peak described in \cite{GW15} for large $q$ (see Figure~\ref{fig-GW-peak} for a quantitative result), and gray dotted curves depict to the total host size $N_{1,t}=N_{1a,t}+N_{1d,t}+N_{1i,t}$.%\color{red} Keep both or only the first one? \color{black} %The fact that $n_{1i}=n_2$ is just a coincidence. 
    \\ \smallskip
    %The total population size of types 1i and 2 at time $T_\eps^2$ is roughly $\eps K$. 
    %$T_{\eps}^2$ separates the first, stochastic phase, where the branching process approximation of these two sub-populations is valid, from the second, deterministic phase, where $\mathbf N_t^K$ is close to the dynamical system~\eqref{4dimvirus}.
   % Until time $T_\eps^2$, $(N_{1a,t}^K,N_{1d,t}^K)$ stays close to $(\barn1a,0)$ (cf.~Proposition~\ref{prop-firstphasevirus} below). %At time $T_{S_\beta}$, the four-type rescaled population size process reaches the persistence set $S_\beta$. It follows from the proof of our main results that as first $K\to\infty$ and then $\eps \downarrow 0$, conditional on a successful invasion, $T_\eps^2/\log K$ tends to $1/\widetilde \lambda$, whereas $(T_{S_\beta}-T_{\eps}^2)/\log K$ vanishes. Moreover, if $T$ is large but independent of $K$, then $\mathbf N_t^K$ is again in the set $S_\beta$ at time $T_{S_\beta}+T$, with high probability. \\\\
     %However, both cases have the common property that after a time interval of length $T$, where $T$ is large but does not scale with $K$, following $T_{S_\beta}$, the rescaled population size process $\mathbf N_t^K$ is again in the set $S_\beta$ (cf.~Corollary~\ref{cor-longterm} below). %I.e., all coordinates are larger than $\beta$, the sum of the type 1 coordinates is below $\barn1a-\beta$, and the type 2 coordinate is below $mv\barn1a/\mu_2-\beta$. 
     }
\label{fig-phases}
\end{figure}

%\begin{remark}[Relation to stochastic epidemic models]
%
%\end{remark}{\color{blue}Das auch auf später verschieben, oder das 2-Phasen-Bild schon weiter oben bringen?}

%%%%%SECTION 3%%%%%%%%%%%%%%%%

%\section{Main results and discussion}\label{sec-results0}
%Now, in Section~\ref{sec-resultsvirus} we present our main results, and in Section~\ref{sec-discussionvirus} we discuss related subjects such as the interpretation of the supercriticality criterion~\eqref{viruscoexcond}, the notion of the basic reproduction number of the epidemic and relation to stochastic epidemic models, and the extinction probability of the approximating branching process(es).

\subsection{Statement of the main results}\label{sec-resultsvirus}
%Note that all times $t \geq T_0^2$, there are no type 2 individuals left, and hence type 1a individuals cannot turn into type 1i or 1d ones (and eventually type 2 ones). Further, at such times $t$, there are also no type 1i individuals left, and hence no type 2 individuals will be produced in the future. Type 1d individuals may still exist on the time interval $[T_0^2,\infty)$, however, in absence of type 2, their amount is monotone decreasing. Hence, eventually all type 1d individuals will resuscitate or die, causing extinction of type 1d.
Now, we formulate our main results. 
%\[ U_0^2 := \inf \big\{ t \geq 0 \colon (N_{1d,t}^K,N_{1i,t}^K,N_{2,t}^K)=(0,0,0) \big\}. \numberthis\label{U0defvirus} \]
Our first theorem states that the probability of succesful invasion of the virus particles (that is, when a macroscopic epidemic emerges and becomes persistent), i.e.\  of reaching the set $S_\beta$ for some $\beta>0$ before extinction of the invaders, converges to the survival probability $1-s_2$ of the approximating branching process as $K \to \infty$. For this, recall that the eigenvalue $\widetilde\lambda$ from~\eqref{lambdatildedefvirus} is positive if and only if~\eqref{viruscoexcond} holds. Also recall the extinction probability $s_2$ of the approximating branching process started from $(0,0,1)$ from~\eqref{qdefvirus}. 
\begin{theorem}\label{thm-viruscoexprob}
Assume that $\widetilde\lambda \neq 0$. Assume further that
\[ N^K_{1a}(0) \underset{K \to \infty}{\to} \barn1a  \]
almost surely and 
\[ (N^K_{1d}(0),N^K_{1i}(0),N^K_{2}(0))=(0,0,\smfrac{1}{K}). \]
Then for all sufficiently small $\beta>0$, we have
\[ \lim_{K \to \infty} \mathbb P \Big( T_{S_\beta} < T_0^2 \Big) = 1- s_2. \numberthis\label{successvirus} \]
\end{theorem}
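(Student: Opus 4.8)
The plan is to follow the classical three-phase strategy of stochastic invasion dynamics (as in \cite{C06,C+19,BT20}), adapted to our situation where the naive branching-process approximation fails because the mean matrix $J$ in \eqref{Jdefvirus} is reducible. First I would set up the coupling between the stochastic process $(\mathbf N_t)_{t\ge 0}$ and the two-type branching process $(\widehat N_{1i}(t),\widehat N_2(t))_{t\ge 0}$ with irreducible mean matrix $J_2$ from \eqref{J2def}. Concretely, fix a small $\eps>0$ and work up to the stopping time $T^2_\eps\wedge T^2_0$. The key a priori estimate is that, with probability tending to $1$ as $K\to\infty$, the rescaled resident coordinates satisfy $\sup_{t\le T^2_\eps\wedge T^2_0}|N^K_{1a,t}-\bar n_{1a}|\le b\eps$ and $\sup_{t\le T^2_\eps\wedge T^2_0} N^K_{1d,t}\le b\eps$ for a suitable constant $b$; this follows because, as long as $N_{1i,t}+N_{2,t}\le \eps K$, the type-$1a$ population is a logistic birth-death process perturbed by $O(\eps K)$ rates, and $N_{1d}$ is dominated by a linear birth-death process with immigration rate $O(\eps)$ per capita of the resident population. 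On this good event, the rates of $(N_{1i,t},N_{2,t})$ lie between those of two branching processes obtained by replacing $\bar n_{1a}$ with $\bar n_{1a}\pm b\eps$, which yields the sandwiching inequalities \eqref{branchingcouplingvirus}--\eqref{originalcouplingvirus}.

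Next I would treat the branching phase. Since $J_2$ is irreducible and, by Lemma \ref{lemma-lambdatilde}, its Perron root $\widetilde\lambda$ is $>0$ exactly under \eqref{viruscoexcond} (which holds since we assume $\widetilde\lambda\neq 0$ and are in the nontrivial regime), standard multitype branching theory (e.g.\ \cite[Ch.~V]{AN72}) gives: started from $(0,1)$, the process $(\widehat N_{1i},\widehat N_2)$ either dies out — with probability $s_2<1$ computed by first-step analysis as in \eqref{qdefvirus} — or grows like $e^{\widetilde\lambda t}$ times a positive random variable (Kesten--Stigum), reaching total size $\eps K$ at a time $\approx (\log K)/\widetilde\lambda$. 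Letting first $K\to\infty$ and then $\eps\downarrow 0$, the continuity of the extinction probability of the $\eps$-perturbed branching processes in $\eps$ lets me conclude that $\mathbb P(T^2_\eps<T^2_0)\to 1-s_2$ and that on $\{T^2_\eps<T^2_0\}$ the infected-plus-virion population genuinely reaches $\eps K$ with high probability, while on $\{T^2_0<T^2_\eps\}$ it goes extinct in $o(\log K)$ time before the resident coordinates have left their neighbourhood.

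For the deterministic phase I would invoke the functional law of large numbers \cite[Theorem 11.2.1]{EK} on the event $\{T^2_\eps<T^2_0\}$: by the strong Markov property at $T^2_\eps$, the rescaled process $(\mathbf N^K_{T^2_\eps+t})_{t\ge 0}$ starts near a configuration with $n_{1a}\approx\bar n_{1a}$, $n_{1d}=O(\eps)$, and $n_{1i}+n_2=\eps$, all coordinates strictly positive; hence it is uniformly (on compact time intervals) close to the solution $(\mathbf n(t))$ of \eqref{4dimvirus} from that initial condition. By Corollary \ref{cor-persistence} and Remark \ref{remark-positivity}, that solution enters the open set $S_\beta$ (for $\beta$ small enough) in a finite time $t_\beta(\eps)$ independent of $K$; a standard Gronwall/closeness argument then shows $T_{S_\beta}-T^2_\eps\le t_\beta(\eps)+o(1)$ with high probability. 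Combining, $\mathbb P(T_{S_\beta}<T^2_0)\ge \mathbb P(T^2_\eps<T^2_0)-o(1)\to 1-s_2$, and the reverse inequality is immediate since $S_\beta\subset\{N_{1i}+N_2>0\}$ forces $T_{S_\beta}\le$ first hitting of a macroscopic level, so $\{T_{S_\beta}<T^2_0\}\subset\{T^2_\eps<T^2_0\}$ up to a null-probability discrepancy for $\eps$ small; taking $\eps\downarrow 0$ gives \eqref{successvirus}.

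I expect the main obstacle to be the first step: making the coupling rigorous \emph{despite} the reducibility of $J$, i.e.\ controlling the dormant coordinate $N_{1d}$ and the resident coordinate $N_{1a}$ simultaneously and uniformly up to $T^2_\eps\wedge T^2_0$, and in particular arguing that the subcritical dormant coordinate does not spoil the supercritical growth of $(N_{1i},N_2)$ nor does feedback from $N_{1d}$ push $N_{1a}$ out of its $b\eps$-neighbourhood. This is precisely why one changes perspective — treating $\{1a,1d\}$ as the joint ``resident'' block and $\{1i,2\}$ as the ``invader'' block — and the delicate point is the quantitative estimate that the resident block stays within $O(\eps)$ of $(\bar n_{1a},0)$ with probability $\to 1$; a secondary technical nuisance is the exchange of the limits $K\to\infty$ and $\eps\downarrow 0$, which requires uniform-in-$\eps$ control of the branching-process survival probabilities and of the hitting time $t_\beta(\eps)$.
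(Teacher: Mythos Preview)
Your proposal is correct and follows essentially the same route as the paper: treat $\{1a,1d\}$ jointly as the resident block so that the invader process $(N_{1i},N_2)$ can be sandwiched by two-type branching processes with irreducible mean matrix $J_2$, use continuity of the extinction probabilities in $\eps$ to get $\mathbb P(T^2_\eps<T^2_0)\to 1-s_2$, then apply the strong Markov property at $T^2_\eps$ together with the functional law of large numbers and Corollary~\ref{cor-persistence}/Remark~\ref{remark-positivity} to reach $S_\beta$ in $O(1)$ further time. The paper carries out your ``main obstacle'' via a Freidlin--Wentzell exit-time bound for the coupled resident processes combined with a Dynkin-formula argument (constructing a linear test function $g$ from a Perron--Frobenius eigenvector of $J_2$) to control $\E[T^2_\eps\wedge T^2_0\wedge Q_{b\eps}]$; this is the concrete mechanism behind your heuristic that the perturbation is $O(\eps)$, and it is indeed where most of the technical work lies.
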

The next theorem shows that in case of a macroscopic/persistent epidemic, the time until reaching the set $S_\beta$ (which includes the coexistence equilibrium $(n_{1a}^*,n_{1d}^*,n_{1i}^*,n_2^*)$ of the dynamical system) behaves like $\log K/\widetilde\lambda$.
\begin{theorem}\label{thm-successvirus}
Under the assumptions of Theorem~\ref{thm-viruscoexprob}, in case \eqref{viruscoexcond} holds (equivalently, $s_2<1$), for all sufficiently small $\beta>0$ we have that on the event $\{ T_{S_\beta} < T_0^2 \}$,
\[ \lim_{K \to \infty} \frac{T_{S_\beta}}{\log K} = \frac{1}{\widetilde\lambda} \numberthis\label{invasionvirus} \]
in probability.
\end{theorem}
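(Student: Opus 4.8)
Fix a small $\eps>0$, small enough that the branching coupling below applies and that $\eps<\beta$, and recall the stopping times $T_\eps^2$ from~\eqref{Teps2def} and $T_0^2$ (extinction of the infection). On the invasion event $\{T_{S_\beta}<T_0^2\}$ we split $T_{S_\beta}=T_\eps^2+(T_{S_\beta}-T_\eps^2)$. The plan is to show that $T_\eps^2=\tfrac{1}{\widetilde\lambda}\log K\,(1+o_{\P}(1))$, by comparison with the approximating supercritical branching process, and that $T_{S_\beta}-T_\eps^2=O_{\P}(1)$, by the deterministic law-of-large-numbers approximation that becomes valid after time $T_\eps^2$; dividing by $\log K$ then gives~\eqref{invasionvirus}. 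In contrast to Theorem~\ref{thm-viruscoexprob}, no $\eps\downarrow0$ limit is needed here, since the $\log\eps$ contribution to $T_\eps^2$ and the $O(1)$ bound on $T_{S_\beta}-T_\eps^2$ are both negligible against $\log K$.

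\textbf{The stochastic phase.} As explained in Section~\ref{sec-phase1heuristics}, with the coupling made precise in Section~\ref{sec-phase1virus}, as long as $N^K_{1a,t}$ stays in a small neighbourhood of $\bar n_{1a}$ and $N^K_{1d,t}$ near $0$ --- which holds, with probability tending to $1$, at least up to $T_\eps^2\wedge T_0^2$ --- the pair $(N_{1i,t},N_{2,t})$ is sandwiched between two two-type linear branching processes whose mean matrices are $O(\eps)$-perturbations of $J_2$ from~\eqref{J2def}. Since $J_2$ is irreducible with Perron root $\widetilde\lambda>0$ (Lemma~\ref{lemma-lambdatilde}), the classical theory of supercritical multitype branching processes (Kesten--Stigum; cf.\ the references used in~\cite{C+19,BT20}) gives $e^{-\widetilde\lambda t}(\widehat N_{1i}(t),\widehat N_2(t))\to W\mathbf u$ almost surely, with $\mathbf u$ the right Perron eigenvector and $W>0$ almost surely on survival; in particular the branching process first attains total size $\eps K$ at a time $\tfrac{1}{\widetilde\lambda}\log(\eps K)\,(1+o(1))=\tfrac{1}{\widetilde\lambda}\log K\,(1+o(1))$. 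Transferring this through the coupling, and using (as in the proof of Theorem~\ref{thm-viruscoexprob}) that the invasion event of $(\mathbf N^K_t)$ agrees up to an error vanishing in probability with the survival event of the branching process, we obtain $T_\eps^2=\tfrac{1}{\widetilde\lambda}\log K\,(1+o_{\P}(1))$ on $\{T_{S_\beta}<T_0^2\}$. The matching lower bound uses only the upper coupling: for every $\delta>0$, with high probability $N_{1i,t}+N_{2,t}$ is bounded above by $e^{(\widetilde\lambda+\delta)t}$ times a tight prefactor for all $t\le T_\eps^2\wedge T_0^2$, hence cannot reach $\lfloor\eps K\rfloor$ before time $\tfrac{1}{\widetilde\lambda+\delta}\log K\,(1+o(1))$; since $\eps<\beta$ forces $T_\eps^2\le T_{S_\beta}$ on $\{T_{S_\beta}<T_0^2\}$ (reaching $S_\beta$ requires $N_{1i,t}+N_{2,t}\ge2\beta K$), and $\delta$ is arbitrary, we get $T_{S_\beta}\ge\tfrac{1}{\widetilde\lambda}\log K\,(1+o_{\P}(1))$ there.

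\textbf{The deterministic phase.} With high probability $\mathbf N^K_{T_\eps^2}$ belongs to a fixed compact set of configurations (independent of $K$) whose $1a$-coordinate is close to $\bar n_{1a}$, whose $1d$-coordinate is close to $0$, and with $N^K_{1i,T_\eps^2}+N^K_{2,T_\eps^2}=\lfloor\eps K\rfloor/K$, so at least one of its last two coordinates exceeds $\eps/2$. By Remark~\ref{remark-positivity}, the solution of~\eqref{4dimvirus} started from any such configuration has all four coordinates strictly positive for $t>0$, and then Proposition~\ref{prop-Lyapunov} together with Corollary~\ref{cor-persistence} drives it into the interior of $S_\beta$; by continuity of the flow and compactness, this happens within a single finite time $t_\beta=t_\beta(\eps,\beta)<\infty$, uniformly over that compact set of starting points. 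Applying the functional law of large numbers~\cite[Theorem~11.2.1]{EK} together with the strong Markov property of $(\mathbf N^K_t)_{t\ge0}$ at the stopping time $T_\eps^2$, the rescaled process stays uniformly $o_{\P}(1)$-close on $[0,t_\beta+1]$ to the corresponding solution of~\eqref{4dimvirus}, so $T_{S_\beta}-T_\eps^2\le t_\beta+1$ with probability tending to $1$. Combining the two phases and dividing by $\log K$ proves~\eqref{invasionvirus}.

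\textbf{The main obstacle.} The heart of the proof is the rigorous two-sided coupling of $(N_{1i,t},N_{2,t})$ to the branching process up to $T_\eps^2\wedge T_0^2$, complicated by the fact that the natural three-type process $\widehat{\mathbf N}$ is \emph{not} irreducible --- its dormant coordinate is subcritical on its own --- so the Kesten--Stigum estimates on type proportions cannot be applied in the form used in~\cite{C+19,BT20}. The resolution, indicated in Section~\ref{sec-phase1heuristics}, is to assign the $1d$-population to the ``resident'' side (together with $1a$) and to run the invasion argument with the irreducible two-type ``invader'' process $(\widehat N_{1i},\widehat N_2)$; keeping the accumulated coupling error under control until $T_\eps^2\wedge T_0^2$, and identifying the invasion event with branching survival, is the genuinely delicate part, and is shared with the proof of Theorem~\ref{thm-viruscoexprob}. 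A secondary, softer ingredient is the uniform-in-initial-condition reaching time $t_\beta$ of $S_\beta$ for the dynamical system, which rests on the persistence statements of Proposition~\ref{prop-Lyapunov} and Corollary~\ref{cor-persistence} rather than on any explicit description of the $\omega$-limit set of~\eqref{4dimvirus}.
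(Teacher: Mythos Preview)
Your proof follows essentially the same route as the paper's (Section~\ref{sec-finalproofvirus} combined with Proposition~\ref{prop-firstphasevirus}): split $T_{S_\beta}$ at $T_\eps^2$, control the stochastic phase by the two-type branching coupling~\eqref{branchingcouplingvirus}--\eqref{originalcouplingvirus}, and bound the deterministic phase by a uniform-in-initial-condition reaching time coming from Corollary~\ref{cor-persistence} and the Ethier--Kurtz approximation.

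One point deserves correction. Your claim that ``no $\eps\downarrow 0$ limit is needed'' with justification ``the $\log\eps$ contribution to $T_\eps^2$ is negligible'' misidentifies where the $\eps$-dependence enters. The sandwich~\eqref{originalcouplingvirus} is between the \emph{perturbed} processes $N^{\eps,\pm}$, whose Perron eigenvalues are $\widetilde\lambda^{(\eps,\pm)}=\widetilde\lambda+O(\eps)$, not between copies of $(\widehat N_{1i},\widehat N_2)$ itself; Kesten--Stigum applied to these yields $T_\eps^2/\log K\in[1/\widetilde\lambda-f(\eps),\,1/\widetilde\lambda+f(\eps)]$ with high probability (this is exactly~\eqref{eigenvaluesclose}--\eqref{festimatesvirus} in the paper). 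Recovering the precise constant $1/\widetilde\lambda$ therefore does require sending $\eps\downarrow 0$ after $K\to\infty$ --- equivalently, choosing $\eps$ small depending on the target accuracy $\delta$ in the convergence-in-probability statement. You effectively do this anyway in your lower-bound paragraph (``$\delta$ is arbitrary''), so the argument is salvageable, but the stated rationale is not the right one.
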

The final theorem provides information about the time of the extinction of the epidemic and implies that with high probability, the rescaled active population size stays close to its virus-free equilibrium $\barn1a$ and the dormant population stays small until this extinction (after which it decreases to 0). This theorem also holds for $\widetilde\lambda>0$ where both persistence and non-persistence of the epidemic have a positive probability.
%Finally, we show that in case the infection is not persistent, then the time of extinction of all sub-populations but the one of type 1a individuals is sub-logarithmic in $K$. This in particular implies that with overwhelming probability, this extinction happens in the first phase. Further, by the time of the extinction, with high probability, the rescaled type 1a population size does not reach a small neighbourhood of the equilibrium size $\barn1a$.
\begin{theorem}\label{thm-failurevirus}
Under the assumptions of Theorem~\ref{thm-viruscoexprob}, for all sufficiently small $\beta>0$ we have that on the event $\{ T_0^2 < T_{S_\beta}\}$, 
\[ \lim_{K \to \infty} \frac{T_0^2}{\log K} =0 \numberthis\label{extinctionvirus} \]
and
\[ \mathds 1_{ \{ T_{S_\beta} > T_0^2 \}} \big\Vert (N^K_{1a,T_0^2},N^K_{1d,T_0^2})-(\barn1a,0) \big\Vert \underset{K \to \infty}{\longrightarrow} 0, \numberthis\label{lastoftheoremvirus} \]
both in probability, where $\Vert \cdot \Vert$ is an arbitrary (but fixed) norm on $\R^2$.
\end{theorem}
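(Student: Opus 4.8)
The plan is to reduce everything to the ``unsuccessful early invasion'' event and then feed in the branching-process coupling of Section~\ref{sec-phase1virus}. Fix a small $\eps\in(0,2\beta)$ and set $A_\eps:=\{T_0^2<T_\eps^2\}$, with $T_\eps^2$ as in~\eqref{Teps2def}. Since membership in $S_\beta$ forces $N_{1i}+N_2\ge 2\beta K>\eps K$, reaching $S_\beta$ entails $T_\eps^2\le T_{S_\beta}$; using moreover that $\{N_{1i}+N_2=0\}$ is absorbing for these two coordinates (whence $T_{S_\beta}=\infty$ on $A_\eps$), one obtains $A_\eps\subseteq\{T_0^2<T_{S_\beta}\}$ and $\{T_0^2<T_{S_\beta}\}\setminus A_\eps\subseteq\{T_\eps^2\le T_0^2<\infty\}$. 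The first step is to note that this last event has vanishing probability: by (the proof of) Theorem~\ref{thm-successvirus} together with the bound $T_{S_\beta}-T_\eps^2=O(1)$ from Section~\ref{ssn:phases}, once $N_{1i}+N_2$ reaches $\eps K$ the process reaches $S_\beta$ before the epidemic dies out, with probability tending to one. Hence, writing $Z_K$ for either $T_0^2/\log K$ or $\Vert(N^K_{1a,T_0^2},N^K_{1d,T_0^2})-(\bar n_{1a},0)\Vert$, we have for every $\eta>0$
\[
\P\big(\{T_{S_\beta}>T_0^2\}\cap\{Z_K>\eta\}\big)\le\P\big(A_\eps\cap\{Z_K>\eta\}\big)+\P\big(\{T_0^2<T_{S_\beta}\}\setminus A_\eps\big),
\]
and the second summand tends to $0$; so it suffices to prove both convergences on the event $A_\eps$ with all constants uniform in $\eps$, and then let $\eps\downarrow 0$. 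Note $A_\eps\subseteq\{T_0^2<\infty\}$.

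Second, I would run the branching-process coupling of Section~\ref{sec-phase1virus} on $[0,T_\eps^2\wedge T_0^2]$: with probability tending to one it holds, and then $(N_{1d,t},N_{1i,t},N_{2,t})$ coincides with the linear branching process $\widehat{\mathbf N}$ started from $(0,0,1)$ up to $T_\eps^2\wedge T_0^2$, which on $A_\eps$ equals $T_0^2$. Consequently, on $A_\eps$ (modulo an event of probability tending to $0$) the quantity $T_0^2$ equals the first time $\widehat\tau_{\mathrm{ep}}$ at which $\widehat N_{1i}+\widehat N_2$ vanishes, which is at most the full extinction time $\widehat\tau<\infty$ of $\widehat{\mathbf N}$ (after $\widehat\tau_{\mathrm{ep}}$ the dormant coordinate is a pure death process), and $N^K_{1d,T_0^2}=K^{-1}\widehat N_{1d}(\widehat\tau_{\mathrm{ep}})$. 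Since $\widetilde\lambda\neq 0$ (cf.~\eqref{lambdatildedefvirus}), $\widehat{\mathbf N}$ conditioned on $\{\widehat\tau<\infty\}$ is a \emph{subcritical} continuous-time multitype Markov branching process with uniformly bounded offspring numbers, so its total progeny has exponential moments; in particular $\sup_{t\ge0}(\widehat N_{1d}+\widehat N_{1i}+\widehat N_2)(t)<\infty$ a.s.\ and $\P(\widehat\tau>t\mid\widehat\tau<\infty)\le Ce^{-ct}$ for suitable $c,C>0$ not depending on $\eps$. The tail bound yields $\P(T_0^2>\delta\log K,\,A_\eps)\le \P(\widehat\tau>\delta\log K,\,\widehat\tau<\infty)+o(1)=s_2\,CK^{-c\delta}+o(1)\to 0$ for every $\delta>0$, which is~\eqref{extinctionvirus}; and $N^K_{1d,T_0^2}=K^{-1}\widehat N_{1d}(\widehat\tau_{\mathrm{ep}})\to 0$ in probability on $A_\eps$ because $\widehat N_{1d}(\widehat\tau_{\mathrm{ep}})$ is a.s.\ finite.

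Third, for the active coordinate I would invoke the stability estimates that accompany the coupling in Section~\ref{sec-phase1virus}: with probability tending to one, $\sup_{t\le T_\eps^2\wedge T_0^2}|N^K_{1a,t}-\bar n_{1a}|\le C_0\eps$ with $C_0$ independent of $\eps$. Heuristically, while the coupling holds $N_{1i}+N_2\le\eps K$ and $N_{1d}$ is bounded, so the compensated contribution of virus attacks, resuscitations and recoveries to $N_{1a}$ is $O(\eps)$ per unit time; comparing $N^K_{1a,\cdot}$ with the two density-dependent birth--death processes obtained by perturbing its death rate by $\pm O(\eps)$ --- whose hydrodynamic limits have stable fixed points within $O(\eps)$ of $\bar n_{1a}$ and which, started $o(1)$-close to their fixed point, leave a $C_0\eps$-neighbourhood of it only after a time $e^{\Theta(K)}\gg\log K\ge T_0^2$ --- gives the bound. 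Together with $N^K_{1a,0}\to\bar n_{1a}$ a.s., this yields $\limsup_K\P(|N^K_{1a,T_0^2}-\bar n_{1a}|>C_0\eps,\,A_\eps)=0$; combined with the dormant bound from the previous step and then $\eps\downarrow 0$, this proves~\eqref{lastoftheoremvirus}.

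The delicate point will be the joint bootstrap hidden in the third step: the branching-process coupling that keeps $N_{1d},N_{1i},N_2$ small is valid only while $N^K_{1a}$ stays near $\bar n_{1a}$, whereas the bound on $N^K_{1a}$ in turn uses that $N_{1d},N_{1i},N_2$ are small --- a circular dependence that must be (and, in Section~\ref{sec-phase1virus}, is) broken by a single stopping-time argument carrying both estimates together up to $T_\eps^2\wedge T_0^2$. The remaining ingredients --- the exponential extinction-time tail of a subcritical multitype branching process, and the implication ``reaching $\eps K$ $\Rightarrow$ reaching $S_\beta$ with high probability'' already established within the proof of Theorem~\ref{thm-successvirus} --- are routine, and should cause no difficulty.
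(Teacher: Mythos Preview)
Your overall strategy---reduce to the early-extinction event $A_\eps$, control $T_0^2$ via a branching-process comparison, and control $(N^K_{1a},N^K_{1d})$ via the residents' stability estimate---matches the paper's. There is, however, a genuine inaccuracy in Step~2. The coupling of Section~\ref{sec-phase1virus} (inequalities~\eqref{branchingcouplingvirus}--\eqref{originalcouplingvirus}) is a \emph{sandwich}, not a coincidence: both $(KN^K_{1i},KN^K_2)$ and $(\widehat N_{1i},\widehat N_2)$ are trapped between two auxiliary two-type branching processes $N^{\eps,\pm}$ on $[0,T_0^2\wedge T_\eps^2\wedge Q_{b\eps})$, but there is no claim that the true process and $\widehat{\mathbf N}$ ever agree, and the dormant coordinate $N_{1d}$ is not part of this coupling at all (cf.\ Section~\ref{sec-phase1heuristics} on why type $1d$ is treated as ``resident''). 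Your identities $T_0^2=\widehat\tau_{\mathrm{ep}}$ and $N^K_{1d,T_0^2}=K^{-1}\widehat N_{1d}(\widehat\tau_{\mathrm{ep}})$ are therefore unfounded as stated.

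The paper's actual argument is short and avoids this issue. For~\eqref{extinctionvirus}, on $\{T_0^2<T_\eps^2\wedge Q_{b\eps}\}$ the sandwich yields $T_0^2\le T_0^{(\eps,+),2}$, the extinction time of the \emph{dominating} process $N^{\eps,+}$; this random variable does not depend on $K$ and is a.s.\ finite on $\{T_0^{(\eps,+),2}<\infty\}$, an event of probability $s_2^{(\eps,+)}=s_2+o_\eps(1)$. Combined with the symmetric-difference estimates between $\{B=0\}$, $\{T_0^2<T_\eps^2\wedge Q_{b\eps}\}$, and $\{T_0^{(\eps,+),2}<\infty\}$, this gives~\eqref{extinctionlowervirus} and hence~\eqref{extinctionvirus}. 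For~\eqref{lastoftheoremvirus} the paper simply appeals to~\eqref{secondofpropvirus}: on $\{T_0^2<T_\eps^2\wedge Q_{b\eps}\}$ the very definition of $Q_{b\eps}$ forces $|N^K_{1a,T_0^2}-\bar n_{1a}|\le b\eps$ \emph{and} $N^K_{1d,T_0^2}\le b\eps$ simultaneously. In other words, your Step~3 mechanism (Lemma~\ref{lemma-residentsstayvirus}) already delivers both coordinates of~\eqref{lastoftheoremvirus} at once; the detour through $\widehat N_{1d}$ is unnecessary and, under the coupling actually available, not valid.
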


Theorems~\ref{thm-viruscoexprob} and \ref{thm-successvirus} guarantee that if $s_2<1$, then in the limit $K\to\infty$, for appropriately chosen $\beta$, with probability tending to $1-s_2$, the persistence set $S_\beta$ will be reached before the extinction of the epidemic, and the time $T_{S_\beta}$ scales as $(\smfrac{1}{\widetilde\lambda}+o(1))\log K$. They do not provide information about the fate of our rescaled stochastic population process after time $T_{S_\beta}$, and in fact we expect that this will depend on the stability of the coexistence equilibrium and possible Hopf bifurcations. We will interpret this in Section~\ref{sec-bifurcations}.
% regarding such properties of the dynamical system~\eqref{4dimvirus}. 

Nevertheless, by virtue of Corollary~\ref{cor-persistence}, we can still provide an assertion about the long-term behaviour of our stochastic system. Namely, if $T_{S_\beta}<T_0^2$, then for $T>0$ sufficiently large, with high probability, at time $T_{S_\beta}+T$ the process will again be situated in $S_{\beta}$. This is true because \cite[Theorem 2.1, p.~456]{EK} guarantees that $(\mathbf N_{T_{S_\beta}+t})_{t \in [0,T]}$ is well-approximated by the solution $(\mathbf n_{t})_{t \in [0,T]}$ of \eqref{4dimvirus} given convergence of the initial conditions, and Corollary~\ref{cor-persistence} implies that started from anywhere in $S_\beta$, $\mathbf n_{T}$ will be situated in $S_\beta$ for all $T>0$ large enough.
More precisely, we have the following result, whose proof is now immediate.
\begin{cor}\label{cor-longterm}
Assume that \eqref{viruscoexcond} holds. Then for all sufficiently small $\beta>0$ and sufficiently large $T>0$, we have
\[ \lim_{K\to\infty} \P\big( \mathbf N_{T_{S_\beta}+T} \in S_{\beta} \big| T_{S_\beta}<T_0^2 \big) = 1. \numberthis\label{longpersistence} \]
\end{cor}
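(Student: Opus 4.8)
The plan is to restart the rescaled chain at the stopping time $T_{S_\beta}$ via the strong Markov property, to approximate it on a bounded time window by the solution of the limiting system~\eqref{4dimvirus} through the functional law of large numbers, and then to feed in the deterministic persistence bounds of Corollary~\ref{cor-persistence}. First observe that $N_{1i}$ and $N_2$ can leave the value $0$ only in each other's presence, so once the epidemic is extinct it remains extinct; consequently $\{T_0^2<T_{S_\beta}\}$ forces $T_{S_\beta}=\infty$, whence $\{T_{S_\beta}<T_0^2\}=\{T_{S_\beta}<\infty\}$, an event which under~\eqref{viruscoexcond} has probability bounded away from $0$ for large $K$ by Theorem~\ref{thm-viruscoexprob}. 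Since $S_\beta$ is closed and $(\mathbf N_t)_{t\ge0}$ is right-continuous, $\mathbf N^K_{T_{S_\beta}}\in S_\beta\cap\tfrac1K\N_0^4$ on this event, and the strong Markov property at $T_{S_\beta}$ gives that $(\mathbf N^K_{T_{S_\beta}+t})_{t\ge0}$ has the law of the rescaled chain started from $\mathbf N^K_{T_{S_\beta}}$. Hence
\[
\P\big(\mathbf N^K_{T_{S_\beta}+T}\notin S_\beta\,\big|\,T_{S_\beta}<T_0^2\big)\ \le\ \sup_{\mathbf y_0\in S_\beta\cap\tfrac1K\N_0^4}\P_{\mathbf y_0}\big(\mathbf N^K_T\notin S_\beta\big),
\]
and it remains to make this supremum vanish as $K\to\infty$ for an appropriate choice of $\beta$ and $T$.

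For the deterministic input I would upgrade Corollary~\ref{cor-persistence} to a statement of uniform interior attraction: nonnegative solutions of~\eqref{4dimvirus} are bounded and enter a fixed compact set in bounded time, and $(\bar n_{1a},0,0,0)$ is a uniform strong repeller by~\eqref{uniformstrongrepeller}, so standard uniform-persistence theory (the mechanism underlying \cite[Thm~4.2]{BK98}) yields an $\eta^*>0$ such that for every compact $\mathcal C\subset(0,\infty)^4$ there is a finite $T^*_{\mathcal C}$ with $\mathbf n^{\mathbf y_0}(t)\in S_{2\eta^*}$ for all $\mathbf y_0\in\mathcal C$ and all $t\ge T^*_{\mathcal C}$, where $\mathbf n^{\mathbf y_0}$ is the solution of~\eqref{4dimvirus} with $\mathbf n^{\mathbf y_0}(0)=\mathbf y_0$. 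Now fix $\beta\in(0,\eta^*)$ small enough that $S_\beta\neq\emptyset$; then $S_\beta$ is a compact subset of $(0,\infty)^4$ and I take $T:=T^*_{S_\beta}$ (any larger value works as well). Two elementary facts enter: every point of $S_{2\eta^*}$ satisfies each defining inequality of $S_\beta$ with slack larger than $\eta^*$ (since $\beta<\eta^*$), and over the compact interval $[0,T]$ the solutions started in the compact set $S_\beta$ remain in a fixed compact subset of $(0,\infty)^4$ (the faces $\{n_\upsilon=0\}$ being invariant).

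The final ingredient is the functional law of large numbers \cite[Thm~11.2.1, p.~456]{EK} in the form uniform over compact sets of initial conditions, which follows from the single-sequence statement by a routine compactness argument together with continuous dependence of~\eqref{4dimvirus} on its initial datum: for every $\delta>0$,
\[
\sup_{\mathbf y_0\in S_\beta\cap\tfrac1K\N_0^4}\P_{\mathbf y_0}\Big(\,\sup_{t\in[0,T]}\big\|\mathbf N^K_t-\mathbf n^{\mathbf y_0}(t)\big\|_\infty>\delta\Big)\ \underset{K\to\infty}{\longrightarrow}\ 0.
\]
Taking $\delta<\eta^*/3$, on the corresponding high-probability event $\mathbf N^K_T$ lies within $\delta$ of $\mathbf n^{\mathbf y_0}(T)\in S_{2\eta^*}$ and therefore in $S_\beta$ (the slack absorbs the $\delta$-error, and in particular all coordinates remain positive, so the epidemic has not died out on $[0,T]$); combined with the first display, this proves~\eqref{longpersistence}. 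The only genuinely delicate points are the two uniformity statements --- the uniform interior attraction of the compact family of initial data $S_\beta$, which is exactly where the uniform strong repeller property~\eqref{uniformstrongrepeller} is used, and the uniform-over-initial-conditions version of the law of large numbers --- and they are unavoidable here precisely because the restart location $\mathbf N^K_{T_{S_\beta}}$ is random and only controlled through the membership $\mathbf N^K_{T_{S_\beta}}\in S_\beta$; everything else is bookkeeping.
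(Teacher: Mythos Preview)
Your argument is correct and follows exactly the route the paper sketches in the paragraph preceding the corollary (strong Markov restart at $T_{S_\beta}$, Ethier--Kurtz approximation on $[0,T]$, then the deterministic persistence of Corollary~\ref{cor-persistence}); the paper in fact gives no formal proof beyond ``whose proof is now immediate.'' Your contribution is to make explicit the two uniformity steps the paper sweeps under the rug: the uniform-in-initial-condition interior attraction into some $S_{2\eta^*}$ (which indeed follows from the uniform strong repeller property~\eqref{uniformstrongrepeller} plus dissipativity via standard persistence theory, as in~\cite[Thm~4.2]{BK98}), and the uniform-over-$S_\beta$ version of the functional law of large numbers.
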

Note that \eqref{longpersistence} is equivalent to the fact that the assertions
\[ \lim_{K\to\infty} \P\big( N_{1a,T_{S_\beta}+T}^K+N_{1d,T_{S_\beta}+T}^K+N_{1i,T_{S_\beta}+T}^K \leq \barn1a-\beta \big| T_{S_\beta}<T_0^2 \big) =1, \]
\[ \lim_{K\to\infty} \P\Big( N_{2,T_{S_\beta}+T}^K \leq \frac{mv \barn1a}{\mu_2}-\beta  \Big| T_{S_\beta}<T_0^2 \Big) =1, \]
and
\[ \lim_{K\to\infty} \P\big( N_{\upsilon,T_{S_\beta}+T}^K \geq \beta \big| T_{S_\beta}<T_0^2 \big)=1, \qquad \forall \upsilon \in \{ 1a, 1d, 1i, 2 \} \]
hold. I.e., we have persistence of the epidemic on intervals starting at $T_{S_\beta}$ whose length does not scale with $K$.
%\color{red} Can one do \eqref{extinctionvirus} better? E.g.\ show that $T_0^2 \approx O(1)$? \color{black}
The proofs of Theorems~\ref{thm-viruscoexprob}, \ref{thm-successvirus}, and \ref{thm-failurevirus} will be carried out in Section~\ref{sec-proofsvirus}.

\subsection{Discussion of model and results}\label{sec-discussionvirus} %\color{red} I'm not changing this yet. When we see what the main results and their proofs are, we can reconsider and rewrite / possibly partially remove the discussion. \color{black}
%In this section we discuss different aspects of our model and results. In Section~\ref{sec-interpretationcoex} we interpret the supercriticality condition~\eqref{viruscoexcond} and relate it to the question of reproductive trade-offs of contact-mediated dormancy. Strongly related to this, in Section~\ref{sec-R0} we investigate the notion of the basic reproduction number for our model, making a link with stochastic epidemic models. In Section~\ref{sec-1dormantinfected} we comment on the probability of extinction of the epidemic started with one infected individual (instead of one free virus). In Section~\ref{sec-GW15BK98} we explain the relation between our model and prior works on dynamical systems in the context of lytic virus infection and contact-mediated dormancy.

\subsubsection{Interpretation of the coexistence condition}\label{sec-interpretationcoex}
Condition~\eqref{viruscoexcond} is equivalent to
\[ \Big(\frac{v}{r+v}m-1\Big) (1-q) D\barn1a > \mu_2. \numberthis\label{viruscoexcond2} \]
To understand this intuitively, imagine that our population process is observed shortly after time 0 so that it consists of approximately $K\barn1a$ type 1a individuals and only a few virions. Then, the right-hand side is the death rate of each single virion, whereas the left-hand side is the rate at which it produces new virus particles (via infection and subsequent lysis). Indeed, $(1-q)D\barn1a$ is the rate at which the virion successfully invades a type 1a individual, the $-1$ corresponds to the loss of this invader during the attack, the fraction $\frac{v}{r+v}$ is the probability that the infected individual does not recover, and $m$ is the number of new virions released from the infected cell. This is precisely the approximation of the population size that corresponds to the definition of the branching process $(\widehat{\mathbf N}(t))_{t \geq 0}$ (or its projection to types 1i and 2). Thus, we see that \eqref{viruscoexcond2} ensures that the branching process is supercritical, or equivalently, that \eqref{4dimvirus} has a coordinatewise positive equilibrium. %Also recall from the proof of Lemma~\ref{lemma-coexistencevirus} 

Proposition~\ref{lemma-coexistencevirus} states that if $(n_{1a}^*,n_{1d}^*,n_{1i}^*,n_2^*)$ exists, then $n_{1a}^*<\barn1a$, i.e.\ a persistent virus epidemic always reduces the population size of type 1a (compared to its virus-free equilibrium); %that \eqref{viruscoexcond2} is  equivalent to $\barn1a > n_{1a}^*$
else, the epidemic is not persistent. The fact that $\barn1a > n_{1a}^*$ whenever coexistence is possible also indicates that coexistence with type 2 is always detrimental for type 1 (not surprisingly).

Let us point out that Condition~\ref{viruscoexcond2} neither depends on the death rate factor $\kappa$ nor the resuscitation rate $\sigma$ of dormant individuals. That is, for fixed $q \in (0,1)$, we get the same probability for the persistence of the epidemic in the limit $\kappa=\infty,\sigma\geq 0$ (where an unsuccessful virus attack kills the affected type 1a individual without giving a chance to reproduction of viruses) as well as in the opposite limit $\kappa \geq 0,\sigma=\infty$ (where an unsuccessful virus attack keeps the affected individual alive and active). This is the consequence of the fact that the dynamics of type 1d does not directly affect the one of types 1i and 2 (cf.\ $(1,0,0)$ is a left eigenvector of $J$), but only via type 1a, which has a nearly constant rescaled population size during the first (very early) phase of the invasion. Note that $\kappa$ and $\sigma$ do not influence the time until a successful invasion either since the eigenvalue $\widetilde\lambda$ (cf.~\eqref{lambdatildedefvirus}) does not depend on these parameters. 

After succesful invasion, in the second phase of the epidemic, the values of $\kappa$ and $\sigma$ play a more prominent role, governing important properties of the coexistence equilibrium. For example, it follows from the proof of Proposition~\ref{lemma-coexistencevirus} (see Section~\ref{sec-preliminaryproofs} below) that under Condition~\eqref{viruscoexcond2}, the dormant coordinate $n_{1d}^*$ of the coexistence equilibrium $(n_{1a}^*,n_{1d}^*,n_{1i}^*,n_2^*)$ satisfies
\[ n_{1d}^* = \frac{q D n_{1a}^*n_2^*}{\kappa\mu_1+\sigma}. \numberthis\label{dormantcoord2} \]
Thus, $n_{1d}^*$ is an increasing function of $q$ and a decreasing function of $\kappa\mu_1+\sigma$. See Example~\ref{ex-largesigma} for a related discussion and simulation about the dynamical system~\eqref{4dimvirus} in the case of diverging $\sigma$.

It is also clear that (given $\lambda_1>\mu_1$) \eqref{viruscoexcond2} can only hold if
$mv>r+v$, else the left-hand side is nonpositive, whereas the right-hand side is positive by assumption. This condition says that each virus attack increases the number of virions on average, i.e.\ the mean number of virions that is created when the infected individual leaves state 1i (which equals $m$ in case the individual dies and $0$ if it recovers) exceeds 1 (which is the number of viruses lost at each virus attack). If $\frac{mv}{r+v} \leq 1$, then $(\widehat{\mathbf N}(t))_{t \geq 0}$ will be strictly subcritical, regardless of the values of $\barn1a$, $D$, $q$, and $\mu_2$. 

\subsubsection{Dormancy-related reproductive trade-offs in the light of the threat of persistent epidemics}\label{sec-lambdaversusq}
%Note that microbial dormancy is a costly trait that requires resources e.g.\ to maintain the switching machinery of cells required to transition into and out of dormancy. Hence it is natural to assume that it should come with a `reproductive trade-off' (see \cite{LJ11} for further details). A simple caricature of this effect is to assume that the maintenance of a costly dormancy trait reduces the reproduction rate of the cell (in comparison to a hypothetical cell lacking this trait), in our model leading to a reduction of the value of $\lambda_1$.
%, which could lead to competitive disantvantages against other species.

Note that condition~\eqref{viruscoexcond2} implies that large reproduction rates $\lambda_1$ can be hazardous when facing a virus infection (with or without dormancy mechanism). 
%, while a large probability $q$ of dormancy initiation upon virus contact is advantageous. 
At first glance, there may be hypothetical scenarios where a population threatened by recurring virus invasions might not realize its full reproductive potential in order to avoid persistent epidemics. The way to maximize its long-term average fitness in the face of virus epidemics could then be to invest remaining resources into a dormancy-defense, which allows for higher carrying capacities during infections, and the `reproductive trade-off' vanishes (at least to some degree). However, such a self-constraining strategy might be vulnerable to the invasion of selfish cheaters, i.e.\ of other species investing in a higher reproduction rate instead of dormancy.
Investigating the balance of classical fitness (in competition with other species) and strategies (e.g.\ dormancy-based) reducing reproductive rates in order to cope with recurring infections could be a topic for future work.

%It is in general plausible to think that dormancy comes with some reproductive trade-off, i.e.\ a reduction of $\lambda_1$ as opposed to the case of not having a dormancy trait. However, in fact, if this is the case, then it makes the defense of type 1a against type 2 even easier, since it makes it even more difficult for the left-hand side of \eqref{viruscoexcond2} to overachieve $\mu_2$. Of course, such an assertion can only be true if we restrict our attention to the interaction between the type 1 host population and the type 2 virus population. In a more complex ecosystem, a relatively low reproduction rate for type 1a can be dangerous because other species can competitively exclude this type in absence of viruses, and certainly, $\lambda_1$ must always be larger than $\mu_1$ even in absence of other species in order to avoid rapid extinction.
%

\subsubsection{Host--virus dynamics started with a single infected individual}\label{sec-1dormantinfected}
 By~\cite[Section 7.2]{AN72}, the extinction probability $s_2$ defined in~\eqref{qdefvirus} equals 1 if $\widetilde \lambda<0$, whereas if $\widetilde\lambda>0$, then $s_2$ equals the last coordinate of the coordinatewise smallest nonnegative solution $(s_{1d},s_{1i},s_2)$ of the system of generating equations
\[
\begin{aligned}
(\kappa\mu_1+\sigma)(1-s_{1d})&=0, \\
r(1-s_{1i})+v(s_2^m-s_{1i})&=0, \\
qD\barn1as_{2}(s_{1d}-1) + (1-q)D\barn1a(s_{1i}-s_2) +\mu_2(1-s_2) & =0. 
\end{aligned} \numberthis\label{s1ds1is2}
\]
Here, $s_{1d}$ and $s_{1i}$ are the extinction probabilities of the branching process $(\widehat{\mathbf N}(t))_{t \geq 0}$ started from $(1,0,0)$ and $(0,1,0)$, respectively. In particular, it follows from the first equation in~\eqref{s1ds1is2} that $s_{1d}=1$. I.e.\ started with one dormant individual (or any positive number of dormant individuals) and no infected individuals or virions, the branching process dies out almost surely, as anticipated before. Indeed, dormant individuals are created only in presence of virions, while dormant individuals do not produce any virions, not even indirectly via producing infected individuals. %In Section~\ref{sec-dormantsubcrit} we will discuss how this feature of the branching process impacts on the proof of our main results.
According to the second equation in~\eqref{s1ds1is2}, $s_{1i}$, which is the extinction probability of the process started from $(0,1,0)$, satisfies
\[ s_{1i} = \frac{r + v s_2^m}{r+v}. \numberthis\label{1i2extinctionprob}\]
This implies that $s_{1i}$ equals one if and only if $s_2$ equals one.

\subsubsection{The reproduction number of the virus epidemic, and relation to stochastic epidemic models}\label{sec-R0}
The distinction between an initial stochastic phase, where an invader can be described by a branching process, followed by deterministic behaviour, where the whole system is well-described by a dynamical system, is of course reminiscent of stochastic and deterministic epidemic modelling. %Another analogy to such models is that the \emph{basic reproduction number}, which we define analogously to those models, is larger than one if and only if the epidemic is persistent with asymptotically positive probability (equivalently, if the dynamical system~\eqref{4dimvirus} exhibits a coexistence equilibrium). See Section~\ref{sec-R0} for a precise definition and properties of the basic reproduction number, and some further remarks regarding similarities and differences between our model and these epidemic models.
In stochastic epidemic models like the standard SIR (susceptible--infected--removed) model, the \emph{basic reproduction number} $R_0$ of the epidemic is defined as the expected number of infections generated by one infectious individual in a large susceptible population, cf.~\cite[Section 2.1]{AB00}. Despite not treating pathogens as individuals and assuming that the population size is constant (or decreases only due to deaths caused by the infectious disease), the quantity $R_0$ can already be introduced in the basic SIR model the same way as in our model.
%The SIR model and many of its variants are individual-based, but they do not directly model pathogens as `individuals', and the infection is spread by interactions between susceptible and infected individuals, unlike in the model in the present paper. Further, the basic SIR model makes the simplifying assumption that individuals do not reproduce or compete, and deaths are either also excluded or only possible due to the infectious disease itself. A consequence of this assumption is that the total population is constant (at least if we also take dead individuals into account). In contrast, in our model, the population size is regulated by logistic competition, whence even in absence of viruses, a constant population size can only be obtained after rescaling by the carrying capacity $K$ and taking the limit $K\to\infty$. As already discussed, in the presence of a macroscopic epidemic, the total type 1 population is substantially reduced (cf.~Corollary~\ref{cor-persistence} as well as Theorems~\ref{thm-viruscoexprob} and \ref{thm-successvirus}). 

Note that we can still define $R_0$ in our model in such a way that it still fulfills the heuristic definition of \cite{AB00} (where we always assume that $\lambda_1>\mu_1$). In order to obtain `a large susceptible population', we will have to assume that $K$ is large, since the equilibrium population size scales like $K(\barn1a+o(1))$ as $K\to\infty$. Then, similarly to the branching process approximation of types 1d, 1i, and 2 during the initial phase of the epidemic, we will assume that the rescaled susceptible population size is fixed as $\barn1a$ (ignoring also the question of whether this number is an integer). Let us now look at an infected individual in this situation. It either recovers with probability $r/(r+v)$ or dies due to lysis, giving rise to $m$ new virions, with probability $v/(r+v)$. Each of these new virions will eventually either degrade, which happens at rate $\mu_2$, or successfully attack a susceptible individual. Since there are $K\barn1a$ susceptibles, the probability that the latter event occurs is $\frac{(1-q)D\barn1a}{(1-q)D \barn1a+\mu_2}$. The number of infected individuals emerging from attacks by these $m$ viruses is the average number of infections generated by the originally infected individual. Thus, we obtain  the expression
\[ 
R_0 = \frac{mv(1-q)D\barn1a}{(r+v) ((1-q)D\barn1a + \mu_2)} 
\]
for the
reproduction number in our model.
Note that $R_0$ depends on $q$ but not on $\kappa$ and $\mu$, and in particular it is the same as for a dormancy-free epidemic with lower infectivity if we replace $q$ by $0$ and $D$ by $(1-q)D$. This gives a rather natural interpetation of the effect of dormancy from an epidemiological point of view.

Indeed, $R_0>1$ holds if and only if
\[ (mv-(r+v))(1-q)D\barn1a > (r+v)\mu_2, \]
which is precisely our coexistence condition \eqref{viruscoexcond}. 

Note further that $R_0$ can also be interpreted as the average number viruses who are the `offspring' of a single given virus, obtained via infection of a susceptible individual producing secondary viruses via lysis. We see that $R_0>1$ is equivalent to condition~\eqref{viruscoexcond2}, which we interpret as the average number of `offspring' of a given virus being at least 1. This provides a heuristic reason why $s_{1i} \neq 1$ is equivalent to $s_2 \neq 1$, which we have verified in Section~\ref{sec-1dormantinfected}.

\subsubsection{Modelling choices and extensions}% Our model is only one of several possible variants to extend~\cite{BK98} and \cite{GW15}. 

As mentioned before, we opted for a lytic virus release mechanism after reproduction, always killing the host cell (as opposed to modeling chronic infection of individual cells, cf~e.g. \cite{GW18} for a related set-up).
%, including budding as described e.g.\ in \cite{Q16}). 

Other modelling choices may be taken regarding the fate of the dormancy initiating virion. In our case,  we opted for a reversible host--virus contact, where the free virion is retained after dormancy initiation of the target cell. While this seems to model contact-mediated dormancy as discussed in \cite{B15}, a CRISPR-Cas based response as in \cite{MNM19} would require the virion to enter the cell, and thus the virion should be erased after dormancy initiation. 

Further, while it seems natural from a biological point of view to assume that active host individuals feel competitive pressure also from their infected siblings, they may not necessarily feel competitively challenged by their dormant siblings, given that these are metabolically inactive. Yet we opted to follow \cite{GW15} in including competition with dormant host individuals in order to stay close to their modeling frame.

A related question is why infected and dormant individuals impose competitive pressure onto the active
population but not the other way around. We also mainly chose this modelling decision in order to be consistent with~\cite{GW15}. Regarding the infected individuals, this modelling choice originates from~\cite[Section 1]{BK98}, where the authors argue that this is a reasonable assumption because the mortality of infected individuals is almost completely due to lysis. (In contrast, if one considers chronically infected cells, their lifespan is typically much longer than the one of lytically infected ones, and hence the competitive pressure that they feel is not negligible, cf.\ \cite{GW18}.) %We again extend this modelling decision to the case of dormant individuals in order to follow~\cite{GW15}.

Finally, the burst size $m$, which we assumed to be constant for simplicity, could be replaced in each virus reproduction event by an independent integer-valued random variable $M$ with $\E[M]=m$ in order to make the model more realistic. Then, the underlying dynamical system would still be~\eqref{4dimvirus}, and the supercriticality of the corresponding branching process would still be equivalent to the coexistence condition~\eqref{viruscoexcond}, but e.g.\ the probability of a successful invasion would change. 

We refrain from discussing all possible mechanisms and their consequences here and leave them for future research. % after dormancy initiation. 

\section{Deterministic phase: Further analysis of the dynamical system}\label{sec-dynsyst2}
Unfortunately, the stability of the coexistence equilibrium and the associated question whether bifurcations emerge in the system \eqref{4dimvirus} are in general difficult (and tedious) to analyse and beyond the scope of the present paper. In Section~\ref{sec-bifurcations}, we present some partial results and conjectures in this regard. These are supported by numerical results for various parameter regimes, see Section~\ref{sec-simulationsvirus}. Finally, in Section~\ref{sec-paradoxofenrichment} we recall the notion of paradox of enrichment for predator--prey systems and explain its relation to our model. The biological relevance of an asymptotically periodic behaviour was already anticipated in Remark~\ref{remark-bifurcation}, and we will elaborate on it in Section~\ref{sec-paradoxofenrichment}. %We always assume that $\lambda_1>\mu_1$. 

\subsection{Stability of the coexistence equilibrium, Hopf bifurcations}\label{sec-bifurcations}
The three-dimensional, dormancy-free analogue
\[
\begin{aligned}
\frac{\d n_{1a}(t)}{\d t} & = n_{1a}(t)\big( \lambda_1-\mu_1-C (n_{1a}(t)+n_{1i}(t))-D n_{2}(t) \big) + r n_{1i}(t),\\
\frac{\d n_{1i}(t)}{\d t} & = D n_{1a}(t) n_{2}(t) -(r+v) n_{1i}(t), \\
\frac{\d n_{2}(t)}{\d t} & = mv n_{1i}(t) -  D n_{1a}(t) n_{2}(t) -  \mu_2 n_{2}(t)
\end{aligned}
\numberthis\label{3dimvirus}
\]
of our dynamical system~\eqref{4dimvirus} is the main object of study of the paper~\cite{BK98} in the case $r=0$ when recovery is also absent from the system. Ignoring the dormant coordinate $n_{1d}(\cdot)$, it is straightforward to extend Propositions~\ref{lemma-coexistencevirus}, \ref{lemma-stabilityvirus}, and~\ref{prop-Lyapunov}, as well as Corollary~\ref{cor-persistence} to the system~\eqref{3dimvirus}; in fact for $r=0$ they are all stated and proved in the aforementioned paper. In particular, we denote the coordinatewise positive coexistence equilibrium of~\eqref{3dimvirus} by $(n_{1a}^*,n_{1i}^*,n_2^*)$, whenever it exists, analogously to the case of~\eqref{4dimvirus}. Let us recall the transcritical bifurcation point $m^*$ of the system~\eqref{4dimvirus} from~\eqref{mstar}, which is defined for~\eqref{3dimvirus} analogously via putting $q=0$.
For $r=q=0$, there exist $m^{**},m'>0$ with $m^{**}>m'>m^*$ such that the following assertions hold, along with the corresponding analogues of Proposition~\ref{prop-Lyapunov} and Corollary~\ref{cor-persistence}:
\begin{enumerate}[(A)]
    \item\label{first-BK98} The equilibrium $(\barn1a,0,0)$ is globally asymptotically stable for $m \in (0,m^*)$ with all eigenvalues of the corresponding Jacobi matrix being real, and thus solutions of~\eqref{3dimvirus} started from coordinatewise positive initial conditions converge to $(\barn1a,0,0)$ in a coordinatewise eventually monotone way,
    \item\label{second-BK98} all eigenvalues of the Jacobi matrix at $(n_{1a}^*,n_{1i}^*,n_2^*)$ are real (and strictly negative) for $m \in (m^*,m')$, and thus solutions started from coordinatewise positive initial conditions tend to $(n_{1a}^*,n_{1i}^*,n_2^*)$ in a coordinatewise eventually monotone way,
    \item\label{third-BK98} for $m \in (m',m^{**})$, one eigenvalue is still negative, while there is a pair of complex eigenvalues with negative real part, giving rise to an oscillatory convergence of solutions started from coordinatewise positive initial conditions to $(n_{1a}^*,n_{1i}^*,n_2^*)$,
    \item\label{last-BK98} and for $m \in (m^{**},\infty)$, one eigenvalue is still negative, while the pair of complex eigenvalues now has positive real part, thus $(n_{1a}^*,n_{1i}^*,n_2^*)$ is unstable, and solutions started from coordinatewise positive initial conditions apart from $(n_{1a}^*,n_{1i}^*,n_2^*)$ tend to a periodic limiting trajectory.
\end{enumerate}
To be more precise, at $m^{**}$ there is a Hopf bifurcation, so that in an open neigbourhood of $m^{**}$,
the Jacobi matrix of~\eqref{3dimvirus} at $(n_{1a}^*,n_{1i}^*,n_2^*)$ has a pair of complex eigenvalues, whose real part changes sign from negative to positive in $m^{**}$ with a nonvanishing derivative, and the Lyapunov coefficient is nonzero at $m^{**}$.

Note that in~\cite{BK98}, only the fact that $(n_{1a}^*,n_{1i}^*,n_2^*)$ is unstable was verified for all $m>m^{**}$, the existence of stable periodic orbits only for $m>m^{**}$ sufficiently close to $m^{**}$. Further, in the setting of the assertions~\eqref{second-BK98}, \eqref{third-BK98}, the authors of the paper only proved that $(n_{1a}^*,n_{1i}^*,n_2^*)$ is locally asymptotically stable, the general convergence of solutions mentioned in \eqref{second-BK98}, \eqref{third-BK98} was not shown. However, it is clear that the authors expect all the assertions \eqref{first-BK98}--\eqref{last-BK98} to hold (and in particular they also show that the Hopf bifurcation point is unique), and we share their opinion. Given a rigorous proof for all these assertions, it follows by continuity that they also hold for $r>0$ sufficiently small.

%The assertion that for $m>m^*$ sufficiently close to $m$ the coexistence equilibrium $(n_{1a}^*,n_{1d}^*,n_{1i}^*,n_2^*)$ of the four-dimensional system~\eqref{4dimvirus} is always locally asymptotically stable and the eigenvalues of the corresponding Jacobi matrix are also real can be shown analogously to the case of~\eqref{3dimvirus}. For completeness, we present this assertion as a proposition, which we prove in Appendix~\ref{sec-appendixproofs}.

Now, we present some partial results that generalize the above assertions to the case when dormancy or recovery is present in the system. Given \cite[Proposition 3.2]{BK98}, the proof of the following assertion is immediate by continuity.
\begin{prop}\label{prop-thereisbifurcation}
Fix all parameters of the model but $m$. If $q>0$ and $r \geq 0$ are sufficiently small (in particular also if $r=0$ if $q$ is small enough), there exists $m^{**}>m^*$ such that the coexistence equilibrium $(n_{1a}^*,n_{1i}^*,n_2^*)$ of the dynamical system~\eqref{4dimvirus} is asymptotically stable for all $m \in (m^*,m^{**})$ and unstable for all $m \in (m^{**},\infty)$.
\end{prop}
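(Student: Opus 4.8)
The plan is to prove this as a perturbation, in the parameters $q$ and $r$, of the dormancy- and recovery-free system \eqref{3dimvirus}, whose bifurcation structure is supplied by \cite[Proposition~3.2]{BK98}: there is a \emph{unique} $m^{**}_0>m^*_0$ (with $m^*_0$ the transcritical point \eqref{mstar} at $q=r=0$) such that the coexistence equilibrium of \eqref{3dimvirus} is asymptotically stable for $m\in(m^*_0,m^{**}_0)$, undergoes a non-degenerate Hopf bifurcation at $m^{**}_0$, and is unstable for all $m>m^{**}_0$. The first step is a dimensional reduction at $q=0$: there $\{n_{1d}=0\}$ is invariant for \eqref{4dimvirus} and the Jacobian at the coexistence equilibrium is block triangular, with spectrum equal to that of the $3\times3$ Jacobian of \eqref{3dimvirus} together with the extra eigenvalue $-(\kappa\mu_1+\sigma)<0$. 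Hence for $q=0$ (and every $r\ge0$) the stability of the coexistence equilibrium of \eqref{4dimvirus} coincides with that of \eqref{3dimvirus} for all $m>m^*$.

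Next I would set up the perturbation. By Proposition~\ref{lemma-coexistencevirus} (formula \eqref{n1adefvirus}, together with the rational expressions for $n_{1d},n_{1i},n_2$ from Section~\ref{sec-preliminaryproofs}) the coexistence equilibrium of \eqref{4dimvirus}, the Jacobian $\mathbf J(m;q,r)$ at it, and the coefficients $a_1,a_2,a_3,a_4$ of its characteristic polynomial $z^4+a_1z^3+a_2z^2+a_3z+a_4$, all depend real-analytically on $(m,q,r)$ on the open set where \eqref{viruscoexcond} holds; likewise $m^*=m^*(q,r)$ is analytic, with $m^*(0,0)=m^*_0$. Proposition~\ref{lemma-stabilityvirus} gives $a_1>0$ and $a_4>0$ throughout (negative trace, positive determinant); since $a_3\le0$ would force the Hurwitz quantity $\Delta:=a_1a_2a_3-a_3^2-a_1^2a_4$ to be negative, asymptotic stability of the coexistence equilibrium is \emph{equivalent} to $\Delta>0$, and strict instability to $\Delta<0$. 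By the first step, $m\mapsto\Delta(m;0,0)$ is positive on $(m^*_0,m^{**}_0)$, has a simple zero at $m^{**}_0$ (simple because it changes sign there), and is negative on $(m^{**}_0,\infty)$. The implicit function theorem then yields, for $(q,r)$ near $(0,0)$, a unique $m^{**}(q,r)$ near $m^{**}_0$ with $\Delta(m^{**}(q,r);q,r)=0$ and $\partial_m\Delta\ne0$ there, depending continuously on $(q,r)$; in particular $m^{**}(q,r)>m^*(q,r)$ for $(q,r)$ small. (That this remains a genuine Hopf bifurcation, i.e.\ that the first Lyapunov coefficient stays nonzero, follows because that is an open condition; it is not needed for the statement as such.)

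It then remains to propagate the sign of $\Delta$ over all of $\big(m^*(q,r),\infty\big)\setminus\{m^{**}(q,r)\}$. On any compact interval $[m^*_0+\delta,M]$ with a neighbourhood of $m^{**}_0$ removed, $\Delta(\cdot;0,0)$ is bounded away from $0$, so by uniform continuity $\Delta(\cdot;q,r)$ has the same sign there for $(q,r)$ small; since $m^*(q,r)\to m^*_0$ and $m^{**}(q,r)\to m^{**}_0$ this settles $\big(m^*(q,r),m^{**}(q,r)\big)\cup\big(m^{**}(q,r),M\big]$ apart from the left endpoint. Near $m^*$ I would argue that as $m\downarrow m^*(q,r)$ the coexistence equilibrium tends to $(\bar n_{1a},0,0,0)$ (by \eqref{n1adefvirus}), so $\mathbf J(m;q,r)$ tends to the block-triangular Jacobian at $(\bar n_{1a},0,0,0)$ evaluated at $m^*$, whose spectrum is $\{-(\lambda_1-\mu_1),-(\kappa\mu_1+\sigma)\}$ together with the two eigenvalues of the matrix $J_2$ of \eqref{J2def} at $m^*$, namely $0$ (since $\widetilde\lambda=0$ there by Lemma~\ref{lemma-lambdatilde}) and the remaining, negative eigenvalue (the trace of $J_2$ is negative); a short direct computation for this limiting spectrum gives $\Delta>0$, uniformly in $(q,r)$ near $(0,0)$, so $\Delta>0$ also on a right-neighbourhood of $m^*(q,r)$.

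The genuinely non-compact part -- and what I expect to be the main obstacle -- is the tail $m\to\infty$: one must rule out re-stabilisation for very large $m$, which no compactness argument sees. For $r=q=0$ this is contained in \cite[Proposition~3.2]{BK98} (instability for all $m>m^{**}_0$); to transfer it to small $(q,r)$ I would establish uniform control as $m\to\infty$. Here \eqref{n1adefvirus} gives $n_{1a}=\Theta(1/m)$, and solving the equilibrium relations gives $n_{1d},n_{1i}=\Theta(1/m)$ while $n_2$ tends to a positive constant; substituting into $\mathbf J(m;q,r)$, the only unbounded entry is the one equal to $mv$, and it enters the characteristic polynomial only through products of the form $mv\cdot n_{1a}=\Theta(1)$, so after cancellation $a_1,\dots,a_4$ and $\Delta$ extend continuously to $m=\infty$, uniformly for $(q,r)$ near $(0,0)$. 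A short computation then yields $\lim_{m\to\infty}\Delta(m;0,0)<0$ and $\lim_{m\to\infty}a_4(m;0,0)>0$, whence $\Delta(\cdot;q,r)<0$ and $a_4(\cdot;q,r)>0$ on $[M,\infty)$ for $M$ large and $(q,r)$ small, i.e.\ instability for all $m\ge M$. Assembling the three ranges, for $(q,r)$ sufficiently small and with $m^{**}:=m^{**}(q,r)$, the coexistence equilibrium of \eqref{4dimvirus} is asymptotically stable on $(m^*,m^{**})$ and unstable on $(m^{**},\infty)$, as claimed. Everything other than this last uniform-in-$m$ estimate is compactness plus the Routh--Hurwitz criterion; that estimate is where the real work sits, and is the reason the statement is restricted to small $q$ and $r$.
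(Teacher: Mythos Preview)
Your approach is essentially the same as the paper's, only vastly more detailed: the paper's proof is the single sentence ``Given \cite[Proposition 3.2]{BK98}, the proof \ldots\ is immediate by continuity.'' You have unpacked what that continuity argument actually requires---the block-triangular reduction at $q=0$, the implicit-function step near $m^{**}_0$, compactness away from the bifurcation point, and the uniform tail control as $m\to\infty$---and correctly flagged the last of these as the only substantive obstacle, one the paper's one-liner does not address. Your handling of the tail (observing that $mv\cdot n_{1a}=\Theta(1)$ so the characteristic coefficients extend continuously to $m=\infty$) is the right idea and is implicitly what makes the paper's continuity claim honest.

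One small point: your assertion that $a_1>0$, $a_4>0$ reduce the Routh--Hurwitz test to the single inequality $\Delta>0$ is not quite complete as stated---the condition $a_1a_2-a_3>0$ does not follow from $\Delta>0$ alone without also knowing $a_3>0$ (or equivalently that all coefficients are positive, which is necessary for stability). This is easy to patch, since at $(q,r)=(0,0)$ the $3\times3$ Routh--Hurwitz analysis in \cite{BK98} already gives positivity of all coefficients on $(m^*_0,m^{**}_0)$, and this persists by continuity; but the sentence ``$a_3\le0$ would force $\Delta<0$'' needs a line of justification.
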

Next, let us point out that recovery indeed has a qualitative effect on the behaviour of the dynamical system. Namely, if $r$ is sufficiently large compared to $v$, then for all sufficiently large $m$ the coexistence equilibrium is asymptotically stable, at least for $q$ small. For $q=0$, it is actually satisfied if recovery is more frequent than death by lysis (i.e.\ if the virus infection has mortality less than 50\%).
\begin{prop}\label{prop-thereisnobifurcation}
Fix all parameters of the model but $m$.
\begin{enumerate}[(i)]
    \item\label{first-thereisnobifurcation} There exists $r^*>0$ such that for all $r>r^*$, the coexistence equilibrium $(n_{1a}^*,n_{1i}^*,n_2^*)$ of \eqref{3dimvirus} is asymptotically stable for all sufficiently large $m>m^*$. In particular, we have $r^* \leq v$.
    \item\label{second-thereisnobifurcation} For $q \in (0,1)$ sufficiently small, there exists $r^*>0$ such that for all $r>r^*$, the coexistence equilibrium $(n_{1a}^*,n_{1d}^*,n_{1i}^*,n_2^*)$ of \eqref{4dimvirus} is asymptotically stable for all sufficiently large $m>m^*$. 
\end{enumerate}
\end{prop}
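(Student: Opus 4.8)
\emph{Proof proposal.} The plan is to settle both parts by a Routh--Hurwitz analysis of the relevant Jacobian, carried out in the regime $m\to\infty$.

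\emph{Part (i).} Write $(a,i,z):=(n_{1a},n_{1i},n_2)$ for the coexistence equilibrium of \eqref{3dimvirus}. By the $q=0$ case of \eqref{n1adefvirus}, $a=\frac{\mu_2(r+v)}{D(mv-(r+v))}$, and the remaining equilibrium relations give $Daz=(r+v)i$, $z=\frac{(mv-(r+v))i}{\mu_2}$ and $i=\frac{a(\lambda_1-\mu_1-Ca)}{Ca+v}$. A first observation is that $a(mv-(r+v))=\frac{\mu_2(r+v)}{D}$ is independent of $m$, so that as $m\to\infty$ one has $a\to0$, $i\to0$, $\frac{ri}{a}\to\frac{r(\lambda_1-\mu_1)}{v}$ and $Dz=(r+v)\frac{i}{a}\to\frac{(r+v)(\lambda_1-\mu_1)}{v}$. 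I would then write the Jacobian of \eqref{3dimvirus} at $(a,i,z)$,
\[
J=\begin{pmatrix} -\frac{ri}{a}-Ca & r-Ca & -Da\\ Dz & -(r+v) & Da\\ -Dz & mv & -Da-\mu_2 \end{pmatrix},
\]
and observe, using the equilibrium relations, that the $(2,3)$-principal minor of $J$ vanishes identically and that $\det J=-Dz\,\mu_2\,(Ca+v)$. Since $\operatorname{tr}J<0$ trivially, the characteristic polynomial $\lambda^3+c_1\lambda^2+c_2\lambda+c_3$ always has $c_1=-\operatorname{tr}J>0$ and $c_3=-\det J>0$, so by Routh--Hurwitz $(a,i,z)$ is asymptotically stable if and only if $c_1c_2>c_3$.

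It remains to control $c_1c_2-c_3$ for large $m$. As the $(2,3)$-minor vanishes, $c_2$ is the sum of only the $(1,2)$- and $(1,3)$-principal minors, and letting $m\to\infty$ with the asymptotics above gives $c_1\to\frac{r(\lambda_1-\mu_1)}{v}+(r+v)+\mu_2$, $c_2\to\frac{r(\lambda_1-\mu_1)\mu_2}{v}$, $c_3\to(\lambda_1-\mu_1)(r+v)\mu_2$, hence
\[
\lim_{m\to\infty}(c_1c_2-c_3)=(\lambda_1-\mu_1)\mu_2\Big[\frac{r^2(\lambda_1-\mu_1)}{v^2}+\frac{r\mu_2}{v}+\frac{(r+v)(r-v)}{v}\Big].
\]
The bracketed expression is strictly increasing in $r>0$, is negative at $r=0$, and equals $(\lambda_1-\mu_1)+\mu_2>0$ at $r=v$; thus it has a unique zero $r^*\in(0,v)$ and is positive for every $r>r^*$. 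For any such $r$, continuity of $c_1,c_2,c_3$ in $m$ and the limit above yield $m_0(r)$ with $c_1c_2>c_3$ (and $c_1,c_3>0$) for all $m>m_0(r)$. This proves part (i) and the bound $r^*\le v$.

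\emph{Part (ii).} Here I would perturb off $q=0$. Fix $r>r^*$ and then $m$ so large that the coexistence equilibrium of \eqref{3dimvirus} is asymptotically stable (by part (i)) and \eqref{viruscoexcond} holds at $q=0$. Ordering the coordinates of \eqref{4dimvirus} as $(n_{1d},n_{1a},n_{1i},n_2)$, at $q=0$ the $n_{1d}$-equation decouples, $\dot n_{1d}=-(\kappa\mu_1+\sigma)n_{1d}$, the point $(0,a,i,z)$ is an equilibrium, and the Jacobian there is block lower-triangular, with upper-left entry $-(\kappa\mu_1+\sigma)$ and lower-right $3\times3$ block equal to the matrix $J$ from part (i); its spectrum is therefore $\{-(\kappa\mu_1+\sigma)\}\cup\operatorname{spec}(J)$, which lies in the open left half-plane. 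Since \eqref{viruscoexcond} is an open condition holding at $q=0$ for this $m$, for all small $q>0$ the coexistence equilibrium of \eqref{4dimvirus} exists and is coordinatewise positive, and both it and the Jacobian at it depend continuously on $q$, converging as $q\downarrow0$ to $(0,a,i,z)$ and to the block-triangular matrix just described. Continuity of eigenvalues then keeps all four eigenvalues in the open left half-plane for $q$ small, giving asymptotic stability; if one wants precisely the quantifier order of the statement, the same Routh--Hurwitz/$m\to\infty$ analysis can instead be run directly on the four-dimensional Jacobian, whose limit depends continuously on $q$ and reduces at $q=0$ to the computation in part (i).

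\emph{Main obstacle.} The substantive work is the explicit Routh--Hurwitz bookkeeping in part (i): recognising the cancellations --- the vanishing $(2,3)$-minor and the factorisation $\det J=-Dz\mu_2(Ca+v)$ --- that keep the $m\to\infty$ limit finite and transparent, and then verifying that the surviving quantity $c_1c_2-c_3$ is positive exactly for $r$ not too far below $v$, which is what produces the bound $r^*\le v$. Part (ii) is then a routine continuity argument.
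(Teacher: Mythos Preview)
Your proposal is correct and follows essentially the same route as the paper: a Routh--Hurwitz analysis of the three-dimensional Jacobian with $m\to\infty$ asymptotics for part (i), followed by a continuity-in-$q$ argument for part (ii). Your limits $c_1\to\frac{r(\lambda_1-\mu_1)}{v}+(r+v)+\mu_2$, $c_2\to\frac{r(\lambda_1-\mu_1)\mu_2}{v}$, $c_3\to(\lambda_1-\mu_1)(r+v)\mu_2$ coincide with the paper's $a_1,a_2,a_3$, and your observation that the $(2,3)$-principal minor vanishes identically and that $\det J=-Dz\,\mu_2(Ca+v)$ is correct and slightly streamlines the bookkeeping.

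One point of comparison: the paper stops at the crude bound $a_1a_2-a_3>0$ for $r>v$ (using $\frac{r}{v}>1$ and dropping positive terms), whereas you compute the exact limit $(\lambda_1-\mu_1)\mu_2\big[\frac{r^2(\lambda_1-\mu_1)}{v^2}+\frac{r\mu_2}{v}+\frac{(r+v)(r-v)}{v}\big]$ and observe it is strictly increasing in $r$, negative at $r=0$, positive at $r=v$. This actually gives $r^*\in(0,v)$ strictly, which is a sharper conclusion than the paper's $r^*\le v$.

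For part (ii), the paper's proof is as brief as your sketch (``derive \ldots\ using continuity, analogously to the proof of Proposition~\ref{prop-thereisbifurcation}''). You are right that your first argument (fix $r$, then $m$, then $q$) inverts the quantifiers; the fix you indicate --- computing the $m\to\infty$ limits of the Routh--Hurwitz quantities for the four-dimensional Jacobian and noting they depend continuously on $q$ and reduce to the part-(i) computation at $q=0$ --- is precisely the right way to obtain the stated quantifier order, and is what the paper's continuity reference implicitly intends.
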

While \eqref{first-thereisnobifurcation} follows from ~\eqref{second-thereisnobifurcation} by continuity, \eqref{second-thereisnobifurcation} requires a proof because the case of $r>0$ was not treated in~\cite{BK98}. We carry out the proof of Proposition~\ref{prop-thereisnobifurcation} in Appendix~\ref{sec-appendixproofs}, as well as the one of the next assertion.
\begin{prop}\label{prop-firststable}
Fix all parameters of the model but $m$. Then, for all $m>m^*$ sufficiently small, the coexistence equilibrium $(n_{1a}^*,n_{1d}^*,n_{1i}^*,n_2^*)$ of \eqref{4dimvirus} is asymptotically stable. %In the degenerate case $q=0$, the same holds for the coexistence equilibrium $(n_{1a}^*,n_{1i}^*,n_2^*)$ of \eqref{3dimvirus}. 
\end{prop}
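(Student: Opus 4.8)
The plan is to regard $m^*$ as a transcritical bifurcation point of~\eqref{4dimvirus} and to follow the eigenvalues of the Jacobi matrix at the coexistence equilibrium as $m\downarrow m^*$. As a first step I would record that, as $m\downarrow m^*$, the coexistence equilibrium $(n_{1a},n_{1d},n_{1i},n_2)$ converges to the virus-free equilibrium $(\bar n_{1a},0,0,0)$: by~\eqref{n1adefvirus} together with the critical relation~\eqref{mstar} one gets $n_{1a}\to\bar n_{1a}$, and from the explicit formulas for $n_{1d},n_{1i},n_2$ obtained in the proof of Proposition~\ref{lemma-coexistencevirus} these three coordinates are continuous in $m$ and vanish at $m=m^*$ (here one uses $\lambda_1>\mu_1$, $\sigma>0$, $v>0$, $q\in(0,1)$). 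Since the entries of the Jacobi matrix of~\eqref{4dimvirus} are polynomials in the coordinates and in $m$, it follows that the Jacobi matrix $J(m)$ at the coexistence equilibrium converges, as $m\downarrow m^*$, to the Jacobi matrix $J_0$ of~\eqref{4dimvirus} at $(\bar n_{1a},0,0,0)$ with $m=m^*$.

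Next I would compute $J_0$. Because the virus coordinate is $0$ at $(\bar n_{1a},0,0,0)$, this matrix is block lower triangular, and its spectrum is $\{-(\lambda_1-\mu_1),\,-(\kappa\mu_1+\sigma)\}$ together with the two eigenvalues of the $2\times 2$ block $J_2$ of~\eqref{J2def}. By Lemma~\ref{lemma-lambdatilde}, at $m=m^*$ (where~\eqref{viruscoexcond} holds with equality, so $\det J_2=0$) the larger eigenvalue of $J_2$ is $\widetilde\lambda=0$ and the other one is $\operatorname{tr}J_2=-(r+v)-(1-q)D\bar n_{1a}-\mu_2<0$. Hence $J_0$ has a \emph{simple} eigenvalue at $0$ (simplicity because $\lambda_1>\mu_1$, $\sigma>0$ and $\operatorname{tr}J_2\ne 0$, so $0$ does not coincide with any other eigenvalue of $J_0$) together with three further eigenvalues, all real and strictly negative.

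The proof would then be completed by continuity of the spectrum and by assertion~\eqref{2or4} of Proposition~\ref{lemma-stabilityvirus}. For $m>m^*$ close to $m^*$, three of the four eigenvalues of $J(m)$ stay in a small neighbourhood of the three strictly negative eigenvalues of $J_0$, hence have strictly negative real part, while the fourth eigenvalue $\mu_\star(m)$ stays near $0$; being the analytic continuation of a simple real eigenvalue of a real matrix, $\mu_\star(m)$ remains real (were it to leave the real axis, its complex conjugate would be a second eigenvalue of $J(m)$ in the same small neighbourhood of $0$, contradicting simplicity). The product of the three eigenvalues with negative real part is strictly negative --- either all three are real and negative, or one is real and negative while the other two form a complex-conjugate pair, which contributes a positive factor. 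On the other hand assertion~\eqref{2or4} of Proposition~\ref{lemma-stabilityvirus} gives $\det J(m)>0$ for every $m>m^*$, and $\det J(m)$ is the product of all four eigenvalues; hence $\mu_\star(m)<0$. Thus every eigenvalue of $J(m)$ has strictly negative real part, so the coexistence equilibrium is asymptotically stable for all $m>m^*$ sufficiently close to $m^*$.

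I expect the only genuinely delicate point to be the control of the crossing eigenvalue $\mu_\star(m)$: one must know it is real, so that positivity of the determinant forces it to be negative rather than permitting a small complex-conjugate pair near $0$; this in turn rests on the simplicity of the zero eigenvalue of $J_0$ and on the fact that the coexistence equilibrium really limits onto the virus-free one (so that $J(m)\to J_0$ holds). Once these preliminaries are secured, the remainder is pure continuity of eigenvalues plus the sign of the determinant already provided by Proposition~\ref{lemma-stabilityvirus}. An alternative to the determinant argument would be a centre-manifold or implicit-function computation of the sign of $\tfrac{\d}{\d m}\mu_\star$ at $m^*$, but the determinant route is shorter and reuses results already at hand.
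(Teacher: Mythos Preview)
Your proof is correct and follows essentially the same route as the paper: both arguments let $m\downarrow m^*$, observe that the coexistence equilibrium collapses onto $(\bar n_{1a},0,0,0)$, compute the limiting Jacobi matrix there (one simple zero eigenvalue and three strictly negative real ones), and then use continuity of the spectrum together with assertion~\eqref{2or4} of Proposition~\ref{lemma-stabilityvirus} to conclude. The only cosmetic difference is in the last step: the paper phrases the constraint from Proposition~\ref{lemma-stabilityvirus}\eqref{2or4} as a parity statement (the number of eigenvalues with positive real part must be $0$ or $2$, so it cannot be exactly $1$), whereas you extract the sign of the near-zero real eigenvalue directly from $\det J(m)>0$ and the sign of the product of the other three; these are equivalent.
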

We conclude the current section with a number of related conjectures which are based on numerical evidence (see Section~\ref{sec-simulationsvirus}). 
\begin{enumerate}
    \item\label{first-conj} Fix all parameters but $m$. For $r >0$ small enough, the assertions~\eqref{first-BK98}--\eqref{last-BK98} regarding the three-dimensional system~\eqref{3dimvirus} still hold. For $r$ sufficiently large, they hold with $m^{**}=\infty$.
    \item\label{third-conj} Fixing all parameters but $m$ and $r$ (resp.\ $m$ and $q$), $m^{*},m',m^{**}$ are monotone increasing in $q$ (resp.\ $r$). 
    \item\label{second-conj} For the dynamical system~\eqref{4dimvirus}, for $r \geq 0$ and $q \in (0,1)$ sufficiently small, the corresponding analogues of \eqref{first-BK98}--\eqref{last-BK98} hold (with $m^{**} \in (m^*,\infty)$). (That is, with $(\barn1a,0,0)$ replaced by $(\barn1a,0,0,0)$, $(n_{1a}^*,n_{1i}^*,n_2^*)$ with $(n_{1a}^*,n_{1d}^*,n_{1i}^*,n_2^*)$, and with the additional fourth eigenvalue of the Jacobi matrices mentioned in \eqref{first-BK98}--\eqref{last-BK98} always being real and negative.)
    In contrast, for any $q \in (0,1)$ we can choose $r>0$ sufficiently large such that the corresponding analogues of \eqref{first-BK98}--\eqref{last-BK98} hold with $m^{**} =\infty$, and for any $r \geq 0$ we can choose $q \in (0,1)$ sufficiently close to 1 such that the same is true. In particular, $ m'$ is always included in $(m^*,m^{**})$ .
\end{enumerate}
%Note that the part of assertion~\eqref{first-conj} regarding $r$ sufficiently small follows from the assertions \eqref{first-BK98}--\eqref{last-BK98} by continuity. However, as already mentioned, some parts of the proofs of \eqref{first-BK98}--\eqref{last-BK98} are still missing.

If one could verify the monotonicity in $q$ claimed in assertion~\eqref{third-conj}, Proposition~\ref{prop-thereisnobifurcation} would immediately generalize to all $q \in (0,1)$. However, what one could obtain this way would still be substantially weaker than the corresponding part of assertion~\eqref{second-conj}, which tells that if $r$ is large, then $(n_{1a}^*,n_{1i}^*,n_{1d}^*,n_2^*)$ is not only stable for $m$ very large, but for any $m>m^*$, and it is even globally attracting.

By assertion~\eqref{second-conj}, alone increasing $q$ can eliminate the Hopf bifurcation, i.e.\ make $m^{**}$ explode at some $q<1$. Certainly, this claim has no analogue for the three-dimensional system~\eqref{3dimvirus}, and we expect it to be difficult to prove. Our simulations presented in Example~\ref{ex-varyingq} nevertheless indicate that this assertion must be true, even in the case $r=0$ when there is no recovery but only dormancy.

Given that there exists $r>0$ large enough such that for any $q \in [0,1)$, $m^{**}=\infty$ holds, an interesting related open question is whether one can choose $q \in (0,1)$ sufficiently large such that for any $r \geq 0$, $m^{**}=\infty$ holds. Note that the assertions \eqref{first-conj}--\eqref{second-conj} do not necessarily imply that the answer to this question is positive.

\subsection{Numerical results}\label{sec-simulationsvirus}

To gain an understanding of the concrete behaviour of the dynamical system
~\eqref{4dimvirus} and its three-dimensional variant \eqref{3dimvirus}, in particular on the effects of dormancy and recovery, we now provide exact plots and numerical simulations of the critical burst sizes $m^*,m', m^{**}$ for various choices of the parameters and simulations of the solutions in some concrete cases. These results support and illustrate our conjectures presented in Section~\ref{sec-bifurcations}. We will work with the choice of parameters presented in Table~\ref{table-parameters} (abbreviating $n_I(0):=n_{1d}(0)+n_{1i}(0)+n_{2}(0)$), apart from those parameters that we vary in the given simulation. 

\begin{table}
\centering
\begin{tabular}{|l|l|l|l|l|} 
\hline
$\lambda_1$ & 5 &$v$ & 1.1 \\
$\mu_1$ & 4  & $\mu_2$ & 0.3 \\
$C$ & 1  & $n_{1a}(0)$ & $1(=\barn1a)$ \\
$\kappa$ & 1  & $n_I(0)$ & 0.1 \\
$q$ & 0.1  & $n_{1d}(0)$ & $\pi_{1d} \cdot n_I(0)$ \\
$r$ & 0.1   & $n_{1i}(0)$  & $\pi_{1i} \cdot n_I(0)$ \\
$D$ & 0.5   & $\sigma$ & $2$ \\
\hline
\end{tabular}
\caption{Default choice of the parameters for the simulations of the dynamical systems~\eqref{4dimvirus} and \eqref{3dimvirus}, where $n_I(0)=n_{1d}(0)+n_{1i}(0)+n_{2}(0)$.}\label{table-parameters}
\end{table}

Here, $\pi_{1d}$, $\pi_{1i}$, and $\pi_2$ are the dormant, infected, and virus coordinates of the coordinatewise positive (`Kesten--Stigum') left eigenvector of the mean matrix $J$ associated to the eigenvalue $\widetilde\lambda$ normalized so that $\pi_{1d}+\pi_{1i}+\pi_2=1$. 
 Heuristically, the reason why this initial condition is natural is that for $\widetilde\lambda>0$, conditional on survival of the approximating branching process $(\widehat{\mathbf N}(t))_{t \geq 0}$, the proportions of its dormant, infected, and virus coordinates converge to the corresponding proportions of $(\pi_{1d},\pi_{1i},\pi_2)$ thanks to the Kesten--Stigum theorem (cf.\ e.g.\ \cite{GB03}). %We refrain from making a precise connection between the Kesten--Stigum proportions and our rescaled stochastic population size process $(\mathbf N_t^K)_{t\geq 0}$ in the present paper; see \cite[Section 3.2]{C+19} for further details in a similar invasion model. %The value of the burst size $m$ will change from figure to figure to demonstrate monotonicity respectively non-monotonicity of the solution and presence respectively absence of Hopf bifurcations.
 
 \begin{example}[Varying $r$ for fixed $q$]
 With the default choice of parameters apart from $r$, in Figure~\ref{fig-basic2} we plot the transcritical bifurcation point $m^{*}$, the point $m'$ where a pair of eigenvalues of the Jacobi matrix of~\eqref{4dimvirus} at $(n_{1a}^*,n_{1d}^*,n_{1i}^*,n_2^*)$ becomes complex, and the Hopf bifurcation point $m^{**}$, as functions of $r$. Note that the recovery-free case $r=0$ is also included in the images (for this particular choice of $q$). Given that we have fixed all other parameters, $m^*$ is a linear function of $r$ (it is computed from \eqref{mstar}). Assuming that assertions~\eqref{first-BK98}--\eqref{last-BK98} hold true, we know that there is precisely one value of $m>m^*$, namely $m=m'$, where a pair of eigenvalues of the Jacobi matrix becomes complex, and for $m^{**}<\infty$ there is a unique value of $m>m'$, namely $m=m^{**}$, where these eigenvalues are purely imaginary, while for $m^{**}=\infty$ the real part of these eigenvalues remains negative for all $m>m'$. We evaluate $m'$ and $m^{**}$ in a discrete set of points, and we conclude that the dependency of $m'$ on $r$ also seems linear. As expected, there exists $r_0>0$ such that for $r>r_0$, the Hopf bifurcation point $m^{**}$ explodes and becomes infinite, i.e., $(n_{1a}^*,n_{1d}^*,n_{1i}^*,n_2^*)$ stays locally asymptotically stable for all $m>m^*$ (despite the fact that $q$ is relatively small, while it is not necessarily small enough to apply Proposition~\ref{prop-thereisnobifurcation}). In this case we have $r_0 \approx 0.69$, given that for $r$ above this value, as one increases $m$, the real parts of the two complex eigenvalues seem to converge to a strictly negative value.  We know from Proposition~\ref{prop-thereisnobifurcation} that such $r_0$ also exists for $q=0$. It is not included in the images, but its value is about $0.73$.
 \begin{figure}
    \centering
    \includegraphics[scale=1]{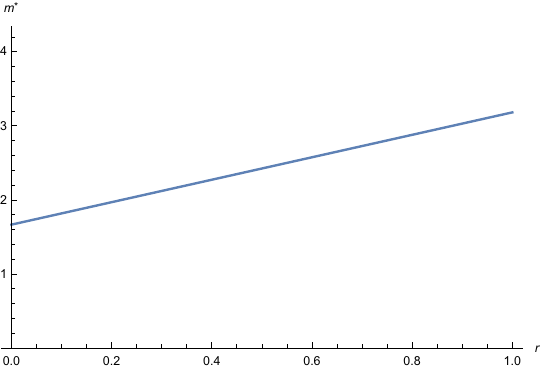}
    \includegraphics[scale=1]{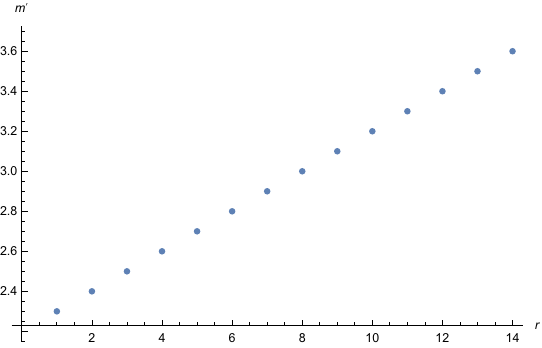} \\
    \includegraphics[scale=1]{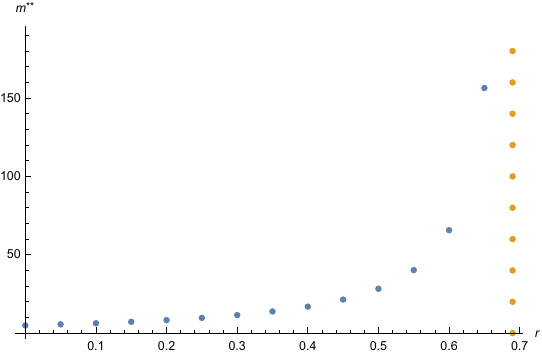}
     \caption{Values of the critical burst sizes $m^*,m^{**},m'$ as functions of $r$ with all other parameters (in particular $q$) fixed. At $r \approx 0.69$ (orange dotted line) $m^{**}$ explodes and becomes infinite.}
    \label{fig-basic2}
\end{figure}
\end{example}
 \begin{example}[Varying $q$ for fixed $r$]\label{ex-varyingq}
 With the default choice of parameters apart from $q$, in Figure~\ref{fig-basic3} we plot $m^*,m',m^{**}$ as functions of $q$. Note that the dormancy-free case $q=0$ is also included in the images (for this particular choice of $r$). The equation~\eqref{mstar} again gives an explicit formula for $m^*$ as a function of $q$, which is finite for all $q \in [0,1)$, monotone increasing in $q$, and tends to $\infty$ as $q \uparrow 1$. Also $m'$ seems to only explode in the limit $q \uparrow 1$. In contrast, for $q= 0.93$, the Hopf bifurcation point $m^{**}$ already seems to be infinite. 
 
 This provides numerical evidence that choosing the dormancy initiation probability $q \in (0,1)$ sufficiently large eliminates the Hopf bifurcation, although recovery is relatively weak so that for $q=0$ the Hopf bifurcation is present. It is not included in the images, but we also checked the recovery-free case $r=0$ with otherwise unchanged parameters, and for $q = 0.97$ we also found $m^{**}=\infty$ there. In other words, dormancy can help avoid Hopf bifurcations even in the absence of recovery, albeit this may require $q$ to be unrealistically high. 
 \begin{figure}
    \centering
    \includegraphics[scale=1]{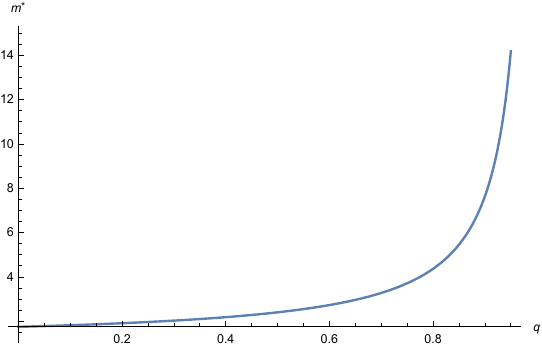}
    \includegraphics[scale=1]{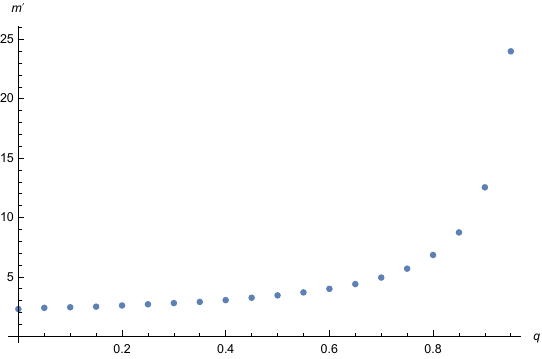} \\
    \includegraphics[scale=1]{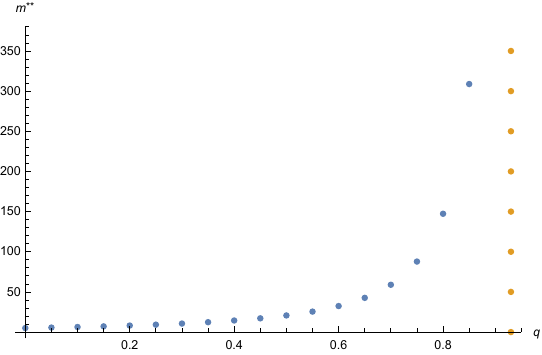}
    \caption{Values of the critical burst sizes $m^*,m',m^{**}$ as functions of $q$ with all other parameters (in particular $r$) fixed. At $q \approx 0.93$ (orange dotted line), $m^{**}$ explodes and becomes infinite.}\label{fig-basic3}
\end{figure}

We complement our precise numerical results from Figure~\ref{fig-basic3} with a schematic illustration of the critical burst sizes $m^*,m'$, and $m^{**}$ as functions of $q \in [0,1)$ with all parameters but $m$ and $q$ fixed, for $r$ small resp.\ large compared to $v$ (in the left resp.\ right picture of Figure~\ref{fig-smallrfantasy}). Note that this illustration is based on our conjectures listed in Section~\ref{sec-bifurcations}, while some of the properties of the curves $q \mapsto m^*(q), m'(q), m^{**}(q)$ and the regions between them can be justified based on the results of our paper (cf.~Section~\ref{ssn-dynsyst}). The shape of the curves is not meant to be precise, but qualitatively correct, in particular we expect them to be convex (and thus lower semicontinuous) as $[0,\infty]$-valued functions. 

\begin{figure}
\begin{tikzpicture}
\draw[scale=2.2] (-0.1, 0) -- (0, 0);
  \draw[scale=2.2, ->, name path=x] (0, 0) -- (3.1, 0) node[right] {$q$};
    \draw[scale=2.2, domain=0:2.9203, smooth, variable=\x, yellow, name path=transcritical] plot ({\x}, {0.25/(3-\x)});
     \draw[scale=2.2, domain=0:2.8404, smooth, variable=\x, orange, name path=secondorder] plot ({\x}, {0.5/(3-\x)});
     \tikzfillbetween[of=transcritical and secondorder, on layer=ft]{yellow, opacity=0.2};
          \draw[scale=2.2, domain=0:1.8405, smooth, variable=\x, red, name path=Hopf] plot ({\x}, {0.5/(2-\x)});
       \tikzfillbetween[of=Hopf and secondorder, on layer=ft]{orange, opacity=0.2};
         \draw[scale=2.2, domain=1.83:3, dashed, variable=\x, red] plot ({\x}, {3.2});
           \draw[scale=2.2] (0.85,2) node[below] {\Large IV.};
                  \draw[scale=2.2] (2.6,0.27) node[below] {\Large I.};
                \draw[scale=2.2] (1.8,0.9) node[below] {\Large III.};
 %\draw[scale=2.2, red] (2.5,3.3) node[left] {$+\infty$};              
\draw[scale=2.2] (2.586,0.65) node[left] {\Large II.};
    \draw[scale=2.2] (0, 0) -- (0, -0.1) node[above] {};
  \draw[scale=2.2, name path=y] (0, 0) -- (0, 3.13) node[above] {};
    \draw[scale=2.2, ->] (0, 3.13) -- (0, 3.3) node[above] {};
    \draw[scale=2.2] (0,3.2) node[left] {$+\infty$};
    \draw[scale=2.2] (-0.03,3.2) -- (0.03,3.2);
  \draw[scale=2.2] (3, 0) -- (3, -0.1) node[above] {};
   \draw[scale=2.2, name path=y2] (3, 0) -- (3, 3.13) node[above] {};
        \tikzfillbetween[of=y and y2, on layer=bg]{teal, opacity=0.2};
              \tikzfillbetween[of=Hopf and y, on layer=ft]{red, opacity=0.2};
                  \draw[scale=2.2] (0.1,3.4) node[left] {$m$};
                            \draw[scale=2.2] (1.65,0.5) node[left] {$m^{**}(q)$};
\draw[scale=2.2] (2.288,0.395) node[left] {$m'(q)$};
\draw[scale=2.2] (2.94,0.4) node[left] {$m^*(q)$};
 \draw[scale=2.2] (0.2,-0.1) node[left] {$0$};
      \draw[scale=2.2] (3,-0.1) node[left] {$1$};
\end{tikzpicture}
\begin{tikzpicture}
\draw[scale=2.2] (-0.1, 0) -- (0, 0);
  \draw[scale=2.2, ->, name path=x] (0, 0) -- (3.1, 0) node[right] {$q$};
    \draw[scale=2.2, domain=0:2.904, smooth, variable=\x, yellow, name path=transcritical] plot ({\x}, {0.3/(3-\x)});
     \draw[scale=2.2, domain=0:2.8078, smooth, variable=\x, orange, name path=secondorder] plot ({\x}, {0.6/(3-\x)});
     \tikzfillbetween[of=transcritical and secondorder, on layer=ft]{yellow, opacity=0.2};
         % \draw[scale=2.2, domain=0:1.8403, smooth, variable=\x, red, name path=Hopf] plot ({\x}, {0.5/(2-\x)});
         \draw[scale=2.2, domain=0:3, dashed, variable=\x, red] plot ({\x}, {3.2});
           %\draw[scale=2.2] (0.85,2) node[below] {\Large IV.};
                  \draw[scale=2.2] (2.6,0.3) node[below] {\Large I.};
                \draw[scale=2.2] (0.8,1.9) node[below] {\Large III.};
 %\draw[scale=2.2, red] (2.5,3.3) node[left] {$+\infty$};              
\draw[scale=2.2] (2.45,0.6) node[left] {\Large II.};
    \draw[scale=2.2] (0, 0) -- (0, -0.1) node[above] {};
  \draw[scale=2.2, name path=y] (0, 0) -- (0, 3.13) node[above] {};
    \draw[scale=2.2, ->] (0, 3.13) -- (0, 3.3) node[above] {};
    \draw[scale=2.2] (0,3.2) node[left] {$+\infty$};
    \draw[scale=2.2] (-0.03,3.2) -- (0.03,3.2);
  \draw[scale=2.2] (3, 0) -- (3, -0.1) node[above] {};
   \draw[scale=2.2, name path=y2] (3, 0) -- (3, 3.13) node[above] {};
        \tikzfillbetween[of=y and y2, on layer=bg]{teal, opacity=0.2};
         \tikzfillbetween[of=y and secondorder, on layer=ft]{orange, opacity=0.2};
          \draw[scale=2.2] (0.1,3.4) node[left] {$m$};
              %\tikzfillbetween[of=Hopf and y, on layer=ft]{red, opacity=0.2};
              \draw[scale=2.2] (1.7,3.32) node[left] {$m^{**}(q)$};
\draw[scale=2.2] (2.2,0.43) node[left] {$m'(q)$};
\draw[scale=2.2] (2.85,0.4) node[left] {$m^*(q)$};
 \draw[scale=2.2] (0.2,-0.1) node[left] {$0$};
      \draw[scale=2.2] (3,-0.1) node[left] {$1$};
\end{tikzpicture}
\\
\begin{tabular}{c|c|c|c|c}
Region & I. & II. & III. & IV. \\ \hline
Characterization & $0<m<m^*(q)$ & $m^*(q)<m<m'(q)$ & $m'(q)<m<m^{**}(q)$ & $m>m^{**}(q)$ \\ \hline
$\begin{smallmatrix} \text{Stability of} \\ (\bar n_{1a},0,0,0) \end{smallmatrix}$ & stable & unstable & unstable & unstable  \\ \hline
$\begin{smallmatrix} \text{Existence of} \\ (n_{1a}^*,n_{1d}^*,n_{1i}^*,n_2^*) \end{smallmatrix}$ & does not exist & exists & exists & exists \\ \hline 
$\begin{smallmatrix} \text{Stability of} \\ (n_{1a}^*,n_{1d}^*,n_{1i}^*,n_2^*) \end{smallmatrix}$ & - & stable & stable & unstable \\ \hline
$\begin{smallmatrix} \text{Asymptotic} \\ \text{behaviour} \\ \text{of positive} \\ \text{solutions to~\eqref{4dimvirus}} \end{smallmatrix}$ & $\begin{smallmatrix} \text{eventually coord.} \\ \text{monotone convergence} \\ \text{to } (\bar n_{1a},0,0,0) \end{smallmatrix}$ & $\begin{smallmatrix}  \text{eventually coord.} \\ \text{monotone convergence} \\ \text{to } (n_{1a}^*,n_{1d}^*,n_{1i}^*,n_2^*) \end{smallmatrix}$ & $\begin{smallmatrix} \text{oscillatory convergence} \\ \text{to } (n_{1a}^*,n_{1d}^*,n_{1i}^*,n_2^*) \end{smallmatrix}$ & $\begin{smallmatrix} \text{periodic} \\ \text{behaviour} \end{smallmatrix}$
\end{tabular}
\caption{Left: The case when $r$ is small compared to $v$ (e.g., $r=0$). The Hopf bifurcation point $m^{**}$ reaches $+\infty$ at some value $q \in (0,1)$, %while $m'$ and $m^*$ only diverge as $q \uparrow 1$. 
Right: If $r$ is large compared to $v$, $m^{**}(q)=\infty$ holds for all $q \in [0,1)$, thus $(n_{1a}^*,n_{1d}^*,n_{1i}^*,n_2^*)$ is stable for all $m>m^*(q)$. In both cases, $m'$ and $m^*$ only diverge as $\uparrow 1$. In the coloured regions, we expect the behaviour explained in the tabular below the images.
}\label{fig-smallrfantasy}
\end{figure}
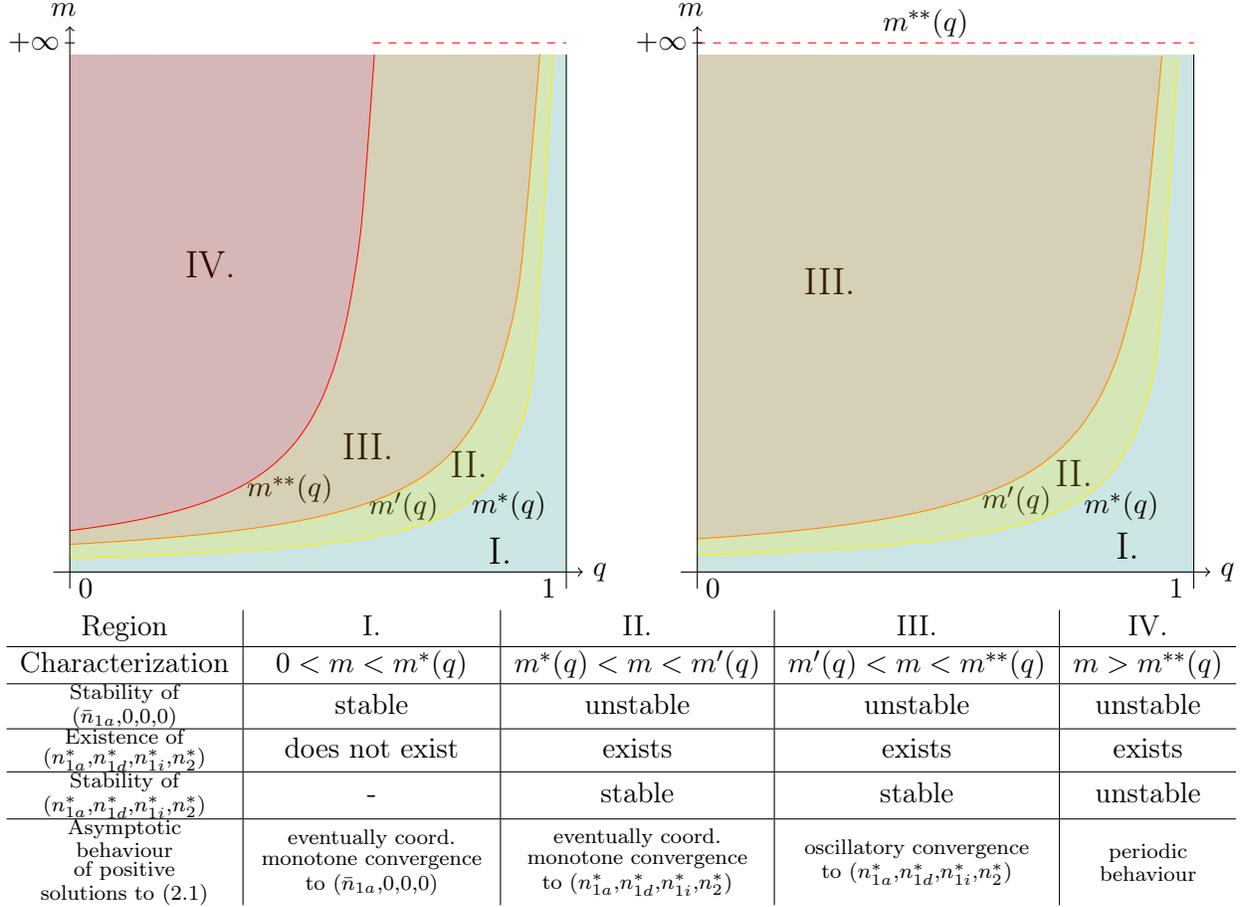
\end{example}

\begin{example}[The dormancy peak]\label{ex-GW-peak}
As previously observed in~\cite{GW15}, for the choice of parameters of that paper, at time close to zero, 
the virus population triggers dormancy in the active population, with the majority of the active population becoming dormant after a short time (orange line in Figure~\ref{fig-GW-peak}), given that $q$ is close enough to 1. The images of Figure~\ref{fig-GW-peak} depict the total population sizes of all types (1a, 1d, 1i, and 2), the proportion of dormant individuals among all host (type 1, i.e.\ type 1a, 1d or 1i) individuals, and the total amount of type 1 individuals, respectively. In terms of our stochastic process that is approximated by~\eqref{4dimvirus}, this means that in the beginning of the early macroscopic phase, shortly after the dynamical system approximation has become applicable, the majority of the active individuals becomes dormant.
\begin{figure}
\caption{
Here, the parameters are chosen according to \cite[Figures 5a, 5b, 6]{GW15}: $\lambda_1-\mu_1=0.23$, $C=\frac{\lambda_1-\mu_1}{9 \times 10^8}$, $D=1.02 \times 10^{-7}$, $q=50/51$, $\sigma=r=1/72$, $\kappa=v=1/24$, $\mu_2=1/12$, and $m=10$. The dormant, infected, and virus coordinates of the initial condition are given analogously to the other simulations in Section~\ref{sec-simulationsvirus}. Let us note that in this case, any $m \geq 2$ leads to periodic behaviour for the dynamical system~\eqref{4dimvirus}.}
\hspace{-20pt} \includegraphics[scale=1]{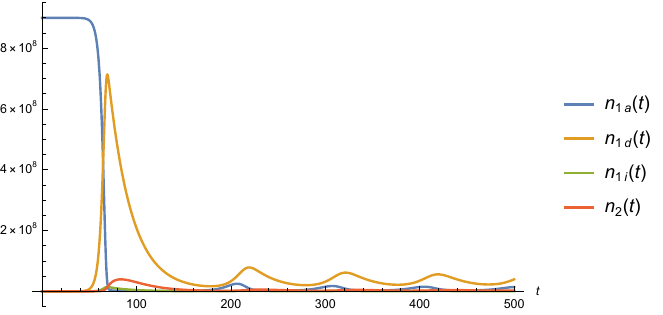} \\
\hspace{23pt} \includegraphics[scale=1]{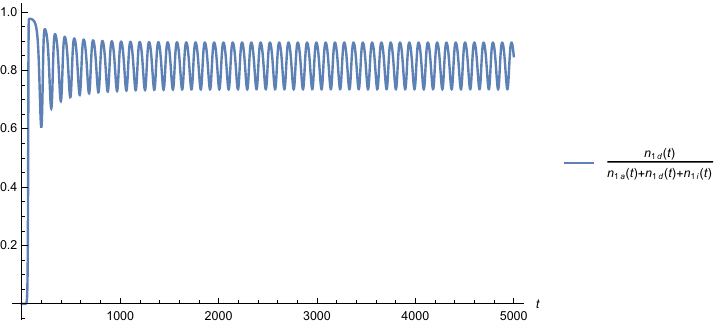} \\
\hspace{32pt} \includegraphics[scale=1]{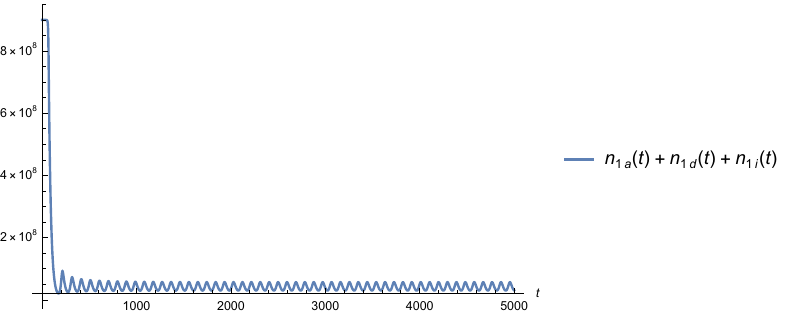} \\ \label{fig-GW-peak}
\end{figure}
\end{example}

Finally, let us comment on the case of quick resuscitation of dormant cells, that is, diverging $\sigma$. 
\begin{example}[Effects of large $\sigma$]\label{ex-largesigma}
The case of very large $\sigma$ corresponds to almost instantaneous resuscitation of dormant individuals after falling dormant. Thus, it is plausible to think that the qualitative behaviour of the active, infected, and virus coordinates of the system \eqref{4dimvirus} behaves very similarly to the case where there is no dormancy but the parameter $D$ of virus attacks is reduced by a factor of $1-q$. %On a heuristic level, this should of course not only be true for the dynamical system~\eqref{4dimvirus} but also for our rescaled stochastic population process on time intervals where the dynamical system approximates the stochastic process well, as we pointed out in Section~\ref{sec-interpretationcoex}. 
In Figure~\ref{fig-largesigma} we consider a solution of~\eqref{4dimvirus} with the default choice of parameters, apart from $\sigma$ which we choose as very large ($\sigma=100$, as opposed to the default value $\sigma=2$), also in comparison to the value of $\kappa$ (being equal to 1). We see that the behaviour of this solution is very similar to the one of \eqref{3dimvirus} with the same initial condition and with the same choice of the parameters apart from $q$ being altered to $0$ and $D$ to $(1-q)D$.
\begin{figure}
    \centering
    \includegraphics[scale=1]{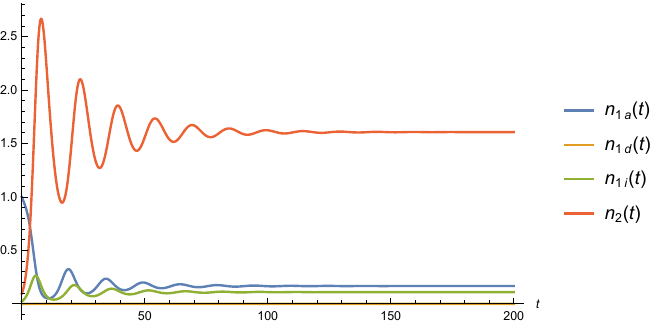}
    \includegraphics[scale=1]{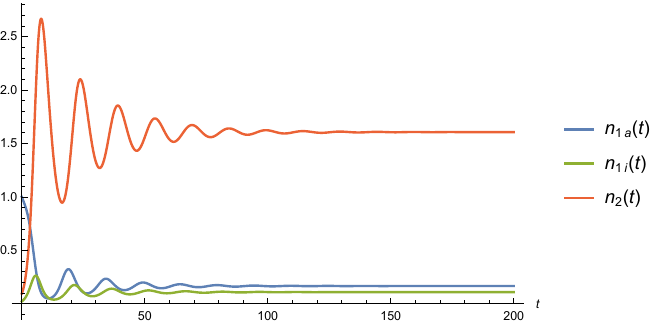}
    \caption{Solution of~\eqref{4dimvirus} with large $\sigma$ ($\sigma=100$) (top) and the one of the corresponding solution of~\eqref{3dimvirus} (bottom), for $m=5$. In the solution of~\eqref{3dimvirus} there is no dormant coordinate, whereas the dormant coordinate of the solution of~\eqref{4dimvirus} stays very close to zero.}
    \label{fig-largesigma}
\end{figure}
\end{example}
%The authors thank A.~Kraut, N.~Kurt and J.~T.~Lennon for interesting discussions and comments. JB was supported by DFG Priority Programme 1590 ``Probabilistic Structures in Evolution’’, project 1105/5-1. AT was supported by DFG Priority Programme 1590 ``Probabilistic Structures in Evolution’’.

%In this section we provide numerical results about the dynamical system~\eqref{4dimvirus}, and in some cases also about its three-dimensional variant~\eqref{3dimvirus}, in order to illustrate the results of Section~\ref{ssn-dynsyst} and to demonstrate that these results should be true in great generality, also in many cases where we did not provide a proof in the present paper.

\subsection{Paradox of enrichment}\label{sec-paradoxofenrichment}
The fact that the coexistence equilibrium can lose its stability is a variant of the phenomenon called \emph{paradox of enrichment} in ecology, which is well-known from the context of predator--prey type dynamical systems (see e.g.\ \cite{MM90}, or \cite{Kuwamura2009b} for a case with predator dormancy). It rests on a bifurcation that appears in our model as well as in the variant studied in~\cite{BK98} in the following way: When the burst size $m$ reaches a critical threshold, the coexistence equilibrium emerges and is initially stable. However, further increase in the burst size  destabilizes it, giving rise to periodic limiting behaviour. 

In the predator--prey context, the analogue of the burst size expresses how much energy the predator can gain out of a consumed unit prey, and the analogue of the equilibrium population size $\barn1a$ of active hosts is the carrying capacity of the prey population. 

Now, increasing the carrying capacity of the system while keeping all other parameters constant leads to periodic cycles with increasing amplitudes, where the lowest population size during a period approaches zero for both for the prey and the predators. This corresponds to an increased danger of extinction due to small stochastic fluctuations in the underlying individual-based model. The `paradox' consists in the counter-intuitive effect that increasing carrying capacities may actually increase the risk of extinction for the whole system.
%A similar bifurcation phenomenon can be obtained in case we vary only the carrying capacity of the prey and fix all other parameters, unless the predators turn the prey into energy very inefficiently (which can be described by an inequality analogous to $mv \leq r+v$ in our case), cf.~\cite{Kuwamura2009b}. 

In our model, for $r$ small compared to $v$ and $q$ not too close to 1, a similar high-amplitude periodicity (with low minimum value) can be observed (see e.g.\ Figure~\ref{fig-GW-peak}). While varying $\bar n_{1a}$ (which can e.g.\ be achieved via fixing $\lambda_1-\mu_1$ and varying $C$, or the other way around) would be analogous to the predator--prey setting, we used $m$ as a bifurcation parameter. Nevertheless, we know from~\cite[Sections 3 and 5]{BK98} that under suitable assumptions on the other parameters that we fix, a Hopf bifurcation can also be observed while varying $\bar n_{1a}$ (more precisely, its analogue in a rescaled variant of the system~\eqref{3dimvirus} for $r=0$). Note that if we use $\bar n_{1a}$ as a bifurcation parameter, fixing $m$, it is crucial to assume that $mv>r+v$, otherwise the coexistence equilibrium will never exist and thus it cannot lose its stability.

It is further remarkable that as long as the coexistence equilibrium exists, its active coordinate $n_{1a}^*$ does not depend on $\lambda_1,\mu_1,C$ (which are the only parameters $\bar n_{1a}$ depends on) but on the other parameters of the model. This is in analogy to the fact that in certain predator--prey models, the prey coordinate of the coexistence equilibrium between predators and prey does not depend on the carrying capacity of the prey, see e.g.\ \cite[Section 2]{Kuwamura2009b}.

In that paper, the particular effect  of predator dormancy is studied. The authors showed that this evolutionary trait can help avoid periodic behaviour or at least decrease the amplitude of the periods, even in the case where dormant predators never resuscitate. This is  a clear analogy to our model, where we also observe that dormancy may prevent bifurcations (see Section~\ref{sec-simulationsvirus}). Here, virions correspond to the predators and active individuals to the prey. However, there are also significant differences between the two models, for example the lack of an analogue of the infected state in the predator--prey setting. Further, in our setting, the burst size $m$ and the active equilibrium size $\barn1a$ are two independent parameters, both of which correspond to the carrying capacity of prey. In turn, the predator--prey interaction in~\cite{Kuwamura2009b} is more complex than the quadratic interaction between viruses and active individuals in~\eqref{4dimvirus}. 
%As already mentioned, in our model, destabilization of the coexistence equilibrium can only occur if the mortality (and lysis) rate of infected individuals is high enough compared to their recovery rate (cf.\ Proposition~\ref{prop-thereisnobifurcation}). To be more precise, we verify that in the dormancy-free case $q=0$ (i.e., in the dynamical system studied in \cite{BK98} with the additional effect of recovery), if the mortality rate is below 50\%, then the coexistence equilibrium will stay stable for arbitrarily large burst sizes. For $q>0$, it is tedious to determine under what conditions precisely a coexistence equilibrium can occur, in particular the Routh--Hurwitz criterion becomes much less convenient to use than in the three-dimensional case $q=0$. We will present a number of parameter regimes where the bifurcation is present, also for large $q$, in Section~\ref{sec-bifurcations}.  

\section{Proofs}\label{sec-proofsvirus}
In this section, we first prove Lemma~\ref{lemma-lambdatilde} regarding the mean matrix of the approximating branching processes in Section~\ref{sec-branchingpreliminary}, as well as the assertions of Section~\ref{ssn-dynsyst} regarding the dynamical system~\eqref{4dimvirus} in Section~\ref{sec-preliminaryproofs}, and then we turn to the analysis of our original stochastic process $(\mathbf N_t)_{t \geq 0}$. Section~\ref{sec-phase1virus} contains the proofs of results about the `very early' (stochastic) phase of the host--virus dynamics, whereas in Section~\ref{sec-finalproofvirus} we put these together with the assertions about the dynamical system to conclude Theorems~\ref{thm-viruscoexprob}, \ref{thm-successvirus}, and \ref{thm-failurevirus}.
\subsection{The eigenvalues of the mean matrices of the branching processes}\label{sec-branchingpreliminary}
In this section, we carry out the proof of Lemma~\ref{lemma-lambdatilde}.
\begin{proof}[Proof of Lemma~\ref{lemma-lambdatilde}.]
The trace of $J_2$ is strictly negative by assumption, hence $J_2$ can have at most one eigenvalue with positive real part. We claim that there is such an eigenvalue with positive real part if and only if $\det J_2<0$, in other words,
\[ (m v-(r+v))(1-q) D\barn1a > \mu_2(r+v), \numberthis\label{viruscoexcondbranchingprocess} \]
i.e., \eqref{viruscoexcond} holds.
%Note that this is  equivalent to  condition~\eqref{viruscoexcond} for the existence of a coordinate-wise positive coexistence equilibrium in the dynamical system \eqref{4dimvirus}.
Indeed, if the determinant is negative, then the two eigenvalues must be real because the product of two complex (and hence conjugate) eigenvalues would be positive. This implies that one of the eigenvalues must be positive and the other negative.

The largest eigenvalue $\widetilde \lambda$ of $J_2$ is given as the largest solution $\lambda$ to
\[ \lambda^2 + (r+v+(1-q) D \barn1a + \mu_2)\lambda+(r+v)((1-q)D\barn1a+\mu_2) -(1-q) D \barn1amv=0, \]
and thus it is given by \eqref{lambdatildedefvirus}.
\end{proof}
\subsection{Proofs regarding the dynamical system}\label{sec-preliminaryproofs} In this section we verify our results regarding the system~\eqref{4dimvirus}. 
We start with the proof of Proposition~\ref{lemma-coexistencevirus}.
\begin{proof}[Proof of Proposition~\ref{lemma-coexistencevirus}.]
Let us compute the coordinates of the coexistence equilibrium $(n_{1a}^*,n_{1d}^*,n_{1i}^*,n_2^*)$; this will also yield its uniqueness. Setting the second and the third line of \eqref{4dimvirus} equal to zero, we obtain
\[ n_{1d}^* = \frac{q D n_{1a}^*n_2^*}{\kappa\mu_1+\sigma} \numberthis\label{dormantcoord} \]
and
\[ n_{1i}^* = \frac{(1-q) D n_{1a}^* n_2^*}{r+v}. \numberthis\label{infectedcoord} \]
Hence, according to the fourth line of the same dynamical system,
\[ n_{1i}^*=\frac{(1-q) D n_{1a}^* n_{1i}^* (mv-(r+v))}{\mu_2(r+v)}. \]
Under the condition that $n_{1i}^* \neq 0$, this is equivalent to
\[ n_{1a}^* = \frac{\mu_2(r+v)}{(1-q) D(mv-(r+v))}, \numberthis\label{n1adefvirusoncemore} \]
which is \eqref{n1adefvirus}.
Under the condition that $mv>r+v$, we have $n_{1a}^*>0$. Hence, by \eqref{dormantcoord} and \eqref{infectedcoord}, $n_{1d}^*$, $n_{1i}^*$, and $n_2^*$ all have the same sign. If $mv=r+v$, then $n_{1a}^*$ is not well-defined under our assumption that $n_{1i}^* \neq 0$. In particular, a coordinatewise positive equilibrium cannot exist. Finally, if $mv<r+v$, then $n_{1i}^* \neq 0$ implies $n_{1a}^*<0$ and thus there is no coordinatewise positive equilibrium.

Let us now assume that $mv>r+v$. Then, setting the first equation of \eqref{4dimvirus} equal to zero and substituting the equations \eqref{dormantcoord} and \eqref{infectedcoord} into it, we obtain
%by~\eqref{dormantcoord} and \eqref{infectedcoord}, we obtain
%\[ n_{1d}^* = \frac{q(r+v)}{(1-q)(\kappa\mu_1+\sigma)}  n_{1i}, \numberthis\label{dormantexpressed} \]
%and thus by~\eqref{virusexpressed} and \eqref{dormantexpressed}, in order to show that $(n_{1a}^*,n_{1d}^*,n_{1i}^*,n_2^*)$ is coordinatewise positive, it suffices to show that $n_{1i}>0$. Using \eqref{dormantexpressed} and making the right-hand side of the first equation of \eqref{4dimvirus} equal to zero leads to
\[ \begin{aligned}
 0 & = n_{1a}^*(\lambda_1-\mu_1-C(n_{1a}^*+n_{1d}^*+n_{1i}^*)-Dn_2^*)+\sigma n_{1d}^*+r n_{1i}^* \\ & = n_{1a}^*\Big(\lambda_1-\mu_1-C\big(n_{1a}^*+\frac{q D n_{1a}^*n_2^*}{\kappa\mu_1+\sigma} +\frac{(1-q) D n_{1a}^* n_2^*}{r+v}\big)-D n_2^*\Big) +  \frac{\sigma q D n_{1a}^* }{\kappa\mu_1+\sigma} n_2^* + \frac{r (1-q) D n_{1a}^*}{r+v} n_2^*.\end{aligned}  \numberthis\label{n2expressed}\]
From this equation and \eqref{n1adefvirusoncemore}, an explicit expression for $n_{2}^*$ can be obtained, and this together with \eqref{dormantcoord} and \eqref{infectedcoord} can be used in order to derive explicit formulas for $n_{1d}^*$ and $n_{1i}^*$. 

Now we can complete the proof of Proposition~\ref{lemma-coexistencevirus}. Setting the right-hand side of \eqref{n2expressed} (which is equal to the right-hand side of the first equation in \eqref{4dimvirus}) equal to zero and dividing both sides by $n_{1a}^*$ implies that
\[ \lambda_1-\mu_1-C (n_{1a}^*+n_{1d}^*+n_{1i}^*) = n_{2}^* D \Big( 1 -\frac{q\sigma}{\kappa\mu_1+\sigma}-\frac{(1-q)r}{r+v} \Big). \numberthis\label{firstlinecoex} \]
Now, since $q \in (0,1)$, $\kappa\geq 0$, and $r,v,\sigma>0$, we have that $1 -\frac{q\sigma}{\kappa\mu_1+\sigma}-\frac{(1-q)r}{r+v}$ is strictly positive. Further, by \eqref{dormantcoord} and \eqref{infectedcoord}, $n_{1d}^*,n_{1i}^*,n_2^*$ all have the same sign.
Hence, thanks to \eqref{firstlinecoex}, we have the following. 
\begin{enumerate}[(i)]
\item\label{first-coexsign} Assume that $n_{1a}^*+n_{1d}^*+n_{1i}^* \geq \barn1a=\frac{\lambda_1-\mu_1}{C}$. Then $n_2^* \leq 0$. Hence, $n_{1d}^* \leq 0$ and $n_{1i}^*\leq 0$. Thus, $n_{1a}^*\geq \barn1a$. 
\item\label{second-coexsign} On the other hand, assume that $n_{1a}^*+n_{1d}^*+n_{1i}^* \leq \barn1a$. Then $n_2^* \geq 0$. Hence, $n_{1d}^* \geq 0$ and $n_{1i}^* \geq 0$. Thus, $n_{1a}^*\leq \barn1a$.
\end{enumerate}
In particular, we have obtained that if $n_{1a}^*+n_{1d}^*+n_{1i}^*=\barn1a$, then $n_{1d}^*,n_{1i}^*,n_2^*$ must be zero and hence $n_{1a}^*=\barn1a$. Finally, the arguments \eqref{first-coexsign} and \eqref{second-coexsign} also hold with `$\geq$' replaced by `$>$' and `$\leq$' by `$<$' everywhere. From this we derive that $n_{1d}^*,n_{1i}^*,n_2^*$, and $\lambda_1-\mu_1-Cn_{1a}^*$ (as well as $\barn1a-n_{1a}^*$) have the same sign. These expressions being positive is equivalent to the condition~\eqref{viruscoexcond} thanks to the formula~\eqref{n1adefvirusoncemore} for $n_{1a}^*$, which implies the proposition. 
\end{proof}

We continue with the proof of Proposition~\ref{lemma-stabilityvirus}.
\begin{proof}[Proof of Proposition~\ref{lemma-stabilityvirus}.]
In any equilibrium $(\widetilde n_{1a},\widetilde n_{1d},\widetilde n_{1i},\widetilde n_{2}) \in [0,\infty)^4$, writing $\widetilde n_{1}=\widetilde n_{1a}+\widetilde n_{1d}+\widetilde n_{1i}$, we have the Jacobi matrix
\[ 
 A(\widetilde n_{1a},\widetilde n_{1d},\widetilde n_{1i},\widetilde n_2)=\begin{pmatrix}
\lambda_1-\mu_1-C (\widetilde n_{1a}+\widetilde n_{1})-  D \widetilde n_2 & \sigma-C \widetilde n_{1a} & r-C \widetilde n_{1a} & - D \widetilde n_{1a} \\
q D \widetilde n_{2} & -\kappa\mu_1-\sigma & 0 & q D \widetilde n_{1a} \\
(1-q) D \widetilde n_2 & 0 & -r-v & (1-q) D \widetilde n_{1a} \\
-(1-q) D \widetilde n_2 & 0 & mv & -(1-q) D \widetilde n_{1a}-\mu_2
\end{pmatrix}.
\]
Choosing $(\widetilde n_{1a},\widetilde n_{1d},\widetilde n_{1i},\widetilde n_{2})=(0,0,0,0)$, we obtain
\[ A(0,0,0,0)=\begin{pmatrix}
\lambda_1-\mu_1 & \sigma & r & 0 \\
0 & -\kappa\mu_1-\sigma & 0 & 0 \\
0 & 0 & -r-v & 0 \\
0 & 0 & mv & -\mu_2
\end{pmatrix}. \]
Clearly, the eigenvalues of this matrix are its diagonal entries. Thanks to the assumptions that $\lambda_1>\mu_1>0$, $\kappa\geq 0$, $\sigma,r,v,\mu_2>0$, we see that $A(0,0,0,0)$ has precisely one positive eigenvalue: $\lambda_1-\mu_1$, whereas all other eigenvalues are strictly negative. Hence, the equilibrium $(0,0,0,0)$ is unstable, as claimed in part~\eqref{0saddle} of the proposition. On the other hand, at $(\barn1a,0,0,0)$, the Jacobi matrix is given as follows
\[ A(\barn1a,0,0,0)=\begin{pmatrix}
-(\lambda_1-\mu_1) & \sigma-C\bar n_{1a} & r-C\bar n_{1a} & -D\barn1a \\
0 & -\kappa\mu_1-\sigma & 0 & q D \barn1a \\
0 & 0 & -r-v & (1-q)D\barn1a \\
0 & 0 & mv & -(1-q)D\barn1a -\mu_2
\end{pmatrix} \numberthis\label{Abarn1a}
\]
We immediately see that $-(\lambda_1-\mu_1)<0$ is an eigenvalue of this matrix (with eigenvector $(1,0,0,0)^T$). The remaining three eigenvalues are the ones of the last $3\times 3$ block of $A(\barn1a,0,0,0)$. We recognize this block as the transpose of the mean matrix $J$ defined in \eqref{Jdefvirus}, which has the same eigenvalues as $J$. Consequently, if the condition~\eqref{viruscoexcond} holds, then $A(\barn1a,0,0,0)$ is indefinite with three negative eigenvalues and one positive one, and hence $(\barn1a,0,0,0)$ is unstable. In contrast, if \eqref{lifecond} holds,
then all eigenvalues of $A(\barn1a,0,0,0)$ are strictly negative, and thus $(\barn1a,0,0,0)$ is asymptotically stable. Thus, part~\eqref{nopersistence} of the proposition follows.

Let us finally consider the coexistence equilibrium $(n_{1a}^*,n_{1d}^*,n_{1i}^*,n_2^*)$ under condition~\eqref{viruscoexcond}. Writing $n_1^*=n_{1a}^*+n_{1d}^*+n_{1i}^*$, the Jacobi matrix is given as follows
\[ 
 A(n_{1a}^*,n_{1d}^*,n_{1i}^*,n_2^*)=\begin{pmatrix}
\lambda_1-\mu_1-C (n_1^*+n_{1a}^*)-  D n_2^* & \sigma-C n_{1a}^* & r-C n_{1a}^* & - D n_{1a}^* \\
q D n_{2}^* & -\kappa\mu_1-\sigma & 0 & q D n_{1a}^* \\
(1-q) D n_2^* & 0 & -r-v & (1-q) D n_{1a}^* \\
-(1-q) D n_2^* & 0 & mv & -(1-q) D n_{1a}^*-\mu_2
\end{pmatrix}.
\]
Our first goal is to show that the determinant of this matrix is positive. Substracting $\frac{q}{1-q}$ times the third row from the second row, we obtain the matrix
\[  \widetilde A(n_{1a}^*,n_{1d}^*,n_{1i}^*,n_2^*)=\begin{pmatrix}
\lambda_1-\mu_1-C(n_1^*+n_{1a}^*)-  D n_2^* & \sigma-C n_{1a}^* & r-C n_{1a}^* & - D n_{1a}^* \\
0 & -\kappa\mu_1-\sigma & \frac{q(r+v)}{1-q} & 0 \\
(1-q) D n_2^* & 0 & -r-v & (1-q) D n_{1a}^* \\
-(1-q) D n_2^* & 0 & mv & -(1-q) D n_{1a}^*-\mu_2
\end{pmatrix},
\]
which has the same determinant as $ A(n_{1a}^*,n_{1d}^*,n_{1i}^*,n_2^*)$.  We now claim that the last $3\times 3$ block of  $\widetilde A(n_{1a}^*,n_{1d}^*,n_{1i}^*,n_2^*)$ has determinant zero. Indeed, we have
\[
\begin{aligned}
 \det& \begin{pmatrix} -\kappa\mu_1-\sigma & \frac{q}{1-q}(r+v) & 0 \\ 0 & -r-v & (1-q) Dn_{1a}^* \\ 0 & mv & -(1-q) D n_{1a}^*-\mu_2 \end{pmatrix}\\ & =(-\kappa\mu_1-\sigma) \big( ((1-q) D n_{1a}^*+\mu_2) (r+v)-(1-q)mvDn_{1a}^* \big) \\ &
 = (-\kappa\mu_1-\sigma) ((1-q)Dn_{1a}^*(r+v-mv)+\mu_2 (r+v)) =0,
 \end{aligned} \]
where in the last step we used the definition~\eqref{n1adefvirus} of $n_{1a}^*$. Hence, by Laplace's expansion theorem applied to the first column of $\widetilde A(n_{1a}^*,n_{1d}^*,n_{1i}^*,n_2^*)$, we have
\[
\begin{aligned}
\det& A(n_{1a}^*,n_{1d}^*,n_{1i}^*,n_2^*)=\det \widetilde A(n_{1a}^*,n_{1d}^*,n_{1i}^*,n_2^*) \\ &=
(1-q)Dn_2^* \Big[ (\sigma-Cn_{1a}^*)\frac{q}{1-q}(r+v)(-(1-q)Dn_{1a}^*-\mu_2) + Dn_{1a}^*(\kappa\mu_1+\sigma) mv  \\ & \qquad -(r-Cn_{1a}^*)(\kappa\mu_1+\sigma)((1-q)Dn_{1a}^*+\mu_2) + (\sigma-Cn_{1a}^*)\frac{q}{1-q}(r+v)(1-q)Dn_{1a}^* \\ & \qquad -Dn_{1a}^*(\kappa\mu_1+\sigma)(r+v)   +(\kappa\mu_1+\sigma)(r-Cn_{1a}^*)(1-q)Dn_{1a}^* \Big] \\
& = 
(1-q)Dn_2^* \Big[ Cn_{1a}^*\frac{q}{1-q}(r+v)\mu_2+Cn_{1a}^*(\kappa\mu_1+\sigma)\mu_2 -\sigma\frac{q}{1-q}(r+v)\mu_2 + Dn_{1a}^*(\kappa\mu_1+\sigma) mv  \\ & \qquad -r(\kappa\mu_1+\sigma)\mu_2 -Dn_{1a}^*(\kappa\mu_1+\sigma)(r+v)   \Big] 
\\ & = (1-q) D n_2^* \Big[ Cn_{1a}^*\frac{q}{1-q}(r+v)\mu_2+Cn_{1a}^*(\kappa\mu_1+\sigma)\mu_2 \\ & \qquad +(\kappa\mu_1+\sigma) \big( \frac{\mu_2(r+v)}{1-q} -\mu_2 r \big) -\sigma\frac{q}{1-q} \mu_2(r+v)\Big] \\ & = D n_2^* \big[ Cn_{1a}^*q(r+v)\mu_2+Cn_{1a}^*(1-q)(\kappa\mu_1+\sigma)\mu_2 \\ & \qquad + \kappa\mu_1\mu_2r+\kappa\mu_1\mu_2 v + \sigma \mu_2 r +\sigma \mu_2 v - \kappa\mu_1(1-q)\mu_2 r - \sigma (1-q)\mu_2 r-\sigma q \mu_2 r - \sigma q \mu_2 v \big]
\\ & = D n_2^* \mu_2[Cn_{1a}^*q(r+v)+Cn_{1a}^*(1-q)(\kappa\mu_1+\sigma)+\kappa\mu_1r q +\sigma  v(1-q) + \kappa\mu_1 v] >0.
\end{aligned}   \numberthis\label{detcomp} \]
where in the third equality we again used~\eqref{n1adefvirus} and in the last step we used the positivity of $n_{1a}^*$ and $n_{2}^*$. % Hence, $ A(n_{1a}^*,n_{1d}^*,n_{1i}^*,n_2^*)$ is regular and has an even number of eigenvalues with negative real parts. Our goal is now to show that the matrix is in fact negative definite. 
Further, the trace of the matrix is negative because all its diagonal entries are negative. Indeed, also the first entry of the first column is negative, which follows from the fact that since $(n_{1a}^*,n_{1d}^*,n_{1i}^*,n_2^*)$ is an equilibrium of \eqref{4dimvirus} with four positive coordinates, we have
\[ \lambda_1-\mu_1-C(n_{1a}^*+n_{1d}^*+n_{1i}^*)-Dn_2^* = -\frac{\sigma n_{1d}^*+ r n_{1i}^*}{n_{1a}^*}, \]
and hence
\[ \lambda_1-\mu_1-2Cn_{1a}^*-C(n_{1d}^*+n_{1i}^*)-Dn_2^* < -\frac{\sigma n_{1d}^*+ r n_{1i}^*}{n_{1a}^*}<0.\]
This concludes the proof of the proposition. 
\end{proof}

%The first principal minor of the matrix is negative because already $\lambda_1-\mu_1-C n_{1a}-D n_2$ is strictly negative thanks to the fact that $(n_{1a}^*,n_{1d}^*,n_{1i}^*,n_2^*)$ is a coordinatewise positive equilibrium of~\eqref{4dimvirus} and hence
%\[ \lambda_1-\mu_1-C n_{1a}-D n_2+\sigma n_{1d} + r n_{1i}=0. \]
%Thus, the matrix is not positive definite, therefore it has 2 or 4 eigenvalues with negative real parts.
%\begin{comment}{Thus, we have
%\[ \widetilde A(n_{1d},n_{1i},n_2,n_{1a}) =
%\begin{pmatrix}
%-\kappa\mu_1-\sigma & 0 & q D n_{1a} & q D n_2 \\
%0 & -(r+v) & (1-q)D n_{1a} & (1-q) D n_2 \\
 %0 & mv & -(1-q)D n_{1a}-\mu_2 & -(1-q)D n_2 \\
%\sigma & r & - D n_{1a} & \lambda_1-\mu_1-2 C n_{1a}-D n_2 
%\end{pmatrix}.
 %\]
%It is clear that the first principal minor of this matrix is negative and the second one is positive. However, the third one equals zero. Indeed, in order to see this, we first note that the first $3\times 3$ block of the matrix has $(1,0,0)$ as an eigenvector with eigenvalue $-\kappa\mu_1-\sigma$, therefore it suffices to show that
%\[ \det \begin{pmatrix} -(r+v) & (1-q) D n_{1a} \\ mv & -(1-q)Dn_{1a}-\mu_2 \end{pmatrix}=0. \]
%Using \eqref{n1adefvirus}, this $2\times 2$ determinant equals 
%\[ (r+v)(1-q)Dn_{1a}+(r+v)\mu_2-mv(1-q)Dn_{1a} =-\mu_2(r+v)+\mu_2(r+v)=0. \]
%Hence, the matrix $A(n_{1a}^*,n_{1d}^*,n_{1i}^*,n_2^*)$ is negative semidefinite.}\end{comment}

Next, we carry out the proof of Proposition~\ref{prop-Lyapunov}.

\begin{proof}[Proof of Proposition~\ref{prop-Lyapunov}]
Proposition~\ref{prop-Lyapunov} is the analogue of the assertion \cite[Lemma 4.1]{BK98} that treated the case without dormancy or recovery and slightly with different competition. Our proof is indeed the analogue of the one in \cite{BK98}, which relies on the idea of Chetaev's instability theorem \cite{C61}. The main additional step of our proof is that the dormant coordinate should not be treated analogously to the infected and the virus coordinate, but it should just be ignored. Since it is possibly not straightforward to see that this approach works out, we present a self-contained proof as follows.

Let $V \colon [0,\infty)^4 \to \R$, $(\widetilde n_{1a},\widetilde n_{1d},\widetilde n_{1i},\widetilde n_2) \mapsto w_{1i} \widetilde n_{1i} + w_2 \widetilde n_2$ for some $w_{1i},w_2>0$. Let us write  the system \eqref{4dimvirus} as $\dot{ \mathbf n}(t)= f(\mathbf n(t))$ and fix $\eps>0$. Then, the standard Euclidean scalar product of the gradient of $V$ and $f$ at $(\widetilde n_{1a},\widetilde n_{1d},\widetilde n_{1i},\widetilde n_2) \in [0,\infty)^4$ with $\widetilde n_{1a} > \barn1a-\eps$ equals 
\[ 
\begin{aligned}
\langle \nabla V, f \rangle|&_{(\widetilde n_{1a},\widetilde n_{1d},\widetilde n_{1i},\widetilde n_2)} = w_{1i} ((1-q) D \widetilde n_{1a} \widetilde n_2 - \widetilde n_{1i} (r+v)) + w_2 (-(1-q)D\widetilde n_{1a}\widetilde n_2 +mv \widetilde n_{1i} -\mu_2 \widetilde n_2)
\\ & = \widetilde n_{1i} \big[mvw_2-w_{1i} (r+v)\big] + \widetilde n_2 \big[ (1-q) D \widetilde n_{1a} w_{1i} - (1-q) D \widetilde n_{1a} w_2 -\mu_2 w_2 \big] 
\\ & >  \widetilde n_{1i} \big[mvw_2-w_{1i} (r+v)\big] + \widetilde n_2 \big[ (1-q) D (\barn1a-\eps) w_{1i} - (1-q) D \widetilde n_{1a} w_2 -\mu_2 w_2 \big]. 
\end{aligned}\]
Hence, $\langle \nabla V, f \rangle|_{(\widetilde n_{1a},\widetilde n_{1d},\widetilde n_{1i},\widetilde n_2)}$ is positive once
\[ m v w_2 > (r+v) w_{1i} \qquad \text{ and } \qquad (1-q) w_{1i} D( \barn1a-\eps) > ((1-q) D(\barn1a-\eps)+\mu_2)w_2, \numberthis\label{posdefvirus} \]
in other words,
\[ \frac{mv}{r+v} w_2 > w_{1i} > w_2 \frac{(1-q)+ \frac{\mu_2}{D(\barn1a-\eps)}}{1-q}= w_2 \Big[ 1 + \frac{ \mu_2}{(1-q)D(\barn1a-\eps)} \Big]. \]
Since $w_{1i}>0,w_2>0$, this requires
\[ \barn1a-\eps > \frac{\mu_2(r+v)}{(1-q)D(mv-(r+v))}, \]
which holds whenever $\eps \in (0, \barn1a-n_{1a})$, where we recall that $\barn1a>n_{1a}^*$ under the condition \eqref{viruscoexcond}. Then we can indeed choose $w_{1i},w_2>0$ satisfying \eqref{posdefvirus}, and thus we can find $d>0$ such that for such a choice of $w_{1i},w_2$, and $\eps$, we have
\[ \nabla V > d V \qquad \text{on } B_\eps((\barn1a,0,0,0)) \cap (0,\infty)^4 \numberthis\label{Vgrowsexponentially} \]
where for $x \in\R^4$ and $\varrho>0$, $B_\varrho(x)$ denotes the open $\ell^2$-ball of radius $\varrho$ around $x$. 

Now, let us assume that $(n_{1a}(0),n_{1d}(0),n_{1i}(0),n_2(0)) \in (0,\infty)^4$. Then it is clear that for all $t>0$, $n_{1i}(t) \neq 0$ and $n_{2}(t) \neq 0$. Now, if $\lim_{t\to\infty} (n_{1a}(t),n_{1d}(t),n_{1i}(t),n_2(t)) = (\barn1a,0,0,0)$, there exists $t_0>0$ such that for all $t>0$, $(n_{1a}(t),n_{1d}(t),n_{1i}(t),n_2(t)) \in B_\eps((\barn1a,0,0,0)) \cap (0,\infty)^4$. Hence, by \eqref{Vgrowsexponentially}, $\lim_{t\to\infty} V(n_{1a}(t),n_{1d}(t),n_{1i}(t),n_2(t)) = \infty$, which contradicts the assumption that $\lim_{t\to\infty} (n_{1i}(t),n_2(t))=(0,0)$. 

From this it is in fact easy to derive that $(n_{1a}(t),n_{1d}(t),n_{1i}(t),n_2(t))$  cannot even converge to $(\barn1a,0,0,0)$ along any diverging sequence of times, but let us provide the details for completeness.  Since $V$ is positive definite on  $\mathcal B:=B_\eps((\barn1a,0,0,0)) \cap (0,\infty)^4$, the $\omega$-limit set $\Omega_0$ of any solution of \eqref{4dimvirus} (i.e., the set of subsequential limits of the solution as $t\to\infty$) started from $\mathcal B$ satisfies
\[
\Omega_0 \cap \overline{\mathcal B} \subseteq \{ \langle \nabla V, f \rangle = 0 \}
\] 
where $\overline{\mathcal B}$ is the closure of $\mathcal B$. %https://www.ndsu.edu/pubweb/~novozhil/Teaching/484%20Data/15.pdf
In terms of these objects, we have already verified that $(\barn1a,0,0,0) \in \{ \langle \nabla V, f \rangle = 0 \}$ and that $(\barn1a,0,0,0) \neq \Omega_0 \cap \overline{\mathcal B}$.

Using the definition of $V$ and the fact that $[0,\infty)^4$ is positively invariant under \eqref{4dimvirus}, we conclude that $\Omega_0 \cap \overline{\mathcal B}$ contains only points of the form $(\widetilde n_{1a},\widetilde n_{1d},0,0)$, where $\widetilde n_{1a}, \widetilde n_{1d} >0$. However, if a coordinatewise nonnegative solution of \eqref{4dimvirus} started from $(0,\infty)^4$ is such that its infected and virus coordinate tend to zero, then its dormant coordinate must also tend to zero and hence its active coordinate to $\barn1a$. We conclude that $\Omega_0 \cap \overline{\mathcal B} \subseteq \{ (\barn1a,0,0,0) \}$. But since $(\barn1a,0,0,0) \neq \Omega_0 \cap \overline{\mathcal B}$, it follows that $\Omega_0 \cap \overline{\mathcal B} = \emptyset$, and thus the proposition is proven.
\end{proof}

Now, we verify Corollary~\ref{cor-persistence}.
\begin{proof}[Proof of Corollary~\ref{cor-persistence}.]
According to the properties of the linearized variant of the system~\eqref{4dimvirus} near $(0,0,0,0)$ (see the Jacobi matrix $A(0,0,0,0)$ in the proof of Proposition~\ref{lemma-stabilityvirus}), if $(n_{1a}(0),n_{1d}(0),n_{1i}(0),n_2(0)) \in [0,\infty)^4$ with $n_{1a}(0) >0$, then $\liminf_{t\to\infty} n_{1a}(t)>0$. 

Next, note that if $\mathbf n(0) \in [0,\infty)^4$ with $n_{1a}(0)>0$, then there are two possibilities. Either $\max \{ n_{1i}(0), n_{2}(0) \}>0$ and hence $n_{1d}(t),n_{1i}(t),n_2(t)>0$ for all $t>0$, or $\max \{ n_{1i}(0),n_2(0)\}=0$ and hence $\lim_{t\to\infty} \mathbf n(t)=(\barn1a,0,0,0)$. Thanks to the invariance of $\omega$-limit sets, this implies that if the $\omega$-limit set of $(\mathbf n(t))_{t \geq 0}$ contains a point with a zero coordinate (which is necessarily not the type 1a coordinate), then in fact the $\omega$-limit set contains $(\barn1a,0,0,0)$, i.e.\ the solution converges to $(\barn1a,0,0,0)$ at least along a subsequence of times. But for a coordinatewise positive initial condition, that would contradict Proposition~\ref{prop-Lyapunov}, hence the positivity part of the proposition.
 
Next, let us verify that $\limsup_{t\to\infty} n_{1a}(t)+n_{1d}(t)+n_{1i}(t)<\barn1a$. Summing the first three lines of \eqref{4dimvirus}, we obtain
 \[ \dot n_{1a}(t)+\dot n_{1d}(t)+\dot n_{1i}(t) = n_{1a}(t)(\lambda_1-\mu_1-C (n_{1a}(t)+n_{1d}(t)+n_{1i}(t))-\kappa\mu n_{1d}(t)-v n_{1i}(t). \]
 Let us choose $\eps>0$ such that $\liminf_{t\to\infty} \kappa\mu n_{1d}(t)+v n_{1i}(t)>\eps$. Then if for some $t>0$ we have $n_{1a}(t)+n_{1d}(t)+n_{1i}(t) \geq \barn1a$, then we have
 \[ \frac{\d}{\d t} (n_{1a}(t)+n_{1d}(t)+n_{1i}(t)) < -\eps. \]
 Now, solutions of \eqref{4dimvirus} are continuously differentiable thanks to the Picard--Lindelöf theorem, and hence we obtain that there exists $\delta>0$ such that whenever $n_{1a}(t)+n_{1d}(t)+n_{1i}(t) \geq \barn1a-\delta$, we have
 \[ \frac{\d}{\d t} (n_{1a}(t)+n_{1d}(t)+n_{1i}(t)) < -\eps/2. \]
 This implies the time
 \[ t_{\barn1a-\delta} = \inf \big\{ t \geq 0 \colon n_{1a}(t)+n_{1d}(t)+n_{1i}(t)<\barn1a-\delta \big\} \]
 is finite, and for all $t>t_{\barn1a-\delta}$ we have $n_{1a}(t)+n_{1d}(t)+n_{1i}(t) \leq \barn1a-\delta < \barn1a$. Thus, $\limsup_{t\to\infty} n_{1a}(t)+n_{1d}(t)+n_{1i}(t)<\barn1a$.
 %%%old proof of liminf < \barn1a in the old system%%%
 %we have, for all sufficiently large $t$
 %\[ \dot n_{1a}(t)+\dot n_{1d}(t)+\dot n_{1i}(t) < n_{1a}(t)(\lambda_1-\mu_1-C n_{1a}(t))- \eps. \]
 %Hence, as long as $n_{1a}(t)>\barn1a$, $n_{1a}(t)+n_{1d}(t)+n_{1i}(t)$ is strictly monotone decreasing with derivative at most $-\eps$.  Hence if $\liminf_{t\to\infty} n_{1a}(t) \geq \barn1a$, then $\lim_{t\to\infty} n_{1a}(t)+n_{1d}(t)+n_{1i}(t)=-\infty$, which is clearly a contradiction. 
 
 Finally, the asymptotic upper bound on $n_2(t)$ as $t\to\infty$ is the analogue of \cite[Lemma 2.3]{BK98} in our model. Relying on the already proven parts of Corollary~\ref{cor-persistence}, we can now provide a short proof for it.
 %, and it can be proven using arguments of the proof of that lemma. However, since the notation in the dynamical system \cite[Equation (6)]{BK98} is rather different than in \eqref{4dimvirus} and perhaps the incorporation of dormancy in the proof is not immediate, we provide the proof here. The proof is based on the Barb\u{a}lat lemma, which we do not spell out here, see e.g.\ \cite[Lemma 2.1]{BK98}.
Recall that under the assumptions of the corollary we have
\[ \limsup_{t\to\infty} n_{1a}(t)+n_{1d}(t)+n_{1i}(t)< \barn1a, \qquad \liminf_{t\to\infty} n_{j}(t)>0, \forall j \in \{1a,1i,2\}, \]
and hence there exists $\beta>0$ such that
\[ \limsup_{t\to\infty} n_{1i}(t) \leq \barn1a-\beta. \]
Thus, we obtain for all $t$ sufficiently large
\[ \dot n_{2} (t) = -(1-q)Dn_{1a}(t)n_2(t)+mv n_{1i}(t)-\mu_2 n_2(t) < mv (\barn1a-\beta)-\mu_2 n_2(t).  \numberthis\label{n2estimate} \]
 This shows that for such $t$, if $n_2(t) \geq \frac{mv(\barn1a-\beta)}{\mu_2}$, then $s \mapsto n_2(s)$ is decreasing at $t$. Consequently,
 \[ \limsup_{t\to\infty} n_2(t) \leq \frac{mv(\barn1a-\beta)}{\mu_2}<\frac{mv\barn1a}{\mu_2},  \]
 as wanted.
 \end{proof}

\begin{comment}{\begin{remark}
Note that the proofs of Proposition~\ref{prop-Lyapunov} and Corollary~\ref{cor-persistence} only rely on the fact that $n_{1i}(0)$ and $n_2(0)$ are positive, $n_{1d}(0)$ can as well be zero. This will substantially simplify the application of the Kesten--Stigum theorem in the proof of our main results because the dormant coordinate, which is individually subcritical and tedious to handle, can be ignored for part of the arguments.
\end{remark}}\end{comment}

\subsection{First phase: growth or extinction of the epidemic}\label{sec-phase1virus}
Our first goal is to prove that during the first phase of the virus invasion, with high probability, the rescaled type 1a population stays close to equilibrium $\barn1a$ and the type 1d population stays small compared to $K$ until the total type 1i and 2 population reaches a size of order $K$ or goes extinct. Recall the stopping time $T_\eps^2$, $\eps \geq 0$, from~\eqref{Teps2def}.
Let us further define the stopping time
\[ Q_{\eps} = \inf \big\{ t \geq 0 \colon (N_{1a,t}^K,N_{1d,t}^K) \notin [ \barn1a-\eps,\barn1a+\eps ] \times [0,\eps] \big\}, \]
the first time when the rescaled type 1a population leaves a neighbourhood of radius $\eps$ around the equilibrium $\barn1a$ or the rescaled 1d population reaches size $\eps$. Our main results regarding the first phase are summarized in the following proposition. 
\begin{prop}\label{prop-firstphasevirus}
Assume that $\widetilde\lambda \neq 0$. Let $K \mapsto m_{1a}^K$ be a function from $(0,\infty)$ to $[0,\infty)$ such that $m_{1a}^K \in \smfrac{1}{K} \N_0$ and $\lim_{K \to \infty} m_{1a}^K=\barn1a$. Then there exists a constant $b \geq 2$ and a function $f \colon (0,\infty) \to (0,\infty)$ tending to zero as $\eps \downarrow 0$ such that
\[ \limsup_{K \to \infty} \Big| \P \Big[ T_{\eps}^2 < T_0^2 \wedge Q_{b\eps}, \Big| \frac{T_{\eps}^2}{\log K} - \frac{1}{\widetilde \lambda} \Big| \leq f(\eps)~ \Big| \mathbf N_0^K=\big(m_{1a}^K,0,0,\frac{1}{K}\big)  \Big]-(1-s_2) \Big| = o_\eps(1) \numberthis\label{invasiontimevirus} \]
and
\[ \limsup_{K \to \infty} \Big| \P\Big[T_0^2 <T_{\eps}^2 \wedge Q_{b\eps}~ \Big|  \mathbf N_0^K=\big(m_{1a}^K,0,0,\frac{1}{K}\big)   \Big] - s_2 \Big|=o_{\eps}(1), \numberthis\label{secondofpropvirus} \]
where $o_{\eps}(1)$ tends to zero as $\eps \downarrow 0$.
\end{prop}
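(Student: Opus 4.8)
The plan is to run the classical invasion-dynamics programme of \cite{C+19,BT20} with the non-standard distribution of roles described in Section~\ref{sec-phase1heuristics}: types $1a$ and $1d$ are the ``resident'' pair, staying close to $(\bar n_{1a},0)$, while types $1i$ and $2$ are the ``invaders'', whose sizes $(N_{1i,t},N_{2,t})$ are compared with two-type linear branching processes having the \emph{irreducible} mean matrix $J_2$ of~\eqref{J2def}. Fix a small $\eps>0$ and a constant $b\ge 2$, depending only on the model parameters and specified below. The key structural point is that, conditionally on the trajectory of $N_{1a,\cdot}^K$, the pair $(N_{1i,t},N_{2,t})$ is a time-inhomogeneous two-type Markov branching process: its per-virion infection rate is $(1-q)DN_{1a,t}^K$, while the recovery, lysis and degradation rates do not depend on $N_{1a}$. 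On the interval $[0,T_\eps^2\wedge T_0^2\wedge Q_{b\eps})$ one has $N_{1a,t}^K\in[\bar n_{1a}-b\eps,\bar n_{1a}+b\eps]$, so this infection rate lies between those of the time-homogeneous two-type branching processes $\widehat{\mathbf N}^{K,-}$ and $\widehat{\mathbf N}^{K,+}$ obtained by replacing $\bar n_{1a}$ with $\bar n_{1a}-b\eps$ and $\bar n_{1a}+b\eps$; both are started from $(0,1)$, with $\widehat{\mathbf N}^{K,+}$ stochastically ``larger''. I would realise all three processes on one probability space via a common Poisson graphical representation (this is the content of the coupling inequalities~\eqref{branchingcouplingvirus}--\eqref{originalcouplingvirus} announced in the text), arranged so that, up to time $Q_{b\eps}$, the extinction time and the level-$\lfloor\eps K\rfloor$ hitting time of $(N_{1i,\cdot},N_{2,\cdot})$ are sandwiched between the corresponding quantities for $\widehat{\mathbf N}^{K,-}$ and $\widehat{\mathbf N}^{K,+}$; because the infection move accounts for a fraction $O(b\eps)$ of the virion-driven events, the actual process and each $\widehat{\mathbf N}^{K,\pm}$ then differ by at most a multiplicative factor $1+O(b\eps)$ over the relevant $O(\log K)$ time window. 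Writing $\widetilde\lambda^{\pm}$ for the top eigenvalues of the mean matrices $J_2^{\pm}$ of $\widehat{\mathbf N}^{K,\pm}$ and $s_2^{\pm}$ for their extinction probabilities from $(0,1)$, Lemma~\ref{lemma-lambdatilde} and continuity of the smallest root of the generating polynomial of Section~\ref{sec-1dormantinfected} give $\widetilde\lambda^{\pm}\to\widetilde\lambda$ and $s_2^{\pm}\to s_2$ as $\eps\downarrow0$; since $\widetilde\lambda\neq0$, for $\eps$ small $\widetilde\lambda^{\pm}$ has the same (strict) sign as $\widetilde\lambda$.

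\emph{Control of the resident pair.} On $[0,Q_{b\eps}\wedge T_\eps^2\wedge T_0^2)$ the drift of $N_{1a,t}^K$ equals $N_{1a,t}^K(\lambda_1-\mu_1-CN_{1a,t}^K)$ up to a term of absolute value $O(\eps)$ (coming from $DN_{1a}^KN_2^K$, $rN_{1i}^K$ and $\sigma N_{1d}^K$, the last bounded via $N_{1d,t}^K\le\tfrac{qD(\bar n_{1a}+1)}{\kappa\mu_1+\sigma}\eps+O(\eps)$, uniformly in $b$ once $b\eps\le1$), and $x\mapsto x(\lambda_1-\mu_1-Cx)$ has $\bar n_{1a}$ as an exponentially stable zero; hence, for $b$ chosen large enough depending only on the parameters, the deterministic drift keeps $(N_{1a,t}^K,N_{1d,t}^K)$ well inside $[\bar n_{1a}-b\eps,\bar n_{1a}+b\eps]\times[0,b\eps]$, so that an exit from this box before time $(\log K)^2$ requires a martingale fluctuation of size $\gtrsim b\eps$. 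Since the predictable quadratic variations of the martingale parts of $N_{1a}^K$ and $N_{1d}^K$ accumulate at rate $O(1/K)$, the exponential (Freidlin--Wentzell) martingale inequality bounds this by $\exp\!\big(-c\eps^2K/(\log K)^2\big)\to0$. Moreover, if $T_\eps^2\wedge T_0^2>(\log K)^2$ and $Q_{b\eps}>(\log K)^2$, the sandwich holds on all of $[0,(\log K)^2]$ and forces $\widehat{\mathbf N}^{K,-}$ neither to reach $\eps K$ nor $\widehat{\mathbf N}^{K,+}$ to die out by time $(\log K)^2$, an event of probability $\le s_2^--s_2^++o_K(1)$. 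Combining these two estimates yields, for each fixed $\eps$,
\[
\limsup_{K\to\infty}\P\big(Q_{b\eps}\le T_\eps^2\wedge T_0^2\big)=o_\eps(1),\qquad \limsup_{K\to\infty}\P\big(T_\eps^2\wedge T_0^2\wedge Q_{b\eps}>(\log K)^2\big)=o_\eps(1).
\]

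\emph{Branching asymptotics and conclusion.} On $\{T_0^2<T_\eps^2\wedge Q_{b\eps}\}$ the sandwich forces $\widehat{\mathbf N}^{K,-}$ to be extinct at $T_0^2$, hence $\P(T_0^2<T_\eps^2\wedge Q_{b\eps})\le s_2^-$. Conversely, if $\widehat{\mathbf N}^{K,+}$ dies out (probability $s_2^+$), then --- being, whether super- or subcritical, conditionally on extinction a process with extinction time and running maximum having exponential tails --- it does so before time $\log\log K$ and without ever reaching $\eps K$, whence $(N_{1i,\cdot},N_{2,\cdot})$ does so too provided $Q_{b\eps}>\log\log K$ (which holds with probability $\to1$ by the previous paragraph), so that $\P(T_0^2<T_\eps^2\wedge Q_{b\eps})\ge s_2^+-o_K(1)$. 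Letting $K\to\infty$ and then $\eps\downarrow0$ proves~\eqref{secondofpropvirus}. For~\eqref{invasiontimevirus}: the four events $\{T_\eps^2<T_0^2\wedge Q_{b\eps}\}$, $\{T_0^2<T_\eps^2\wedge Q_{b\eps}\}$, $\{Q_{b\eps}\le T_\eps^2\wedge T_0^2\}$, $\{T_\eps^2\wedge T_0^2\wedge Q_{b\eps}=\infty\}$ and null tie events partition the sample space, the last two having total probability $o_\eps(1)$ after $\limsup_K$; hence, up to an $o_\eps(1)$ error, one is on $\{T_\eps^2<T_0^2\wedge Q_{b\eps}\}$, on which (again up to $o_\eps(1)$, by the gap $s_2^--s_2^+$) both $\widehat{\mathbf N}^{K,\pm}$ survive. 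By classical multitype theory (\cite[Section~7.2]{AN72}), on survival $\mathrm{e}^{-\widetilde\lambda^{\pm}t}\big(\widehat N^{\pm}_{1i}(t)+\widehat N^{\pm}_2(t)\big)\to W^{\pm}>0$ almost surely, so $\widehat{\mathbf N}^{K,\pm}$ reaches level $\lfloor\eps K\rfloor$ at time $(1+o(1))\tfrac{\log K}{\widetilde\lambda^{\pm}}$ (the $o(1)$ in $K$), and the sandwich forces
\[
\frac{T_\eps^2}{\log K}\ \in\ \Big[\tfrac{1-o(1)}{\widetilde\lambda^{+}},\ \tfrac{1+o(1)}{\widetilde\lambda^{-}}\Big]\ \subseteq\ \Big[\tfrac1{\widetilde\lambda}-f(\eps),\ \tfrac1{\widetilde\lambda}+f(\eps)\Big]
\]
for $K$ large, with $f(\eps):=c'b\eps$ for a suitable constant $c'$ (using $\widetilde\lambda^{\pm}=\widetilde\lambda+O(b\eps)$ together with the $1+O(b\eps)$ slack from the coupling). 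Taking $\limsup_K$ and inserting~\eqref{secondofpropvirus} gives $\P\big(T_\eps^2<T_0^2\wedge Q_{b\eps},\,|T_\eps^2/\log K-1/\widetilde\lambda|\le f(\eps)\big)=1-s_2+o_\eps(1)$, which is~\eqref{invasiontimevirus}.

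The principal difficulty is the coupling of the first paragraph. The infection transition $(N_{1i},N_2)\mapsto(N_{1i}+1,N_2-1)$ is not monotone in the coordinatewise order, so no off-the-shelf monotone coupling of multitype branching processes is available, and since the mean matrix $J$ of the natural three-type process is reducible, the irreducible-case arguments of \cite{C+19} do not apply verbatim --- this is exactly why type $1d$ is placed in the ``resident'' block and the two-type, irreducible $J_2$ is used instead. The remaining ingredients (the drift and large-deviation estimates for the resident pair, the standard branching asymptotics, and the bootstrap through the joint stopping time $T_\eps^2\wedge T_0^2\wedge Q_{b\eps}$ disentangling their mutual dependence) are routine once the coupling is set up.
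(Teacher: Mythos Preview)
Your overall architecture is that of the paper: residents $=(1a,1d)$ near $(\bar n_{1a},0)$, invaders $=(1i,2)$ sandwiched between two $\eps$-perturbed two-type branching processes with irreducible mean matrices close to $J_2$, then continuity of $(\widetilde\lambda,s_2)$ in the rates and the standard asymptotics from \cite{AN72}. The resident control you sketch (drift stability of the logistic part plus an exponential martingale bound over a polylog window) is a legitimate substitute for the paper's route via Freidlin--Wentzell exit times combined with a Dynkin-formula bound on $\E[T_\eps^2\wedge T_0^2\wedge Q_{b\eps}]$; both yield $\limsup_K\P(Q_{b\eps}\le T_\eps^2\wedge T_0^2)=0$.

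The genuine gap is exactly where you flag it but do not close it: the sandwich coupling for $(N_{1i},N_2)$. Defining $\widehat{\mathbf N}^{K,\pm}$ by ``replacing $\bar n_{1a}$ with $\bar n_{1a}\pm b\eps$'' in the infection rate does \emph{not} yield coordinatewise domination, because the infection move $(y,z)\mapsto(y+1,z-1)$ is not monotone: raising its rate pushes $N_{1i}$ up but $N_2$ down. Your fallback claim that the processes ``differ by at most a multiplicative factor $1+O(b\eps)$ over the $O(\log K)$ window'' is not justified either (and is false as stated: over time $\sim\log K$ a rate perturbation of order $\eps$ produces a size discrepancy of order $K^{O(\eps)}$, not $1+O(\eps)$; only the \emph{hitting time} of level $\eps K$ is perturbed by a factor $1+O(\eps)$, and that is what one must prove, not assume). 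The paper's resolution is to change the \emph{type} of transition, not merely its rate: both $N^{\eps,\pm}$ use the \emph{same} infection rate $(1-q)D(\bar n_{1a}-b\eps)z$, and the remaining mass $2b\eps(1-q)Dz$ is spent on an extra virion death $(y,z)\to(y,z-1)$ in $N^{\eps,-}$ and on a new ``infection without virion loss'' move $(y,z)\to(y+1,z)$ in $N^{\eps,+}$. With these modified processes the coordinatewise inequalities~\eqref{branchingcouplingvirus}--\eqref{originalcouplingvirus} do hold, their mean matrices still converge to $J_2$ as $\eps\downarrow0$, and the rest of your argument goes through. Supplying this construction is the one missing idea.
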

In order to prove the proposition, we first verify the following lemma.
\begin{lemma}\label{lemma-residentsstayvirus}
Under the assumptions of Proposition~\ref{prop-firstphasevirus}, there exist constants $\eps_0>0,b \geq 2$ such that for any $\eps \in (0, \eps_0]$,
\[ \limsup_{K \to \infty} \P \big(Q_{b\eps} \leq T^2_{\eps} \wedge T_0^2 \big)=0. \]
\end{lemma}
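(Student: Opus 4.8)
The plan is to show that, up to the time $T_\eps^2 \wedge T_0^2$, the infection-related populations $N_{1d,t} + N_{1i,t} + N_{2,t}$ are at most $\eps K$, and that this small perturbation cannot push the resident coordinates $(N_{1a,t}^K, N_{1d,t}^K)$ out of the box $[\bar n_{1a} - b\eps, \bar n_{1a} + b\eps] \times [0, b\eps]$ before that time. The key point is that before $T_\eps^2$ the total number of infected hosts and virions is bounded by $\lfloor \eps K\rfloor$, so the rescaled rates governing $(N_{1a}^K, N_{1d}^K)$ differ from the rates of the closed logistic system (with $N_{1i}^K = N_2^K = 0$) by terms of order $\eps$. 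First I would write down explicitly the generator of $(N_{1a,t}^K, N_{1d,t}^K)$ and compare it, on the event $\{t < T_\eps^2\}$, with the generator of an auxiliary two-dimensional density-dependent Markov chain $(\widetilde N_{1a,t}^K, \widetilde N_{1d,t}^K)$ that has no infected/virus input (so $\widetilde N_{1d}^K$ is purely a death-and-resuscitation process started at $0$, hence stays at $0$, while $\widetilde N_{1a}^K$ is the classical logistic birth-death chain). By the functional law of large numbers for density-dependent chains (\cite[Theorem 11.2.1, p.~456]{EK}), $(\widetilde N_{1a,t}^K, \widetilde N_{1d,t}^K)$ converges uniformly on compact time intervals to the solution of $\dot n_{1a} = n_{1a}(\lambda_1 - \mu_1 - C n_{1a})$, $\dot n_{1d} \equiv 0$, which is the constant $(\bar n_{1a}, 0)$ when started there; since $\bar n_{1a}$ is a hyperbolic stable fixed point of the scalar logistic equation, standard large-deviation / exit-time estimates (Freidlin--Wentzell, as used in \cite{C06,C+19,BT20}) give that the auxiliary chain leaves $[\bar n_{1a} - \eps, \bar n_{1a} + \eps] \times [0,\eps]$ only after a time exponentially large in $K$, with probability tending to $1$.

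The second step is to transfer this to the true process via a coupling argument. Before $T_\eps^2 \wedge T_0^2$ the discrepancy between the two generators is controlled: the extra transitions of the true chain (infection events $1a \to 1i$, dormancy-induction events $1a \to 1d$ at rate proportional to $q D N_{1a}^K N_2^K$ and $(1-q) D N_{1a}^K N_2^K$, and the recovery/lysis feedback into $1a$) all have total rate $O(\eps K)$ when $N_{1i}^K + N_2^K \le \eps$, and likewise the competition term involving $N_{1i}^K + N_{1d}^K$ is perturbed by $O(\eps)$. So one can construct a coupling in which, up to $T_\eps^2 \wedge T_0^2$, the true pair $(N_{1a,t}^K, N_{1d,t}^K)$ stays within distance, say, $C'\eps$ (times a bound on elapsed time) of the auxiliary pair — more precisely, I would run a Gronwall-type estimate on $\E|N_{1a,t}^K - \widetilde N_{1a,t}^K| + \E|N_{1d,t}^K - \widetilde N_{1d,t}^K|$ up to the stopping time, or alternatively use a direct comparison of the two chains sandwiching the true one between logistic chains with slightly perturbed parameters $\lambda_1, \mu_1 \pm O(\eps)$, $C \pm O(\eps)$. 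Choosing $b \ge 2$ large enough that $b\eps$ exceeds the $O(\eps)$ coupling error, and noting that $T_\eps^2$ itself is at most of order $\log K$ on the relevant event (from the supercritical branching process growth rate $\widetilde\lambda$, or just $o(K)$ which is what we need here, since exit of the auxiliary chain takes time $e^{cK} \gg \log K$), we conclude that with probability $\to 1$ the box $[\bar n_{1a} - b\eps, \bar n_{1a} + b\eps]\times[0,b\eps]$ is not left before $T_\eps^2 \wedge T_0^2$, i.e. $Q_{b\eps} > T_\eps^2 \wedge T_0^2$.

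I would also need the elementary observation that $N_{1d,t}^K$ stays $\le b\eps$: dormant individuals are produced only through virus attacks on active hosts, at total rate $\le q D \bar n_{1a} (N_{1i,t}^K + N_{2,t}^K) K \le q D (\bar n_{1a}+b\eps)\eps K$ before $T_\eps^2$, and each decays at rate $\kappa\mu_1 + \sigma$, so $N_{1d}^K$ behaves like a subcritical immigration-death process with immigration rate $O(\eps)$ and thus stays $O(\eps)$ with high probability on time scales $o(e^{cK})$ — again a Gronwall or comparison estimate. The main obstacle, I expect, is the bookkeeping in the coupling: one must simultaneously control three things (the active coordinate not exiting its $\eps$-neighbourhood, the dormant coordinate not reaching $\eps$, and the time $T_\eps^2$ not being too large) while only the event $\{t < T_\eps^2\}$ guarantees the bound $N_{1i}^K + N_2^K \le \eps$ that makes the coupling error small — so the argument has to be set up carefully as a simultaneous stopping-time / bootstrap statement, exactly as in the companion papers \cite{C+19,BT20}, rather than proving the three facts independently. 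Once the box-confinement is in place, Lemma \ref{lemma-residentsstayvirus} follows, and it then feeds directly into the branching-process comparison (inequalities \eqref{branchingcouplingvirus}, \eqref{originalcouplingvirus}) used to prove Proposition \ref{prop-firstphasevirus}.
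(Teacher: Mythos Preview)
Your overall strategy --- sandwich $(N_{1a}^K,N_{1d}^K)$ between perturbed logistic-type processes whose equilibria are within $O(\eps)$ of $(\bar n_{1a},0)$, and use Freidlin--Wentzell exit estimates to keep those auxiliary processes in the $b\eps$-box for time $e^{cK}$ --- matches the paper's approach, and your remark about the immigration--death control of $N_{1d}^K$ is also on point.

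There is, however, a genuine gap in the step where you bound the stopping time. You write that $T_\eps^2$ is $O(\log K)$ ``from the supercritical branching process growth rate $\widetilde\lambda$''. But the branching-process sandwich \eqref{branchingcouplingvirus}--\eqref{originalcouplingvirus} is only valid on the event $A_\eps=\{T_0^2\wedge T_\eps^2 < Q_{b\eps}\}$, i.e.\ precisely the event whose probability you are trying to show tends to~$1$. So invoking the $\log K$ growth time here is circular. Your fallback ``or just $o(K)$'' is not substantiated either: a Gronwall estimate on the coupling error would blow up over times comparable to the $e^{cK}$ exit time, so you really need an \emph{a priori} bound on $\E[Q_{b\eps}\wedge T_0^2\wedge T_\eps^2]$ that does not presuppose the box confinement.

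The paper closes this gap with a direct generator computation that you are missing. One writes $g(\mathbf n)=\gamma_{1i} n_{1i}+\gamma_2 n_2$, where $(\gamma_{1i},\gamma_2)$ is a suitably scaled right eigenvector of the mean matrix $J_2$ (using $\widetilde\lambda\neq 0$), chosen so that $\Lcal g(\mathbf N_t^K)\ge N_{1i,t}^K+N_{2,t}^K$ on $\{t\le Q_{b\eps}\wedge T_0^2\wedge T_\eps^2\}$. Dynkin's formula then yields
\[
\E\Big[\int_0^{Q_{b\eps}\wedge T_0^2\wedge T_\eps^2} K\big(N_{1i,t}^K+N_{2,t}^K\big)\,\d t\Big]\le (|\gamma_{1i}|+|\gamma_2|)\,\eps K,
\]
hence $\E[Q_{b\eps}\wedge T_0^2\wedge T_\eps^2]=O(\eps K)$. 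Combined via Markov's inequality with the $e^{KV}$ exit estimate for the sandwiching processes, this kills the remaining term and finishes the lemma. This eigenvector/Dynkin step is the one nontrivial ingredient your plan does not supply.
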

\begin{proof}
We follow the approach of the proof of \cite[Lemma 3.4]{C+19}, with the main difference being that instead of considering only types that are initially of size comparable to $K$ as `resident' types and all other types as `mutant', we also consider the initially microscopic type 1d like as resident (as we discussed in Section~\ref{sec-phase1heuristics}), which requires a bit more care.

We verify our lemma via coupling the rescaled population size $((N_{1a,t}^K,N_{1d,t}^K))_{t \geq 0}$ of types 1a and 1d with two birth-and-death processes, $(N_{1a}^1,N_{1d}^1)=((N_{1a,t}^1,N_{1d,t}^1))_{t \geq 0}$ and $(N_{1a}^2,N_{1d}^2)=((N_{1a,t}^2,N_{1d,t}^2))_{t \geq 0}$, on time scales where the total population of types 1i, and 2 is still small compared to $K$. (These processes will also depend on $K$, but we omit the notation $K$ from their nomenclature for simplicity.) To be more precise, we want to choose  $((N_{1a,t}^1,N_{1d,t}^1))_{t \geq 0}$ and $((N_{1a,t}^2,N_{1d,t}^2))_{t \geq 0}$ so that
\[ N_{\upsilon,t}^1 \leq N_{\upsilon,t}^K \leq N_{\upsilon,t}^2, \qquad \text{a.s.} \qquad \forall t \leq T_0^2 \wedge T_{\eps}^2,~\forall \upsilon \in \{ 1a, 1d \}. \numberthis\label{firstresidentcouplingvirus} \]

In order to satisfy \eqref{firstresidentcouplingvirus}, for all $\eps>0$ small enough, the processes $((N_{1a,t}^1,N_{1d,t}^1))_{t \geq 0}$ and $((N_{1a,t}^2,N_{1d,t}^2))_{t \geq 0}$ can be chosen with the following birth and death rates, for all $i,j \in \N_0$ (where certain kinds of transitions occur for only one of the two processes): 
\begin{align*}
(N_{1a}^1,N_{1d}^1) \colon  & \Big( \frac{i}{K},  \frac{j}{K} \Big) \to \Big( \frac{i+1}{K}, \frac{j}{K} \Big) & \text{at rate } & i\lambda_1, \\
 & \Big( \frac{i}{K}, \frac{j}{K} \Big)  \to  \Big( \frac{i-1}{K}, \frac{j}{K} \Big) & \text{at rate } &i \big( \mu_1+C \Big(\frac{i}{K}+\eps \Big) + D \eps \big), \\
  & \Big( \frac{i}{K}, \frac{j}{K} \Big) \to  \Big( \frac{i}{K}, \frac{j-1}{K} \Big) & \text{at rate } & j \kappa \mu_1, \\
  & \Big( \frac{i}{K}, \frac{j}{K} \Big) \to  \Big( \frac{i+1}{K}, \frac{j-1}{K} \Big) & \text{at rate } & j \sigma,
\end{align*}
and
\begin{align*}
(N_{1a}^2,N_{1d}^2) \colon  & \Big( \frac{i}{K}, \frac{j}{K} \Big) \to \Big( \frac{i+1}{K}, \frac{j}{K} \Big) & \text{at rate } & i\lambda_1+r\eps, \\
 & \Big( \frac{i}{K}, \frac{j}{K} \Big) \to  \Big( \frac{i-1}{K}, \frac{j}{K} \Big) & \text{at rate } &i \big( \mu_1+C \frac{i}{K}\big), \\
 & \Big( \frac{i}{K}, \frac{j}{K} \Big) \to  \Big( \frac{i}{K}, \frac{j+1}{K} \Big) & \text{at rate } & i q D  \eps,  \\
 & \Big( \frac{i}{K}, \frac{j}{K} \Big) \to  \Big( \frac{i}{K}, \frac{j-1}{K} \Big) & \text{at rate } & j \kappa \mu_1, \\
  & \Big( \frac{i}{K}, \frac{j}{K} \Big) \to  \Big( \frac{i+1}{K}, \frac{j-1}{K} \Big) & \text{at rate } & j \sigma,
\end{align*}
both started from $(m_{1a}^K,0)$ at time zero. 
Indeed, given the initial conditions, $((N_{1a,t}^1,N_{1d,t}^1))_{t \geq 0}$ has a dormant coordinate absorbed at zero, since it starts at zero and is monotonically decreasing. In particular, in this process, active individuals that would become dormant (having a chance to resuscitate) in our original population process $(\mathbf N_t^K)_{t \geq 0}$ just die immediately, and recoveries are also ignored. The rate of death of individuals of type 1a due to competition is maximized under the constraint that the total population of types 1i and 2 is at most $\eps K$. In contrast, in the process  $((N_{1a,t}^2,N_{1d,t}^2))_{t \geq 0}$, successful virus attacks are ignored, whereas the birth rate is increased by the maximal possible amount of recovered individuals on $[0,T_0^2 \wedge T_\eps^2]$. Unsuccessful virus attacks are replaced by events where a dormant individual emerges but no active individual gets lost. Further, deaths of type 1a individuals due to competition with other types (i.e.\ 1d or 2) are ignored. Together with the fact that the rates for birth, natural death of actives, death by competition, death of dormant individuals, and resuscitation rates are the same for all these three processes, it follows that the coupling~\eqref{firstresidentcouplingvirus} is satisfied.
%$N_{1a,t}^1$ is constructed in such a way that the resuscitation of dormant individuals and the recovery of infected individuals (which in total have at most population size $\eps K$ for $ t \leq U_0^2 \wedge U_{\eps}^2$) is completely ignored, while the the death rate, the maximal amount of viruses possible during this period is taken into account for the rate of disappearance of type 1a individuals due to getting infected or dormancy. In contrast, for $N_{1a,t}^2$, the rate of death due to infection or dormancy is replaced by 0, and at the birth rate, a rough upper bound on resuscitating dormant individuals and recovering infected ones is taken into account.

Let us estimate the time until which the processes $(N_{1a}^1,N_{1d}^1) $ and $(N_{1a}^2,N_{1d}^2) $ stay close to the value $(\barn1a,0)$. We define the stopping times
\[ Q^{j}_{\eps}:=\inf \big\{ t \geq 0 \colon N_{1a,t}^j \notin [\barn1a-\eps,\barn1a+\eps] \text{ or } N_{1d,t}^j > \eps  \big\}, \qquad j\in \{1,2\},\eps>0. \]
As $K \to \infty$, according to \cite[Theorem 2.1, p.~456]{EK}, uniformly on any fixed time interval of the form $[0,T]$, $T>0$, $(N_{1a,t}^{1},N_{1d,t}^1)$ converges in probability to the unique solution to
\[ 
\begin{aligned}
\dot{n}_{1a,1}(t) & = n_{1a,1}(t)(\lambda_1-\mu_1-C (n_{1a,1}(t)+\eps)-D\eps)+\sigma n_{1d,1}(t), \\
\dot{n}_{1d,1}(t) & = -(\kappa\mu_1+\sigma) n_{1d,1}(t),
\end{aligned} \numberthis\label{1ODEvirus}
\]
given that the initial conditions converge in probability to the initial condition of the limiting differential equation.
Similarly, for large $K$, the dynamics of $(N_{1a,t}^{2},N_{1d,t}^2)$ is close to the one of the unique solution to
\[ 
\begin{aligned}
\dot{n}_{1a,2}(t) & = n_{1a,1}(t)(\lambda_1-\mu_1-C n_{1a,2}(t))+\sigma n_{1d,2}(t)+r \eps, \\
\dot{n}_{1d,2}(t) & = qD\eps n_{1a,2}(t)-(\kappa\mu_1+\sigma) n_{1d,2}(t),
\end{aligned} \numberthis\label{2ODEvirus}
\]
where for both systems of ODEs, the corresponding initial condition is $(\barn1a,0)$. Both systems are such that all their coordinatewise nonnegative solutions are bounded (whereas the positive orthant is positively invariant under both systems). Further, if $\eps>0$ is sufficiently small, then the equilibrium $(0,0)$ is unstable, and there is a unique additional coordinatewise nonnegative equilibrium, given as $({\bar n}_{1a}^{(1)},{\bar n}_{1d}^{(1)}):=(\frac{\lambda_1-\mu_1-(C+D)\eps}{C},0)$ (for \eqref{1ODEvirus}) respectively as $({\bar n}_{1a}^{(2)},{\bar n}_{1d}^{(2)})$ characterized by
\[ {\bar n}_{1d}^{(2)} = \frac{q D {\bar n}_{1a}^{(2)} \eps}{\kappa\mu_1+\sigma} \]
and thus ${\barn1a}^{(2)}$ being the unique positive zero locus of the quadratic polynomial
\[ h(x)= x\Big(\lambda_1-\mu+\frac{\sigma q D\eps}{\kappa\mu_1+\sigma} - C x \Big)+r\eps =0 \]
(for \eqref{2ODEvirus}). Hence, we have that 
\[ \lim_{\eps \downarrow 0} ({\bar n}_{1a}^{(1)},{\bar n}_{1d}^{(1)}) =  \lim_{\eps \downarrow 0} ({\bar n}_{1a}^{(2)},{\bar n}_{1d}^{(2)}) = (\barn1a, 0), \]
and thus in particular
\[ 0 \leq {\bar n}_{1d}^{(2)} = \frac{q D {\bar n}_{1a}\eps}{\kappa\mu_1+\sigma} (1+o_\eps(1))=O(\eps). \]
We can therefore find $\eps_0>0$ and $b\geq 2$ such that for all $\eps \in (0,\eps_0)$, $j \in \{ 1,2\}$, and $\upsilon \in \{ 1a, 1d\}$ we have
\[ \big| \bar n_{\upsilon}-\bar n_{\upsilon}^{(j)} \big| \leq b \eps \qquad \text{ and } 0\notin [\barn1a-b\eps,\barn1a+b\eps ], \numberthis\label{bdefvirus}\]
where we wrote $\bar n_{1d}=0$. 

Now, thanks to a result about exit of jump processes from a domain by Freidlin and Wentzell \cite[Chapter 5]{FW84} (see~\cite[Section 4.2]{C06} for details in a very similar situation), %analogously to \cite[Section 3.1.2]{C+19}, 
there exist two families (over $K$) of Markov jump processes $(\widetilde { N}^1_{1a},\widetilde N^1_{1d})=((\widetilde {N}^1_{1a,t},\widetilde N^1_{1d,t}))_{t \geq 0}$ and $(\widetilde { N}^2_{1a},\widetilde N^2_{1d})=((\widetilde {N}^2_{1a,t},\widetilde N^2_{1d,t}))_{t \geq 0}$ with positive, bounded, Lipschitz continuous transition rates that are uniformly bounded away from 0 such that for
\[ \widetilde Q^{j}_{\eps}:=\inf \big\{ t \geq 0 \colon \widetilde N^j_{1a,t} \notin [\barn1a-\eps,\barn1a+\eps] \text{ or } \widetilde N^j_{1d,t} > \eps \big\}, \qquad j \in \{1,2\},~\eps>0, \]
there exists $V>0$ such that
\[ \P(Q^{1}_{b\eps}>\e^{KV}) = \P(\widetilde Q^{1}_{b\eps}>\e^{KV}) \underset{K \to \infty}{\longrightarrow} 1. \numberthis\label{FW1virus} \] 
Using similar arguments for $N_{1a}^2$, we derive that for $\eps>0$, $V>0$ small enough, we have that
\[ \P(Q^2_{b\eps}>\e^{KV})\underset{K \to \infty}{\longrightarrow} 1.  \numberthis\label{FW2virus} \]
%\color{red} As far as I see, with $N_{1a,t}^2$ things are even easier because the corresponding ODE has stable equilibrium equal to $\bar n_1$. \color{black} 
Now, on the event $\{ Q_{b\eps} \leq T_0^2 \wedge T_{\eps}^2 \}$ we have $Q_{b\eps}\geq \widetilde Q_{b\eps}^{1} \wedge \widetilde Q_{b\eps}^{2}$. Using \eqref{FW1virus} and \eqref{FW2virus}, we derive that
\[ \limsup_{K \to \infty} \P \big( Q_{b\eps} \leq \e^{KV}, Q_{b\eps} \leq T_0^2 \wedge T_{\eps}^2 \big) = 0. \]
Moreover, using Markov's inequality,
\begin{align*}
    \P(Q_{b\eps} \leq T_0^2 \wedge T_{\eps}^2) & \leq \P \big( Q_{b\eps} \leq \e^{KV}, Q_{b\eps} \leq T_0^2 \wedge T_{\eps}^2 \big) + \P(Q_{b\eps} \wedge T_0^2 \wedge T_{\eps}^2 \geq \e^{KV} \big) \\ &  \leq \P \big( Q_{b\eps} \leq \e^{KV}, Q_{b\eps} \leq T_0^2 \wedge T_{\eps}^2 \big) + \e^{-KV} \E(Q_{b\eps} \wedge T_0^2 \wedge T_{\eps}^2 ). 
\end{align*}
Since we have
\[ \E \big[Q_{b\eps}\wedge T_0^2 \wedge T_{\eps}^2 \big] \leq \E \Big[\int_0^{Q_{b\eps}\wedge T_0^2 \wedge T_{\eps}^2} K (N_{1i,t}^K+N_{2,t}^K) \d t \Big], \]
it suffices to show that there exists $\widetilde C>0$ such that 
\[ \E \Big[\int_0^{Q_{b\eps}\wedge T_0^2 \wedge T_{\eps}^2}  K (N_{1i,t}^K+N_{2,t}^K) \d t \Big] \leq \widetilde C \eps K. \numberthis\label{expectedstoppingtimevirus} \]
This can be done similarly to \cite[Section 3.1.2]{C+19}, the only difference being again that type 1d is considered as `resident' and not as `mutant' type, but for the reader's convenience we provide the details. We claim that it is enough to show that there exists a function $g \colon (\smfrac{1}{K} \N_0)^4 \to \R$ defined as
\[ g(\widetilde n_{1a},\widetilde n_{1d},\widetilde n_{1i},\widetilde n_{2})=\gamma_{1i} \widetilde n_{1i}+\gamma_2  \widetilde n_2 \numberthis\label{gammavirus} \]
for suitably chosen $\gamma_{1i},\gamma_2 \in \R$, such that
\[ \Lcal g(\mathbf N_{t}^K) \geq N_{1i,t}^K+N_{2,t}^K \numberthis\label{largegeneratorvirus}\]
where $\Lcal$ is the infinitesimal generator of $(\mathbf N_t^K)_{t \geq 0}$. 
Indeed, if \eqref{largegeneratorvirus} holds, then thanks to Dynkin's formula we have
\[ 
\begin{aligned}
\E& \Big[\int_0^{Q_{b\eps} \wedge T_0^2 \wedge T_{\eps}^2} K (N_{1i,t}^K+N_{2,t}^K) \d t \Big]  \leq  \E \Big[\int_0^{Q_{b\eps} \wedge T_0^2 \wedge T_{\eps}^2} K\Lcal g(\mathbf N_{t}^K)\d t \Big] \\
&=\E\big[ K g(\mathbf N_{Q_{b\eps} \wedge T_0^2 \wedge T_{\eps}^2}^K)-Kg(\mathbf N_{0}^K) \big]  \leq (|\gamma_{1i}|+|\gamma_{2}|) (\eps K-1),
\end{aligned}
\]
which implies the existence of $\widetilde C>0$ such that \eqref{expectedstoppingtimevirus} holds for all $\eps>0$ small enough, independently of the signs of $\gamma_{1i},\gamma_2$. %(We note that from our analysis below it will be clear that $\gamma_{1d}<0$.) 
Here, Dynkin's formula can indeed be applied because $ \E [Q_{b\eps}\wedge T_0^2 \wedge T_{\eps}^2]$ is finite. That holds because given our initial conditions, with positive probability the single initial type 2 individual dies due to natural death within a unit length of time before any event of the process $\mathbf N_t^K$ occurs, and hence already $T_0^2$ is stochastically dominated by a geometric random variable, which has all moments. We now apply the infinitesimal generator $\mathcal L$ to the function $g$ introduced in \eqref{gammavirus} once again. The infinitesimal generator $\mathcal L$ is such that $\widetilde{\mathcal L}(\cdot)=\mathcal L(K\cdot)$ maps a bounded measurable function $h \colon \N^4 \to \R$ to $\widetilde{\mathcal L} h \colon \N^4 \to \R$ defined as follows
\[
\begin{aligned}
\widetilde{\mathcal L} h(x,y,z,w) & = (h(x+1,y,z,w)-h(x,y,z,w)) \lambda_1 x \\ & \qquad + (h(x-1,y,z,w)-h(x,y,z,w))(\mu_1+ C \smfrac{x(x+y+z)}{K} ) \\ & \qquad + (h(x-1,y+1,z,w)-h(x,y,z,w))\smfrac{Dqxw}{K}  \\ & \qquad + (h(x-1,y,z+1,w-1)-h(x,y,z,w))\smfrac{D(1-q)xw}{K} \\ & \qquad + (h(x+1,y-1,z,w)-h(x,y,z,w))\sigma y +(h(x,y-1,z,w)-h(x,y,z,w))\kappa\mu_1 y \\ & \qquad + (h(x,y,z-1,w+m)-h(x,y,z,w))v z + (h(x+1,y,z-1,w)-h(x,y,z,w)) r z \\ & \qquad + (h(x,y,z,w-1)-h(x,y,z,w))\mu_2 w.
\end{aligned}
\] 
This yields
\[ 
\begin{aligned}
\Lcal g(\mathbf N_t^K) & = D N_{1a,t}^KN_{2,t}^K ( (1-q)(\gamma_{1i}-\gamma_2)) +(m\gamma_2-\gamma_{1i}) v N_{1i,t}^K -\gamma_{1i} r N_{1i,t}^K -\gamma_2 \mu_2 N_{2,t}^K.
\end{aligned} \]
Hence, according to \eqref{gammavirus}, it suffices to show that there exist $\gamma_{1i},\gamma_2 \in \R$ such that the following system of inequalities is satisfied:
\begin{align*}
-(r+v)N_{1i,t}^K \gamma_{1i}+mv N_{2,t}^K \gamma_2 & >N_{1i,t}^K, \\
 (1-q)D N_{1a,t}^K N_{2,t}^K \gamma_{1i} -((1-q)D N_{1a,t}^K N_{2,t}^K+\mu_2) \gamma_2 & >N_{2,t}^K. \numberthis\label{lastofineq}
\end{align*}
We claim that as long as $t \leq Q_{b\eps} \wedge T_0^2 \wedge T_{\eps}^2$, the transpose $J_2^K(t)$ of the matrix 
\[ (J_2^K(t))^T = \begin{pmatrix} -r-v & m v \\  (1-q) D N_{1a,t}^K & -(1-q)DN_{1a,t}^K-\mu_2 \end{pmatrix}\]
is entrywise close to the $2\times 2$ mean matrix $J_2$ introduced in \eqref{J2def}. This is certainly true because the first rows of the two matrices are equal, and for $t \leq Q_{b\eps} \wedge T_0^2 \wedge T_{\eps}^2$ we have that 
\[ \big| \big( J_2^K(t)-J_2 \big)_{lj} \big| \leq D b\eps, \qquad  \forall j,l\in \{1,2\}. \numberthis\label{JJclosevirus} \]
Let us now choose $(\gamma_{1i},\gamma_2)$. %For $t \leq Q_{b\eps}^2 \wedge T_0^1 \wedge U_{\eps}^1$ we have that 
%\[ J^K_{11}= \lambda_1-\mu_1-C N_{1a,t}^K -(C+\tau) N_{2,t}^K \geq \lambda_1-\mu_1-C \eps-(C+\tau) (\bar n_2+2\eps). \]
Thanks to the assumption that $\lambda_1>\mu_1$, given that $\eps>0$ is sufficiently small, $J_2^K(t) + (r+v+(1-q)D\barn1a+\mu_2)\mathrm{Id} $ is a matrix with positive entries, where $\mathrm{Id}$ denotes the $2\times 2$ identity matrix. Hence, writing $u_0= r+v+(1-q)D\barn1a+\mu_2$, it follows from the Perron--Frobenius theorem that there exists a strictly positive right eigenvector $\widetilde \Gamma=(\widetilde \gamma_{1i},\widetilde\gamma_{2})$ of $J_2+u_0\mathrm{Id}$ corresponding to the eigenvalue $\widetilde \lambda +u_0$. Then we have 
\[ (J_2+u_0 \mathrm{Id}) \widetilde \Gamma^T = (\widetilde \lambda + u_0)\widetilde \Gamma^T, \]
and thus also $J_2\widetilde \Gamma^T = \widetilde \lambda \widetilde \Gamma^T$. Since by assumption $\widetilde \lambda \neq 0$ and $\widetilde \Gamma$ has two positive coordinates, we obtain that
\[ \Gamma: = (\gamma_{1i},\gamma_{2}):=2(\widetilde\lambda(\widetilde \gamma_{1i} \wedge \widetilde \gamma_2))^{-1} \widetilde \Gamma\]
is well-defined, and it solves 
\[ J_2\Gamma^T = \widetilde \lambda \Gamma^T, \numberthis\label{GammaEVeqvirus} \] further, $\widetilde\lambda \gamma_j \geq 2$ holds for all $j \in \{ 1i,2\}$. Now, using \eqref{JJclosevirus} and \eqref{GammaEVeqvirus}, we obtain
\[
\begin{aligned}
\big| & ((1-q)D N_{1a,t}^K (\gamma_{1i}-\gamma_2)-\mu_2\gamma_2)-((1-q)D\barn1a(\gamma_{1i}-\gamma_2)-\mu_2\gamma_2) \big| \leq Db\eps (|\gamma_{1i}|+|\gamma_{2}|). 
\end{aligned}\]
Finally, since $\widetilde \lambda \gamma_j \geq 2$ holds for all $j\in\{1i,2\}$, it follows that if $\eps>0$ is small enough, then as long as $t \leq Q_{b\eps} \wedge T_0^2 \wedge T_{\eps}^2$, the inequality~\eqref{lastofineq} is satisfied. This, together with the fact that $J_2$ and $J_2^K(t)$ have the same first rows, implies \eqref{largegeneratorvirus}, and hence the proof of Lemma~\ref{lemma-residentsstayvirus} is concluded. 
\end{proof}
\begin{proof}[Proof of Proposition~\ref{prop-firstphasevirus}]
Now, similarly to the proof of Proposition~\cite[Proposition 3.1]{C+19}, we consider our population process on the event
\[ A_\eps := \{ T_0^2 \wedge T_{\eps}^2 < Q_{b\eps} \} \numberthis\label{Aepsdefvirus}\]
for sufficiently small $\eps>0$, where we fix $b \geq 2$ corresponding to Lemma~\ref{lemma-residentsstayvirus} for the rest of the proof. On this event, the invasion or extinction of the type 1i and 2 populations will happen before the rescaled type 1a population leaves a small neighbourhood of the equilibrium $\barn1a$ of radius $b\eps$ or the rescaled type 1d population reaches size $b\eps$. On $A_\eps$ we couple the process $(KN_{1i,t}^K,KN_{2,t}^K)$ with two 2-type branching processes $N^{\eps,-}=((N_{1i,t}^{\eps,-},N_{2,t}^{\eps,-}))_{t \geq 0}$ and $N^{\eps,+}=((N_{1i,t}^{\eps,+},N_{2,t}^{\eps,+}))_{t \geq 0}$ (which do not depend on $K$) such that almost surely, for any $0\leq t <  t_\eps:=T_0^2 \wedge T_{\eps}^2 \wedge Q_{b\eps}$,
\begin{align}
N_{j,t}^{\eps,-} & \leq \widehat N_{j}(t) \leq N_{j,t}^{\eps,+}, \qquad \forall j \in \{1i,2\}, \text{ and} \label{branchingcouplingvirus} \\ 
N_{j,t}^{\eps,-} & \leq K N_{j,t}^K \leq N_{j,t}^{\eps,+}, \qquad \forall j \in \{1i,2\} \label{originalcouplingvirus}
\end{align} 
where we again recall the approximating branching process $(\widehat{\mathbf N}(t))_{t \geq 0}=((\widehat N_{1d,t},\widehat N_{1i,t},\widehat N_{2,t}))_{t \geq 0}$ defined in Section~\ref{sec-phase1heuristics}, more precisely its two-dimensional projection $((\widehat N_{1i,t},\widehat N_{2,t}))_{t \geq 0}$.
We claim that in order to satisfy \eqref{branchingcouplingvirus} and \eqref{originalcouplingvirus}, these processes can be defined as follows: $N^{\eps,-}$ having transition rates 
\begin{itemize}
%\item $(x,y,z) \to (x+1,y,z)$ at rate $q D (\barn1a+b\eps) z$,
%\item $(x,y,z) \to (x-1,y,z)$ at rate $(\kappa\mu_1+\sigma)x$,
\item $(y,z) \to (y+1,z-1)$ at rate $(1-q)D(\barn1a-b\eps) z$,
\item $(y,z) \to (y-1,z)$ at rate $ry$,
\item $(y,z) \to (y-1,z+m)$ at rate $vy$,
\item $(y,z) \to (y,z-1)$ at rate $\mu_2 z+2 b \eps(1-q) Dz$;
\end{itemize} 
and $N^{\eps,+}$ having transition rates
\begin{itemize}
%\item $(x,y,z) \to (x+1,y,z)$ at rate $q D (\barn1a-b\eps) z$,
%\item $(x,y,z) \to (x-1,y,z)$ at rate $(\kappa\mu_1+\sigma)x$,
\item $(y,z) \to (y+1,z-1)$ at rate $(1-q)D(\barn1a-b\eps) z$,
\item $(y,z) \to (y+1,z)$ at rate $2b\eps (1-q)Dz$,
\item $(y,z) \to (y-1,z)$ at rate $ry$,
\item $(y,z) \to (y-1,z+m)$ at rate $vy$,
\item $(y,z) \to (y,z-1)$ at rate $\mu_2 z$,
\end{itemize} 
for $y,z \in \N_0$. %Although~\eqref{branchingcouplingvirus} is already satisfied for $b=2$, in certain cases we will need to choose a larger $b$, namely the one corresponding to Lemma~\ref{lemma-n1isign}. 
In order to ensure that the coupling~\eqref{branchingcouplingvirus}--\eqref{originalcouplingvirus} holds, we define $N^{\eps,-}$, $N^{\eps,+}$, $(\widetilde{\mathbf N}(t))$, and $\mathbf N_{t}$ for $t \in [0,t_\eps)$ using the same Poissonian construction, which we provide in Appendix~\ref{sec-coupling}, where we assume that in accordance with the initial condition of Theorem~\ref{thm-viruscoexprob}, the last (virus) coordinate of all these branching processes equals 1 and their penultimate (infected host) coordinate equals 0 at time 0, so that the coupling holds for $t=0$.

Let us now argue that this construction is suitable for the coupling~\eqref{branchingcouplingvirus} and \eqref{originalcouplingvirus}. All the four processes involved in these two chains of inequalities have the same rates for reproduction, death by lysis, and death of viruses. The only difference is the infection mechanism in case of successful virus attacks. Indeed, for all the four processes involved, starting from a state $(y,z) \in \N_0^2$, at rate at least $(1-q)D(\barn1a-b\eps) z$ a successful virus attack happens. However, at additional rate $2b\eps (1-q) D z$, for the two coupled branching processes different events occur: $N^{\eps,-}$ has additional death of viruses without infecting any active individual, whereas $N^{\eps,+}$ has additional successful virus attacks where the involved virus does not even die. At the same time, for the approximating branching process $((\widehat N_{1i,t},\widehat N_{2}(t)))_{t \geq 0}$, at additional rate $b\eps(1-q) D z$, successful virus attacks happen. These are less beneficial for types 1i and 2 than the additional events of $N^{\eps,+}$, since each successful virus attack kills a virus, and the total rate of virus attacks is reduced. %(because the aforementioned additional rate lies in $[0, 2b\eps(1-q)Dz]$ on the time interval $[0, T_0^2 \wedge T_\eps^2 \wedge Q_{b\eps}]$). 
Hence the second inequality in \eqref{branchingcouplingvirus}. On the other hand, any of these additional virus attacks of $((\widehat N_{1i,t},\widehat N_{2}(t)))_{t \geq 0}$ is better for the type 1i and 2 population than the additional virus deaths of $N^{\eps,-}$, and even no event at all is better than those virus deaths, which implies the first inequality in \eqref{branchingcouplingvirus}. As for our original population process, on the event $A_\eps$, on the time interval $[0,T_0^2 \wedge T_{\eps}^2 \wedge Q_{b\eps}]$, starting from $(y,z)$, $(N_{1i,t}^K,N_{2,t}^K)$ has successful virus attacks at rate between $(1-q)D(\barn1a-b\eps) z$ and $(1-q)D(\barn1a+b\eps) z$. Hence, in \eqref{originalcouplingvirus}, the second inequality follows analogously to the one of \eqref{branchingcouplingvirus}. Recalling that an additional successful virus attack or an event where nothing happens is better for types 1i and 2 than a virus death (without virus attack), also the first inequality of \eqref{originalcouplingvirus} follows.

For $\diamond \in \{ +,-\}$, let $s_2^{(\eps,\diamond)}$ denote the extinction probability of the process $N_{t}^{\eps,\diamond}$ started from $ N_{0}^{\eps,\diamond}=(0,1)$. The extinction probability of a 2-type branching process
having the same kind of transitions as $(\widehat {N}_{1i}(t),\widehat N_{2}(t))_{t \geq 0}$ is continuous with respect to the rates of virus-induced dormancy, death of dormant individuals, resuscitation of dormant individuals, infection, recovery, lysis, and death of viruses, further, if additionally there are $(y,z)\to(y+1,z)$ (i.e.\ `infection without death of the involved virus') type transitions, then the extinction probability is also continuous with respect to the rate of these. These assertions can be proven analogously to \cite[Section A.3]{C+19}.

Hence, it follows from \eqref{branchingcouplingvirus} that for fixed $\eps>0$,
\[ s_2^{(\eps,+)} \leq s_2 \leq s_2^{(\eps,-)} \]
and for $\diamond \in \{ +,-\}$,
\[ 0 \leq \liminf_{\eps \downarrow 0} \big| s_2^{(\eps,\diamond)}-s_2 \big| \leq \limsup _{\eps \downarrow 0} \big| s_2^{(\eps,\diamond)}-s_2 \big| \leq \limsup_{\eps \downarrow 0} \big| s_2^{(\eps,-)}-s_2^{(\eps,+)} \big| = 0, \numberthis\label{qineqvirus} \]
where we recall the extinction probability $s_2$ of the approximating branching process $((\widehat N_{1i}(t),\widehat N_2(t)))_{t \geq 0}$ started from $(0,1)$ (equivalently, the one of $(\widehat {\mathbf N}(t))_{t \geq 0}$ started from $(0,0,1)$, see \eqref{qdefvirus}). 

Next, we show that the probabilities of extinction and invasion of the infected and virus coordinates $((N_{1i,t}^K,N_{2,t}^K))_{t \geq 0}$ of the original population process $(\mathbf N_t^K)_{t \geq 0}$ started from $(m_{1a}^K,0,0,1/K)$ also converge to $s_2$ and $1-s_2$, respectively, with high probability as $K \to \infty$. We define the stopping times, for $\diamond \in \{ +,- \}$,
\[ T_x^{(\eps,\diamond),2} := \inf \{ t >0 \colon N^{(\eps,\diamond)}_{1i,t}+N^{(\eps,\diamond)}_{2,t} = \lfloor K x \rfloor \}, \qquad x \geq 0. \]
Using the coupling in \eqref{originalcouplingvirus}, which is valid on $A_\eps$, we have
\[ \P\big(T_{\eps}^{(\eps,-),2} \leq T_0^{(\eps,-),2}, A_\eps\big) \leq \P\big(T_{\eps}^{2} \leq T_0^{2},A_\eps \big) \leq  \P\big(T_{\eps}^{(\eps,+),2} \leq T_0^{(\eps,+),2}, A_\eps\big). \numberthis\label{couplednonextinctionvirus} \] 
Indeed, if a process reaches the size $K\eps$ before dying out, then the same holds for a larger process as well. Now, we obtain
%%%%%%removed independence beginning
%\color{red}However, $A_\eps$ is independent of $(( N_{1i,t}^{\eps,\diamond},N_{2,t}^{\eps,\diamond}))_{t \geq 0}$ both for $\diamond = +$ and for $\diamond = -$ (since these branching processes do not depend on $K$), and thus\color{red}
%\[ \liminf_{K \to \infty} \P\big( T_{\eps}^{(\eps,-),2}\leq T_0^{(\eps,-),2}, A_\eps \big)=\liminf_{K \to \infty}   \P(A_\eps)\P\big( T_{\eps}^{(\eps,-),2}  \leq T_{0}^{(\eps,-),2} \big) \geq (1-s_2^{(\eps,-)})(1-o_\eps(1)) \numberthis\label{eps-LB-virus} \]
%and\color{black}
%%%%removed independence end
%old version with independence beginning
%\[ \limsup_{K \to \infty}  \P\big( T_{\eps}^{(\eps,+),2}\leq T_0^{(\eps,+),2}, A_\eps \big) =\limsup_{K \to \infty}   \P(A_\eps) \P\big( T_{\eps}^{(\eps,+),2} \leq T_{0}^{(\eps,+),2} \big) \leq (1-s_2^{(\eps,+)})(1+o_\eps(1)). \numberthis\label{eps+UB-virus} \]
%old version with independence end
\[ \limsup_{K \to \infty}  \P\big( T_{\eps}^{(\eps,+),2}\leq T_0^{(\eps,+),2}, A_\eps \big) \leq \limsup_{K \to \infty}   \P\big( T_{\eps}^{(\eps,+),2} \leq T_{0}^{(\eps,+),2} \big) \leq (1-s_2^{(\eps,+)})(1+o_\eps(1)).\numberthis\label{eps+UB-virus} \]
Further,
\[ \limsup_{K\to\infty} \P \big( T_{0}^{(\eps,+),2}\leq T_\eps^{(\eps,+),2}, A_\eps \big)  \leq \limsup_{K \to \infty}   \P\big( T_{0}^{(\eps,+),2} \leq T_{\eps}^{(\eps,+),2} \big) \leq s_2^{(\eps,+)} (1+o_\eps(1)),  \]
and thus 
\[ 
\begin{aligned}
\liminf_{K\ \to\infty}  \P\big( T_{\eps}^{(\eps,+),2}\leq T_0^{(\eps,+),2}, A_\eps \big) &\geq \liminf_{K\to\infty} \P(A_\eps) - \limsup_{K\to\infty} \P \big( T_{0}^{(\eps,+),2} \leq T_\eps^{(\eps,+),2}, A_\eps \big) \\ &\geq (1-o_\eps(1))-s_2^{(\eps,+)}(1+o_\eps(1)) = (1-s_2^{(\eps,+)})(1-o_\eps(1)). 
\end{aligned} \numberthis\label{eps-LB-virus} \] \color{black}

Letting $K \to \infty$ in \eqref{couplednonextinctionvirus} and applying \eqref{eps+UB-virus} and \eqref{eps-LB-virus}  yields that %any limit point $p_0^{\eps}$ of $\P(T_{\eps}^{2} \leq T_0^{2},A_\eps )$ satisfies
\[
\begin{aligned}
(1-s_2^{(\eps,-)})(1-o_\eps(1)) & \leq \liminf_{K \to \infty} \P\big( T_{\eps}^{(\eps,-),2} \leq T_0^{(\eps,-),2}, A_\eps \big) \leq \limsup_{K \to \infty}  \P\big( T_{\eps}^{(\eps,+),2}\leq T_0^{(\eps,+),2}, A_\eps \big) \\ & \leq (1-s_2^{(\eps,+)})(1+o_\eps(1)).
\end{aligned} \]
Hence,
\[ \limsup_{K \to \infty} \big| \P(T_{\eps}^{2} \leq T_0^{2},A_\eps )-(1-s_2) \big|=o_\eps(1), \]
as required. The equation \eqref{secondofpropvirus} can be derived similarly. 

Finally, we show that in the case when the epidemic becomes macroscopic (which happens with probability tending to $1-s_2$), the time before the mutant population reaches size $K \eps$ is of order $\log K/\widetilde \lambda$, where we recall the largest eigenvalue $\widetilde\lambda$ of the mean matrix $J_2$, which was defined in \eqref{lambdatildedefvirus}. Having \eqref{secondofpropvirus}, we can without loss of generality assume that $s_2<1$, which is equivalent to the condition \eqref{viruscoexcond} and the one $\widetilde\lambda>0$.

For $\diamond \in \{ +,- \}$, let $\widetilde\lambda^{(\eps,\diamond)}$ denote the largest eigenvalue of the mean matrix corresponding to the branching process $N^{\eps,\diamond}$. Since $s_2<1$, this eigenvalue is positive for all sufficiently small $\eps>0$ and converges to $\widetilde\lambda$ as $\eps \downarrow 0$. In other words, there exists a nonnegative function $f \colon (0,\infty) \to (0,\infty)$ with $\lim_{\eps \downarrow 0} f(\eps)=0$ such that for any $\eps>0$ small enough,
\[ \Big| \frac{\widetilde\lambda^{(\eps,\diamond)}}{\widetilde\lambda}-1\Big| \leq \frac{f(\eps)}{2}. \numberthis\label{eigenvaluesclose}\]
Let us fix $\eps$ small enough such that \eqref{eigenvaluesclose} holds. Then from the coupling \eqref{originalcouplingvirus} we deduce that
\[ \P \Big( T^{(\eps,-),2}_{\eps} \leq T^{(\eps,-),2}_{0} \wedge \frac{\log K}{\widetilde\lambda} (1+f(\eps)),A_\eps \Big) \leq \P \Big( T^{2}_{\eps} \leq T^{2}_{0} \wedge \frac{\log K}{\widetilde \lambda} (1+f(\eps)),A_\eps \Big). \]
Using this %together with the independence between $A_\eps$ and $(N_{2,t}^{\eps,\diamond})_{t \geq 0}$ 
and employing \cite[Section 7.5 in Chapter V]{AN72}, we obtain for $\eps>0$ small enough (in particular such that $f(\eps)<1$) 
\[ 
\begin{aligned}
& \liminf_{K \to \infty} \P \Big( T^{(\eps,-),2}_{\eps} \leq T^{(\eps,-),2}_{0} \wedge \frac{\log K}{\widetilde \lambda} (1+f(\eps)),A_\eps \Big)
\\ &\geq \liminf_{K \to \infty} \P \Big( T^{(\eps,-),2}_{\eps} \leq T^{(\eps,-),2}_{0} \wedge \frac{\log K}{\widetilde \lambda} (1+f(\eps))  \Big) - \limsup_{K\to\infty} \P\big( A_\eps^c \big)
\\ &= \liminf_{K \to \infty} \P \Big( T^{(\eps,-),2}_{\eps} \leq \frac{\log K}{\widetilde \lambda} (1+f(\eps)) \Big) - o_\eps(1) \\
&\geq  \liminf_{K \to \infty} \P \Big( T^{(\eps,-),2}_{\eps} \leq \frac{\log K}{\widetilde \lambda^{(\eps,-)}} \big( 1-\frac{f(\eps)}{2} \big) (1+f(\eps))  \Big) - o_\eps(1) \\ & \geq  \liminf_{K \to \infty} \P \Big( T^{(\eps,-),2}_{\eps} \leq \frac{\log K}{\widetilde \lambda^{(\eps,-)}} \Big) - o_\eps(1)
\geq  (1-s_2^{(\eps,-)})(1-o_\eps(1)),
\end{aligned} \numberthis\label{festimatesvirus}
\]
where in the second line we used that $\P(E \cap A_\eps) \geq \P(E)-\P(A_\eps^c)$ holds for any event $E$.

%\color{red} Again, the homogamy paper is estimating $\limsup$s from below and $\liminf$s from above, which in my opinion is senseless. And I don't see why the (homogamy analogue of the) first inequality should be an equality; for us, an inequality is certainly sufficient at the moment. \color{black} \\
%\color{red} For the last inequality in \eqref{festimates} we need a proper reference from \cite{AN72}. The homogamy paper says ``using classical results on two-type branching processes \cite{AN72}" and derives an analogous inequality without further comments. I haven't yet been able to figure out where this precisely is in the book. \color{black} 
\color{black}
Similarly, using the coupling \eqref{originalcouplingvirus}, we derive that for all sufficiently small $\eps>0$ 
\begin{align*}
    \P \Big( T^{(\eps,+),2}_{\eps} \leq T^{(\eps,+),2}_{0}, T^{(\eps,+),2}_{\eps}  \geq \frac{\log K}{\widetilde \lambda} (1-f(\eps)),A_\eps \Big) \geq \P \Big( T^{2}_{\eps} \leq T^{2}_{0},T^{2}_{\eps} \geq \frac{\log K}{\widetilde \lambda} (1-f(\eps)),A_\eps \Big),
\end{align*}
and arguments analogous to the ones used in \eqref{festimatesvirus} imply that
\[ \liminf_{K \to \infty} \P \Big( T^{(\eps,+),2}_{\eps} \leq T^{(\eps,+),2}_{0}, T^{(\eps,+),2}_{\eps}  \geq \frac{\log K}{\widetilde \lambda} (1-f(\eps)),A_\eps \Big) \geq (1-s_2^{(\eps,+)})(1-o_\eps(1)). \]
These together imply \eqref{invasiontimevirus}, which concludes the proof of the proposition.
\end{proof}
%\color{red} To do: show rigorously that conditional on extinction of the branching processes, the time is sub-logarithmic. Should be possible to show that it's $O(1)$. \color{black}

\subsection{Proof of Theorems~\ref{thm-viruscoexprob}, \ref{thm-successvirus}, and \ref{thm-failurevirus}}\label{sec-finalproofvirus}

Putting together Propositions~\ref{prop-firstphasevirus} and~\ref{prop-Lyapunov} and Corollary~\ref{cor-persistence}, we now prove our main results, employing some arguments from \cite[Section 3.4]{C+19}, somewhat similarly to the case of stable coexistence in \cite[Section 6.4]{BT20}. The differences from these proofs stem from the fact that in the model of the present paper there is no case where the formerly resident type (here type 1a) goes extinct, and that we cannot verify a convergence of the dynamical system to the convergence equilibrium. Recall from Proposition~\ref{prop-thereisbifurcation} that in some cases, such a convergence cannot hold for large $m$, due to a Hopf bifurcation. 

Our proof strongly relies on the coupling \eqref{branchingcouplingvirus}--\eqref{originalcouplingvirus}. To be more precise, we define a Bernoulli random variable $B$ as the indicator of nonextinction
\[ B:= \mathds 1_{\{ \forall t>0, \widehat N_{1i}(t)+\widehat N_{2}(t)>0 \}} \]
of the two-type approximating branching process $((\widehat N_{1i}(t),\widehat N_{2}(t)))_{t \geq 0}$ defined Section~\ref{sec-phase1heuristics}, which is initially coupled between the same two branching processes $N^{\eps,-}$ and $N^{\eps,+}$ as $((K N_{1i,t}^K, KN_{2,t}^K))_{t \geq 0}$, according to \eqref{branchingcouplingvirus}. 
Let $f$ be a function such that Proposition~\ref{prop-firstphasevirus} holds for $f/3$ (and hence also for $f$). Throughout the rest of the proof, we will assume that $\eps>0$ is so small that $f(\eps) <1$, further, we fix $b\geq 2$ such that Proposition~\ref{prop-firstphasevirus} holds for $b$.

Our goal is to show that
\[ \liminf_{K \to \infty} \mathcal E(K,\eps) \geq s_2-o_\eps(1) \numberthis\label{extinctionlowervirus}\]
holds for 
\[ \mathcal E(K,\eps):= \P \Big( \frac{T_0^2}{\log K} \leq f(\eps), T_0^2 < T_{S_\beta}, B=0 \Big),\]
where we recall the stopping times $T_0^2$ and $T_{S_\beta}$ from Section~\ref{sec-resultsvirus}.
Further, we want to show that in case $s_2<1$, 
\[ \liminf_{K \to \infty} \mathcal I(K,\eps) \geq 1-s_2-o_\eps(1), \numberthis\label{survivallowervirus}\]
where we define
\[ \mathcal I(K,\eps):=\P \Big( \Big| \frac{T_{S_\beta}  \wedge T_0^2}{\log K}- \frac{1}{\widetilde \lambda}  \Big| \leq f(\eps), T_{S_\beta} < T_0^2, B=1 \Big). \]
Throughout the proof, $\beta>0$ is to be understood as sufficiently small; later we will explain what conditions precisely it has to satisfy.

The assertions~\eqref{extinctionlowervirus} and~\eqref{survivallowervirus} together will imply Theorem~\ref{thm-viruscoexprob}, Theorem~\ref{thm-successvirus}, and Equation \eqref{extinctionvirus} in Theorem~\ref{thm-failurevirus}. The other assertion of Theorem~\ref{thm-failurevirus}, Equation \eqref{lastoftheoremvirus}, follows already from \eqref{secondofpropvirus}.

Let us start with the case of extinction of the epidemic in the first phase of the invasion and verify \eqref{extinctionlowervirus}. Clearly, we have 
\[ \mathcal E(K,\eps)\geq \P \Big( \frac{T_0^2}{\log K} \leq f(\eps), T_0^2 < T_{S_\beta}^2, B=0, T_0^2 < T_{\eps}^2 \wedge Q_{b\eps} \Big), \]
where we recall the stopping times $T_0^2$ and $T_\eps^2$ from Section~\ref{sec-phase1virus}.
Now, considering our initial conditions, one can choose $\beta>0$ sufficiently small such that for all sufficiently small $\eps>0$ we have
\[ T_{\eps}^2 \wedge Q_{b\eps} < T_{S_\beta}, \]
almost surely. We assume further on during the whole section that $\beta$ satisfies this condition. Then,
\[ \mathcal E(K,\eps)\geq \P \Big( \frac{T_0^2}{\log K} \leq f(\eps), B=0, T_0^2 < T_{\eps}^2 \wedge Q_{b\eps} \Big). \numberthis\label{andisandvirus} \]
Moreover, similarly to the proof of Proposition~\ref{prop-firstphasevirus}, we obtain
\[ \limsup_{K \to \infty} \P \big( \{ B=0 \} \Delta \{ T_0^2 < T_{\eps}^2 \wedge Q_{b\eps}  \}  \big)=o_\eps(1), \numberthis\label{undefinedsymmdiffvirus} \]
where $\Delta$ denotes symmetric difference, and
\[ \limsup_{K \to \infty} \P \big( \{ B=0 \} \Delta \{ T_0^{(\eps,+),2} < \infty \}  \big)=o_\eps(1). \]
%\color{red} Here I wrote $T_0^{(\eps,+),1} < \infty$. The homogamy paper says  $T_0^{(\eps,+),1} < T_\infty^{(\eps,+),1}$ several times, which I don't understand because the $(\eps,+)$-process a.s.~does not reach $\infty$ in finite time. \color{black}
Together with \eqref{andisandvirus} and the coupling~\eqref{branchingcouplingvirus}--\eqref{originalcouplingvirus}, it follows that
%\[
%\begin{aligned}
%\liminf_{K \to \infty} \mathcal E(K,\eps) \geq & \P \Big( \frac{\widetilde
%\lambda U_0^2}{\log K} \leq  f(\eps), U_0^{(\eps,2),2} < \infty \Big) + o_\eps(1) \\ \geq &  \P \Big( \frac{\widetilde\lambda
%U_0^{(\eps,2),2}}{\log K} \leq f(\eps), U_0^{(\eps,2),2} < \infty \Big) + o_\eps(1)  \geq \P \Big( U_0^{(\eps,2),2} < \infty \Big) + o_\eps(1), 
%\end{aligned}
%\]
\begin{align*}
\liminf_{K \to \infty} \mathcal E(K,\eps)  &\geq \liminf_{K\to\infty} \P \Big( \frac{%\widetilde 
 T_0^2}{\log K} \leq f(\eps), B=0, T_0^2 \leq T_\eps^2 \wedge Q_{b\eps} \Big) \\
 &\geq \liminf_{K\to\infty} \P \Big( \frac{%\widetilde 
 T_0^{(\eps,+),2}}{\log K} \leq f(\eps), B=0, T_0^2 \leq T_\eps^2 \wedge Q_{b\eps} \Big)
 \numberthis\label{secondline} \\ &\geq  \liminf_{K\to\infty} \P \Big( \frac{
T_0^{(\eps,+),2}}{\log K} \leq f(\eps), T_0^{(\eps,+),2} < \infty \Big) + o_\eps(1),
\end{align*}
Thus, employing \eqref{qineqvirus}, we obtain \eqref{extinctionlowervirus}, which implies \eqref{extinctionvirus}. %\color{red} In the homogamy paper there are some $\widetilde\lambda$'s in front of $T_0^{(\eps,+),1}$, but they seem a bit dubious to me\color{black} 

Let us continue with the case of persistence of the epidemic and verify \eqref{survivallowervirus}. %The convergence in probability in \eqref{invasion} is equivalent to the assertion that
%\[ \liminf_{K \to \infty} \mathcal I(K,\eps) \geq 1-q-o(\eps) \]
%holds for
%\[ \mathcal I(K,\eps):=\P \Big( \Big| \frac{T_{S_\beta} \wedge T_0^2}{\log K}-\Big( \frac{1}{\widetilde \lambda} + \frac{1}{\mu-\lambda_1+\alpha \bar x_a} \Big)\Big| \leq f(\eps), T_{S_\beta} < T_0^2, B=1 \Big). \]
Let us fix a constant $b \geq 2$ satisfying the condition of Lemma~\ref{lemma-residentsstayvirus}. Arguing analogously to \eqref{undefinedsymmdiffvirus}, we get
\[ \limsup_{K \to \infty} \P \big( \{ B=1 \} \Delta \{ T_{\eps}^2  < T_0^2 \wedge Q_{b\eps}  \}  \big)=o_\eps(1). \] 
Thus,
\begin{equation}\label{beforesetsviruscoex}
\begin{aligned}
\liminf_{K \to \infty} \mathcal I(K,\eps) & = \liminf_{K \to \infty} \P \Big( \Big| \frac{T_{S_\beta}}{\log K} - \frac{1}{\widetilde \lambda}  \Big| \leq f(\eps), T_{S_\beta}<T_0^2,  
T^2_{\eps} < T^2_0 \wedge Q_{b\eps} \Big) + o_\eps(1).
\end{aligned}
\end{equation}
\begin{comment}{For $\eps>0,\beta>0$, we introduce the set \color{red} quantifier for $\delta$? This is also a problem in the homogamy paper (and in our papers) \color{black}
\[ \begin{aligned}
\mathfrak B^1_\eps &:= [\pi_{1d}-\delta,\pi_{1d}+\delta] \times [\pi_{1i}-\delta,\pi_{1i}+\delta] \times [\pi_{2}-\delta,\pi_{2}+\delta] \times [\eps/\widehat C,\eps] \times [\barn1a-b\eps,\barn1a+b\eps] \\
\end{aligned}
\]
and the stopping time
\[
\begin{aligned}
T'_\eps:=& \inf \Big\{ t \geq 0 \colon \Big( \frac{N_{1d,t}^K}{N_{1d,t}^K+N_{1i,t}^K+N_{2,t}^K}, \frac{N_{1i,t}^K}{N_{1d,t}^K+N_{1i,t}^K+N_{2,t}^K}, \frac{N_{2,t}^K}{N_{1d,t}^K+N_{1i,t}^K+N_{2,t}^K}, \\ & \qquad N_{1d,t}^K+N_{1i,t}^K+N_{2,t}^K, N_{1a,t}^K \Big)\in \mathfrak B^1_\eps \Big\}.
\end{aligned}
\]
Informally speaking, we want to show that with high probability the process has to pass through $\Bcal^1_\eps$ in order to reach $S_\beta$. Then, thanks to the Markov property, we can estimate $T_{S_\beta}$ by estimating $T'_\eps$ and $T_{S_{\beta}}-T'_\eps$. }\end{comment}
Now, \eqref{beforesetsviruscoex} implies that
\begin{align*}
    \liminf_{K \to \infty}\mathcal I & (K,\eps) \geq \liminf_{K\to\infty}   \P \Big( \Big| \frac{T_{S_\beta}}{\log K} -\frac{1}{\widetilde \lambda}\Big| \leq f(\eps), T^2_{\eps} < T^2_0 \wedge Q_{b\eps}, T_{S_\beta}<T_0^2 \Big) + o_\eps(1) \\
    \geq &  \liminf_{K\to\infty} \P \Big( \Big| \frac{T_\eps^2}{\log K} -\frac{1}{\widetilde \lambda} \Big| \leq \frac{f(\eps)}{3}, \Big| \frac{T_{S_\beta} -T^2_\eps}{\log K} \Big| \leq \frac{f(\eps)}{3},  T^2_{\eps} < T^2_0 \wedge Q_{b\eps} , T_{S_\beta}<T_0^2 \Big) + o_\eps(1) ,
\end{align*}
Note that for $\beta>0$ sufficiently small and $\eps>0$ sufficiently small chosen accordingly, $Q_{b\eps} \leq T_{S_\beta}$ almost surely. We assume during the rest of the proof that $\beta$ satisfies this condition. Hence, defining
\[ M_\eps=\big\{ (n_{1a}^0,n_{1d}^0,n_{1i}^0,n_2^0) \in [0,\infty)^4 \colon |n_{1a}^0-\barn1a|\leq b\eps, |n_{1d}^0|\leq b\eps, n_{1i}^0+n_2^0=\eps \big\} \]
and for $K>0$
\[M_\eps(K)=\big\{ (n_{1a}^0,n_{1d}^0,n_{1i}^0,n_2^0) \in [0,\infty)^4 \colon |n_{1a}^0-\barn1a|\leq b\eps, |n_{1d}^0|\leq b\eps, n_{1i}^0+n_2^0=\smfrac{\lfloor \eps K \rfloor}{K} \big\}  ,\]
the strong Markov property applied at time $T_\eps^2$ implies 
\[ \begin{aligned}
  &  \liminf_{K \to \infty} \mathcal I(K,\eps)\geq \liminf_{K \to \infty} \Big[ \P \Big( \Big| \frac{T^2_\eps}{\log K} -\frac{1}{\widetilde \lambda} \Big| \leq \frac{f(\eps)}{3}, T^2_{\eps} < T^2_0 \wedge Q_{b\eps} \Big) \\
    &  \times \inf_{\begin{smallmatrix}(n_{1a}^0,n_{1d}^0,n_{1i}^0,n_{2}^0) \in M_\eps(K) \end{smallmatrix}} \P \Big(  \Big| \frac{T_{S_\beta} -T^2_\eps}{\log K} \Big| \leq \frac{f(\eps)}{3},  T_{S_\beta} < T_0^2  \Big| \mathbf N^K_0=(n_{1a}^0,n_{1d}^0,n_{1i}^0,n_{2}^0) \Big) \Big]\\
   % & \qquad \times \inf_{\mathbf n \in \mathfrak B^2_\beta} \P \Big(\Big| \frac{T_{S_{\beta}^1}-T_{S_\beta} }{\log K}- \frac{1}{\widehat \lambda}\Big| \leq \frac{f(\eps)}{3}, T_{S_\beta} < T_0^1  \Big| \mathbf N_0^K = \mathbf n\Big) \Big]+o_{\eps}(1). 
   \end{aligned} \numberthis\label{productformcoexvirus} \]
It remains to show that the right-hand side of \eqref{productformcoexvirus} is close to $1-s_2$ as $K \to \infty$ if $\eps$ is small. The fact the limes inferior of the first factor on the right-hand side of \eqref{productformcoexvirus} is at least $1-s_2+o_\eps(1)$ follows analogously to \eqref{beforesetsviruscoex} (since Proposition~\ref{prop-firstphasevirus} holds not only for $f$ but also for $f/3$).

Now, let us treat the second, nearly deterministic phase of the epidemic. 
%and the third phase of invasion in the case $j_0=1$ of fixation of trait 1. 
%Let us handle the second term on the right-hand side of \eqref{productformfixation1}. 
For $\mathbf m=(m_{1a},m_{1d},m_{1i},m_{2}) \in [0,\infty)^4$, let $\mathbf n^{(\mathbf m)}=(n_{1a}^\mathbf m(t),n_{1d}^\mathbf m(t),n_{1i}^\mathbf m(t),n_2^\mathbf m(t))$ denote the unique solution of the dynamical system \eqref{4dimvirus} with initial condition $\mathbf m$. Note that for any $\eps>0$, for any initial condition $\mathbf n^0$ contained in $M_\eps$ or in $M_\eps(K)$ in some $K>0$, for any $t>0$, $\mathbf n^{\mathbf n^0}(t)$ is a suitable initial condition for Proposition~\ref{prop-Lyapunov} and Corollary~\ref{cor-persistence}. Indeed, starting from a coordinatewise nonnegative initial condition with nonzero active coordinate and nonzero sum of infected and virus coordinate, for all positive times, all coordinates of the solution of \eqref{4dimvirus} will be positive. Then, thanks to Corollary~\ref{cor-persistence} and the continuity of flows of this dynamical system with respect to the initial condition, we deduce that if $\beta,\eps>0$ are small enough and $K_0>0$ is large enough, then there exists $t_{\beta,\eps}>0$ such that for all $\mathbf n^0 \in M_\eps  \cup (\bigcup_{K\geq K_0} M_\eps(K))$ there exists $t \leq t_{\beta,\eps}$ satisfying
\[ n_{1d}^{\mathbf n^0}(t),n_{1i}^{\mathbf n^0}(t)>\beta, n_{1a}^{\mathbf n^0}(t) \in (\beta,\barn1a-\beta),  \text{ and } n_2^{\mathbf n^0} (t) \in \big( \beta, \frac{mv\barn1a}{\mu_2}-\beta \big) \]
in other words, $\mathbf n^{\mathbf n^0}(t) \in S_\beta^o$, where for $A \subseteq \R^4$, $A^o$ denotes the interior of the set $A$. We assume for the rest of the proof that $\beta,\eps,K$ satisfy this assumption.

Now, using \cite[Theorem 2.1, p.~456]{EK} and the strong Markov property, we conclude that if $\eps,\beta$ are sufficiently small, then the following holds:
\[
\begin{aligned}
\lim_{K\to\infty} \P\Big( T_{S_\beta}-T^2_\eps \leq t_{\beta,\eps} \Big) = \lim_{K \to \infty} & \P\Big( T_{S_\beta} \leq t_{\beta,\eps} \Big| \mathbf N_{0}^K \in M_\eps(K) \Big)  =1-o_\eps(1).
\end{aligned} \]
Thus, the second term on the right-hand side of \eqref{productformcoexvirus} is close to 1 when $K$ tends to $\infty$, $\beta$ is small, and $\eps>0$ is small enough chosen according to $\beta$. Hence, together with the fact that the first factor on the right-hand side of \eqref{productformcoexvirus} is asymptotically at least $1-s_2-o_\eps(1)$, we have obtained
\[ \liminf_{K \to \infty} \mathcal I(K,\eps) \geq 1-s_2-o_\eps(1), \]
which implies \eqref{invasionvirus} and \eqref{successvirus}.

%\appendix

%\section{Bib}
%\label{sec:bib}

\appendix
\section{Proof of Propositions~\ref{prop-thereisnobifurcation} and \ref{prop-firststable}}\label{sec-appendixproofs}
In the following, we carry out the proof of Proposition~\ref{prop-thereisnobifurcation}.
\begin{proof}[Proof of Proposition~\ref{prop-thereisnobifurcation}.]
Let us first start with the three-dimensional case, i.e.\ let us put $q=0$, consider the dynamical system \eqref{3dimvirus} and show that for $r>v$, $(n_{1a}^*,n_{1i}^*,n_2^*)$ is asymptotically stable for all sufficiently small $m>m^*$. From this, one can derive the assertion of the proposition regarding the four-dimensional system \eqref{4dimvirus} for small $q$ and large $r$ using continuity, analogously to the proof of Proposition~\ref{prop-thereisbifurcation} (where we considered the case of small $q$ and small $r$). 

Let $q=0$ and $r>v$. Now for fixed $m$, we consider the Jacobi matrix $A_m(n_{1a}^*,n_{1i}^*,n_2^*)$ corresponding to \eqref{3dimvirus} at the coexistence equilibrium, which is defined analogously to $A(n_{1a}^*,n_{1d}^*,n_{1i}^*,n_2^*)$ but with $q=0$ (where the coexistence equilibrium corresponds to the same $m$ as the Jacobi matrix), ignoring the dormant coordinate:
\[ A_m(n_{1a}^*,n_{1i}^*,n_2^*)  = \begin{pmatrix} \lambda_1-\mu_1-2 C n_{1a}^*-Cn_{1i}^*-D n_2^* & r-Cn_{1a}^* &- D n_{1a}^* \\ D n_2^* & -(r+v) &  D n_{1a}^* \\ -D n_2^* & mv & - D n_{1a}^*-\mu_2 \end{pmatrix}. \]
Note that several entries of this matrix depend on $m$, but we ignore this in the notation for simplicity.
We want to analyse the stability of $(n_{1a}^*,n_{1i}^*,n_2^*)$ for large $m$ using the Routh--Hurwitz criterion (see e.g.~\cite[Section 3]{BK98}). We write the characteristic equation of $A_m(n_{1a}^*,n_{1i}^*,n_2^*)$ as 
\[ \lambda^3 + a_1(m) \lambda^2 + a_2(m) \lambda + a_3(m) =0. \]
Then, we have that $a_1(m)=-\mathrm{Tr}~A_m(n_{1a}^*,n_{1i}^*,n_2^*)$, $a_2(m)$ is the factor of $-\lambda$ in the characteristic polynomial when writing it as $\det(A_m(n_{1a}^*,n_{1i}^*,n_2^*)-\lambda I)$, and $a_3(m)=-\det A_m(n_{1a}^*,n_{1i}^*,n_2^*)$, where $I$ is the $3\times 3$ identity matrix. Then all eigenvalues of $A_m(n_{1a}^*,n_{1i}^*,n_2^*)$ have a strictly negative real part if and only if $a_1(m)>0$, $a_1(m)a_2(m)>a_3(m)$, and $a_3(m)>0$. Further, if $a_1(m)>0$, $a_3(m)>0$, and $a_1(m)a_2(m)<a_3(m)$, then the matrix is strictly unstable: since its determinant and trace are negative, its eigenvalue with the largest absolute value is negative, but the other two eigenvalues have strictly positive real parts (of course, these eigenvalues are either both real or complex conjugate).

Note that according to \eqref{n1adefvirus} (for $q=0$), we have $\lim_{m\to\infty} n_{1a}^*=0$. For $m>m^*$, the right-hand side of the first equation of~\eqref{3dimvirus} equal to zero, and dividing it with $n_{1a}^* >0$ we obtain
\[ \lambda_1-\mu_1-C(n_{1a}^*+n_{1i}^*)-D n_2^* = \frac{r}{r+v} D n_2^*. \]
Hence, it follows that
\[ \lim_{m\to\infty} \lambda_1-\mu_1-Cn_{1i}^* - \frac{v}{r+v} D n_2^*=0, \]
and thus in particular
\[ \limsup_{m\to\infty} n_2^* \leq \frac{\lambda_1-\mu_1}{D} \frac{r+v}{v}. \]
In particular, $n_2^*$ is bounded as a function of $m$. Hence, from~\eqref{infectedcoord} we conclude that $\lim_{m\to\infty} n_{1i}^*=0$ and thus
\[ \lim_{m\to\infty} n_2^* =  \frac{\lambda_1-\mu_1}{D} \frac{r+v}{v}. \]

Hence, we obtain
\[ \lim_{m\to\infty} a_1(m)= \lim_{m\to\infty} -(\lambda_1-\mu_1-D n_{2}^*)+r+v+\mu_2 = (\lambda_1-\mu_1) \frac{r}{v}+(r+v)+\mu_2>0, \]
further,
\[ \lim_{m\to\infty} a_2(m)= \lim_{m\to\infty} -(\lambda_1-\mu_1-D n_{2}^*) (r+v) - D n_2^* r + (\lambda_1-\mu_1) \frac{r}{v} \mu_2 + \mu_2(r+v)- D n_{1a}^* m v  = (\lambda_1-\mu_1) \frac{r}{v}\mu_2, \]
and
\[ \begin{aligned}
\lim_{m\to\infty} a_3(m) & =  -\lim_{m\to\infty}  (\lambda_1-\mu_1-Dn_2^*)(r+v)\mu_2-Dn_2^*n_{1a}^*mv-(\lambda_1-\mu_1-Dn_2^*)mvDn_{1a}^*+Dn_2^*r\mu_2 \\&=(\lambda_1-\mu_1)(r+v)\mu_2>0 
\end{aligned}\]
since $\lambda_1>\mu_1$. 
%In the degenerate case $r=0$ of no recovery (which was studied in \cite{BK98} in a slightly different setting), we see that $\lim_{m\to\infty} a_1(m)=v+\mu_2$, $\lim_{m\to\infty} a_2(m)=0$, $\lim_{m\to\infty} a_3(m)=(\lambda_1-\mu_1)v\mu_2$, and hence $\lim_{m\to\infty} a_1(m)a_2(m)-a_3(m)<0$. Therefore for $r=0$, for all sufficiently large $m>m^*$,$A_m(n_{1a}^*,n_{1d}^*,n_{1i}^*,n_2^*)$ has two eigenvalues with positive real parts, and thus $(n_{1a}^*,n_{1d}^*,n_{1i}^*,n_2^*)$ is unstable. By continuity, the same assertion also holds for $r>0$ sufficiently small, as asserted.
Therefore, for $r>v$ we obtain that
\[ 
\begin{aligned}
\liminf_{m\to\infty} a_1(m)a_2(m)-a_3(m) &> ((\lambda_1-\mu_1)+(r+v)+\mu_2)((\lambda_1-\mu_1)\mu_2)-(\lambda_1-\mu_1)(r+v)\mu_2 \\ & > (r+v)(\lambda_1-\mu_1)\mu_2-(\lambda_1-\mu_1)(r+v)\mu_2 =0. 
\end{aligned}\]
Thus, for $r>v$, for all $m>m^*$ sufficiently large, $(n_{1a}^*,n_{1i}^*,n_{2}^*)$ is asymptotically stable, as claimed.
\end{proof}
Next, we prove Proposition~\ref{prop-firststable}.
\begin{proof}[Proof of Proposition~\ref{prop-firststable}]
In this proof, we employ arguments similar to those in~\cite[Section 3]{BK98}. In order to verify the proposition, we have to show that for $m>m^*$ sufficiently close to $m$, all eigenvalues of $A(n_{1a}^*,n_{1d}^*,n_{1i}^*,n_2^*)$ have a strictly negative real part. To this aim, let us first study the extreme case when $m=m^*$, in other words, $(\barn1a,0,0,0)=(n_{1a}^*,n_{1d}^*,n_{1i}^*,n_2^*)$. Then, the Jacobi matrix $A(n_{1a}^*,n_{1d}^*,n_{1i}^*,n_2^*)=A(\barn1a,0,0,0)$ is given according to \eqref{Abarn1a}. Again, $-(\lambda_1-\mu)$ is a strictly negative eigenvalue of this matrix thanks to our assumptions. Using Laplace's expansion theorem, it follows that the remaining three eigenvalues of the matrix are $-\kappa\mu_1-\sigma<0$ and the two eigenvalues of the last $2\times 2$ block of the matrix. This block has a zero eigenvalue because $m=m^*$ is equivalent to the assertion that \eqref{viruscoexcond} holds with an equality instead of `$>$', in other words, the determinant of the $2\times 2$ block is zero. However, even in this case, the trace of the $2\times 2$ block is negative, and hence 0 is only a single eigenvalue of the block and hence also of the Jacobi matrix. The other eigenvalue equals the trace $-(r+v)-(1-q)D\barn1a-\mu_2$ of the block, and hence all eigenvalues are real.

Now, in the limit $m \downarrow m^*$ the trace of the Jacobi matrix $A(n_{1a}^*,n_{1d}^*,n_{1i}^*,n_2^*)$ remains negative and bounded away from zero, and the determinant tends to zero from above (since $n_2^*$ tends to 0 and $n_{1a}^*$ to $\barn1a$; cf.~\eqref{detcomp}) by continuity. In this limit, each eigenvalue of $A(n_{1a}^*,n_{1d}^*,n_{1i}^*,n_2^*)$ converges to the corresponding one of $A(\barn1a,0,0,0)$ corresponding to $m=m^*$. In particular, precisely one eigenvalue tends to 0, and hence this eigenvalue must be real. Hence, one eigenvalue tends to zero and three remain bounded away from zero, while the imaginary part of any of the eigenvalues has to tend to zero. This implies that for $m>m^*$ sufficiently close to $m^*$, there can be at most one eigenvalue with positive real part, namely the real one that converges to zero. But if there was precisely one such eigenvalue, this would contradict the assertion of the lemma that for all $m>m^*$, there are either 2 or 4 eigenvalues with positive real parts. Hence, we conclude that for all $m>m^*$ sufficiently small, all eigenvalues of $A(n_{1a}^*,n_{1d}^*,n_{1i}^*,n_2^*)$ have negative real parts.
\end{proof}

\begin{comment}{\section{Code}\label{sec-code}
\begin{verbatim} (* Wolfram Mathematica® code. 
Save the content of this section in .nb format and open it in Mathematica.)
%
(* Kesten-Stigum left eigenvector for given parameters $q,r,m,v$. 
Its dormant coordinate is 0 for $q=0$. *)
%
KestenStigumEV[q_, r_, m_, v_]:=Eigenvectors[{{-3, 0, q*0.5}, 
{0, -r - v, (1 - q)*0.5},
{0, m*v, -(1 - q)*0.5}}][[3]]*(-1)
%
(* Numerical solution of our 4-dimensional system of ODEs, started from equilibrium
for type 1a and in a small number (here 0.1) times the Kesten-Stigum left eigenvector
for the dormant, infected, and virus coordinates. Variable parameters: $q,r,m,v$, 
and the time horizon $t_{\max}$ written as tmax here. 
$\lambda_1$ is fixed as 5, $\mu_1$ as 4, $C$ as 1, $D$ as 0.5, $\kappa$ as 1,
$\sigma$ as 2, and $\mu_2$ as 0.3. 
For the dormancy-free 3-dimensional system of ODEs put $q=0$ and do not plot
the dormant coordinate. *)
%
a[q_, r_, m_, v_, tmax_]:= 
 Plot[Evaluate[{n1a[t], n1d[t], n1i[t], n2[t]} /. 
    NDSolve[{n1a'[t] == 
      n1a[t]*(5 - 4 - (n1a[t] + n1d[t] + n1i[t]) - 0.5*n2[t]) + r*n1i[t] + 2*n1d[t], 
      n1d'[t] == 0.5*q*n1a[t]*n2[t] - 3*n1d[t],
      n1i'[t] == 0.5*(1 - q)*n1a[t]*n2[t] - (r + v)*n1i[t], 
      n2'[t] == -0.5*(1 - q)*n1a[t]*n2[t] - 0.3*n2[t] + m*v*n1i[t],
      n1a[0] == 1, n1d[0] == KestenStigumEV[q, r, m, v][[1]]*0.1, 
      n1i[0] == KestenStigumEV[q, r, m, v][[2]]*0.1, 
      n2[0] == KestenStigumEV[q, r, m, v][[3]]*0.1}, {n1a, n1d, n1i, n2}, 
      {t, 0, tmax}]], {t, 0, tmax}, 
      PlotRange -> All, PlotLegends -> {n1a[t], n1d[t], n1i[t], n2[t]}, 
	PlotStyle -> {ColorData[97, "ColorList"][[1]], 
	ColorData[97, "ColorList"][[2]], ColorData[97, "ColorList"][[3]]}] \end{verbatim}
%
}
\end{comment}

\section{Poissonian construction for the couplings involving branching processes}\label{sec-coupling}
We consider a family of  independent Poisson point processes with uniform intensity on $[0,\infty)^2$ as follows:
\begin{itemize}
\item $P_{1a \to 2\times 1a}(\d s,\d \theta)$ driving the birth of type 1a individuals,
\item $P_{1a \to \emptyset} (\d s,\d \theta)$ driving the death of type 1a individuals,
\item $P_{1a+2 \to 1d+2}(\d s, \d \theta )$ driving the unsuccessful virus attacks,
\item $P_{1a+2 \to 1i}(\d s, \d \theta )$ driving the successful virus attacks,
\item $P_{1d \to \emptyset}(\d s, \d \theta )$ driving the death of type 1d individuals, 
\item $P_{1d \to 1a}(\d s, \d \theta )$ driving the resuscitation of type 1d individuals, 
\item $P_{1i \to m\times 2}(\d s,\d \theta)$ driving the death of type 1i individuals by lysis,
\item $P_{1i\to 1a}(\d s,\d \theta)$ driving the recovery of type 1i individuals,
\item $P_{2\to\emptyset}(\d s,\d \theta)$ driving the death of type 2 individuals.
\end{itemize} 
Using these Poisson point processes, our process $((N_{1a,t},N_{1d,t},N_{1i,t},N_{2,t}))_{t\geq 0}$ is constructed as follows (writing $N_{1,t}=N_{1a,t}+N_{1d,t}+N_{1i,t}$ as before):
\begin{align*}
(N_{1a,t},N_{1d,t},N_{1i,t},N_{2,t}) & = (N_{1a,0},N_{1d,0},N_{1i,0},N_{2,0}) \\ &\quad  + (1,0,0,0)\int_0^t \int_0^{\infty}  \mathds 1_{\{ \theta \leq \lambda_1 N_{1a,s-} \}}(s,\theta) P_{1a \to 2\times 1a}(\d s,\d \theta) 
\\ & \quad + (-1,0,0,0) \int_0^t \int_0^{\infty} \mathds 1_{\{ \theta \leq N_{1a,s-}(\mu+\frac{C}{K} N_{1,s-} ) \}}(s,\theta) P_{1a \to \emptyset}(\d s,\d \theta) 
\\ & \quad + (-1,1,0,0) \int_0^t \int_0^{\infty}  \mathds 1_{\{ \theta \leq q\frac{D}{K}N_{1a,s-}( N_{1a,s-}N_{2,s-} ) \}}(s,\theta) P_{1a+2 \to 1d+2}(\d s,\d \theta) 
\\ & \quad + (-1,0,1,-1) \int_0^t \int_0^{\infty}  \mathds 1_{\{ \theta \leq(1-q)\frac{D}{K}N_{1a,s-}( N_{1a,s-}N_{2,s-} ) \}}(s,\theta) P_{1a+2 \to 1i}(\d s,\d \theta) 
\\ & \quad  + (0,-1,0,0) \int_0^t \int_0^{\infty} \mathds 1_{\{ \theta \leq \kappa\mu_1 N_{1d,s-} \}}(s,\theta) P_{1d\to\emptyset}(\d s,\d \theta) 
\\ & \quad  + (1,-1,0,0) \int_0^t \int_0^{\infty}  \mathds 1_{\{ \theta \leq  \sigma N_{1d,s-} \}}(s,\theta) P_{1d\to 1a}(\d s,\d \theta)
\\ & \quad  +(0,0,-1,m) \int_0^t \int_0^{\infty} \mathds 1_{\{ \theta \leq v N_{1i,s-} \}}(s,\theta) P_{1i\to m\times 2}(\d s,\d \theta)
\\ &   \quad  + (1,0,-1,0) \int_0^t \int_0^{\infty} \mathds 1_{\{ \theta \leq  r N_{1i,s-} \}}(s,\theta) P_{1i\to 1a}(\d s,\d \theta)
\\ &  \quad  + (0,0,0,-1) \int_0^t \int_0^{\infty} \mathds 1_{\{ \theta \leq  \mu_2 N_{2,s-} \}}(s,\theta) P_{2\to\emptyset}(\d s,\d \theta).
\end{align*}
%\color{red} Need to check $K$-dependence at the linear terms, probably it is now fine and it was wrong in our first paper... \color{black}

In order to make the coupling equations~\eqref{branchingcouplingvirus} and~\eqref{originalcouplingvirus} hold on the event $A_\eps$ defined in~\eqref{Aepsdefvirus} and on the time interval $[0,t_\eps)$ where $t_\eps=T_0^2 \wedge T_\eps^2 \wedge Q_{b\eps}$, we construct the branching processes $N^{\eps,-}$, $N^{\eps,+}$, $((\widehat N_{1d,t},\widehat N_{1i,t},\widehat N_{2,t}))_{t\geq 0}$ using the same Poisson point processes as follows (in accordance with the transition rates appearing in the definition of these branching processes): we define $N^{\eps,-}=((N_{1i,t}^{\eps,-},N_{2,t}^{\eps,-}))_{t\geq 0}$ as
\begin{align*}
(N_{1i,t}^{\eps,-},N_{2,t}^{\eps,-}) & = (N_{1i,0}^{\eps,-},N_{2,0}^{\eps,-}) + (1,-1) \int_0^t \int_0^{\infty}  \mathds 1_{\{ \theta \leq(1-q)D N_{1i,s-}^{\eps,-} (\barn1a-b\eps) \}}(s,\theta) P_{1a+2 \to 1i}(\d s,\d \theta)
\\ & \qquad + (0,-1) \int_0^t \int_0^{\infty}  \mathds 1_{\{ \theta \leq 2b\eps(1-q)D N_{1i,s-}^{\eps,-}  \}}(s,\theta) P_{1a+2 \to 1i}(\d s,\d \theta)  
%\\ & \qquad  + \int_0^t \int_0^{\infty} \delta_{(0,-1,0,0)} \mathds 1_{\{ \theta \leq \kappa\mu_1 N_{1d,s-} \}} P_{1d\to\emptyset}(\d s,\d \theta) 
%\\ & \qquad  + \int_0^t \int_0^{\infty} \delta_{(1,-1,0,0)} \mathds 1_{\{ \theta \leq  \sigma N_{1d,s-} \}} P_{1d\to 1a}(\d s,\d \theta)
\\ & \qquad  + (-1,m) \int_0^t \int_0^{\infty}  \mathds 1_{\{ \theta \leq v N_{1i,s-}^{\eps,-} \}}(s,\theta) P_{1i\to m\times 2}(\d s,\d \theta)
\\ &   \qquad  + (-1,0) \int_0^t \int_0^{\infty} \mathds 1_{\{ \theta \leq  r N_{1i,s-}^{\eps,-} \}}(s,\theta) P_{1i\to 1a}(\d s,\d \theta)
\\ &  \qquad  + (0,-1) \int_0^t \int_0^{\infty}  \mathds 1_{\{ \theta \leq  \mu_2 N_{2,s-}^{\eps,-} \}}(s,\theta) P_{2\to\emptyset}(\d s,\d \theta),
\end{align*}
for $N^{\eps,+}=((N_{1i,t}^{\eps,+},N_{2,t}^{\eps,+}))_{t\geq 0}$ we put
\begin{align*}
(N_{1i,t}^{\eps,+},N_{2,t}^{\eps,+}) & = (N_{1i,0}^{\eps,+},N_{2,0}^{\eps,+}) + (1,-1) \int_0^t \int_0^{\infty} \mathds 1_{\{ \theta \leq(1-q)D N_{1i,s-}^{\eps,+} (\barn1a-b\eps) \}}(s,\theta) P_{1a+2 \to 1i}(\d s,\d \theta) 
\\ & \qquad + (1,0) \int_0^t \int_0^{\infty}  \mathds 1_{\{ \theta \leq  2 b\eps (1-q)D N_{1i,s-}^{\eps,+} \}}(s,\theta) P_{1a+2 \to 1i}(\d s,\d \theta) 
%\\ & \qquad  + \int_0^t \int_0^{\infty} \delta_{(0,-1,0,0)} \mathds 1_{\{ \theta \leq \kappa\mu_1 N_{1d,s-} \}}(s,\theta) P_{1d\to\emptyset}(\d s,\d \theta) 
%\\ & \qquad  + \int_0^t \int_0^{\infty} \delta_{(1,-1,0,0)} \mathds 1_{\{ \theta \leq  \sigma N_{1d,s-} \}}(s,\theta) P_{1d\to 1a}(\d s,\d \theta)
\\ & \qquad  + (-1,m)\int_0^t \int_0^{\infty}  \mathds 1_{\{ \theta \leq v N_{1i,s-}^{\eps,+} \}}(s,\theta) P_{1i\to m\times 2}(\d s,\d \theta)
\\ &   \qquad  + (-1,0)\int_0^t \int_0^{\infty}  \mathds 1_{\{ \theta \leq  r N_{1i,s-}^{\eps,+} \}}(s,\theta) P_{1i\to 1a}(\d s,\d \theta)
\\ &  \qquad  + (0,-1) \int_0^t \int_0^{\infty}  \mathds 1_{\{ \theta \leq  \mu_2 N_{2,s-}^{\eps,+} \}}(s,\theta) P_{2\to\emptyset}(\d s,\d \theta),
\end{align*}
and finally we define $\widehat{ \mathbf N}(t)=(\widehat N_{1d}(t),\widehat N_{1i}(t),\widehat N_{2}(t))$ as
\begin{align*}
(\widehat N_{1d}(t),\widehat N_{1i}(t),\widehat N_{2}(t)) & =  (1,0,0)\int_0^t \int_0^{\infty}  \mathds 1_{\{ \theta \leq qD\bar n_{1a}\widehat N_{2}(s-)  \}}(s,\theta) P_{1a+2 \to 1d+2}(\d s,\d \theta) 
\\ & \qquad + (0,1,-1) \int_0^t \int_0^{\infty}  \mathds 1_{\{ \theta \leq(1-q)D\bar n_{1a} \widehat N_{2}(s-) \}}(s,\theta) P_{1a+2 \to 1i}(\d s,\d \theta) 
\\ & \qquad  + (-1,0,0) \int_0^t \int_0^{\infty} \mathds 1_{\{ \theta \leq \kappa\mu_1 \widehat N_{1d}(s-) \}}(s,\theta) P_{1d\to\emptyset}(\d s,\d \theta) 
\\ & \qquad  + (-1,0,0) \int_0^t \int_0^{\infty}  \mathds 1_{\{ \theta \leq  \sigma \widehat N_{1d}(s-) \}}(s,\theta) P_{1d\to 1a}(\d s,\d \theta)
\\ & \qquad  + (0,-1,m) \int_0^t \int_0^{\infty} \mathds 1_{\{ \theta \leq v \widehat N_{1i}(s-) \}}(s,\theta) P_{1i\to m\times 2}(\d s,\d \theta)
\\ &   \qquad  + (0,-1,0)\int_0^t \int_0^{\infty} \mathds 1_{\{ \theta \leq  r \widehat N_{1i}(s-) \}}(s,\theta) P_{1i\to 1a}(\d s,\d \theta)
\\ &  \qquad  + (0,0,-1) \int_0^t \int_0^{\infty} \mathds 1_{\{ \theta \leq  \mu_2 \widehat N_{2}(s-) \}}(s,\theta) P_{2\to\emptyset}(\d s,\d \theta).
\end{align*}
Considering that the initial conditions for the virus (resp.\ infected host) coordinates of all the four processes are equal and that on the event $A_\eps$ for any $t \in [0,t_\eps)$ we have
\[ \bar n_{1a}-b\eps \leq N_{1a,t}/K \leq \bar n_{1a}+b\eps, \]
we conclude that \eqref{originalcouplingvirus} holds for all $t\in [0,t_\eps)$, while~\eqref{branchingcouplingvirus} actually holds for all $t\geq 0$.\color{black}

\subsection*{Acknowledgements} The authors thank two anonymous reviewers for insightful comments and F.~Gillich for interesting remarks that inspired Section~\ref{sec-R0}.

\end{document}